\newtheorem{theorem}{Theorem}
\newtheorem{lemma}[theorem]{Lemma}
\newtheorem{corollary}[theorem]{Corollary}
\theoremstyle{definition}
\newtheorem{definition}{Definition}[section]
\newcommand{\bO}{\mathbb{O}}
\newcommand{\mX}{\mathcal{X}}
\newcommand{\ba}{{\bf a}}
\newcommand{\bb}{{\bf b}}
\newcommand{\mA}{\mathcal{A}}
\newcommand{\mB}{\mathcal{B}}
\newcommand{\mE}{\mathcal{E}}
\newcommand{\mF}{\mathcal{F}}
\newcommand{\mH}{\mathcal{H}}
\newcommand{\mY}{\mathcal{Y}}
\newcommand{\mK}{\mathcal{K}}
\newcommand{\mM}{\mathcal{M}}
\newcommand{\mO}{\mathcal{O}}
\newcommand{\mS}{\mathcal{S}}
\newcommand{\mT}{\mathcal{T}}
\newcommand{\lb}{\left(}
\newcommand{\ep}{\epsilon}
\newcommand{\rb}{\right)}
\newcommand{\p}{\partial}
\newcommand{\nn}{\nonumber}
\newcommand{\ee}{\end{equation}}
\title{Information loss, mixing and emergent type III$_1$ factors}
\author{Keiichiro Furuya, Nima Lashkari, Mudassir Moosa, Shoy Ouseph}
\affiliation{Department of Physics and Astronomy, Purdue University, West Lafayette, IN 47907, USA}
\emailAdd{kfuruya@purdue.edu}
\emailAdd{nima@purdue.edu}
\emailAdd{mudassir@purdue.edu}
\emailAdd{souseph@purdue.edu}
\abstract{
A manifestation of the black hole information loss problem is that the two-point function of probe operators in a large Anti-de Sitter black hole decays in time, whereas, on the boundary CFT, it is expected to be an almost periodic function of time. We point out that the decay of the two-point function (clustering in time) holds important clues to the nature of observable algebras, states, and dynamics in quantum gravity. 

We call operators that cluster in time ``mixing" and explore the necessary and sufficient conditions for mixing. The information loss problem is a special case of the statement that in type I algebras, there exists no mixing operators.
We prove that, in a thermofield double state (KMS state), if mixing operators form an algebra (close under multiplication), the resulting algebra must be a von Neumann type III$_1$ factor. In other words, the physically intuitive requirement that all nonconserved operators should exponentially mix is so strong that it fixes the observable algebra to be an exotic algebra called a type III$_1$  factor. More generally, for an arbitrary out-of-equilibrium state of a general quantum system (von Neumann algebra), we show that if the set of operators that mix under modular flow forms an algebra, it is a type III$_1$ von Neumann factor. 

In a theory of Generalized Free Fields (GFF), we show that if the two-point function clusters in time, all operators are mixing, and the algebra is a type III$_1$ factor. For example, in $\mathcal{N}=4$ SYM, above the Hawking-Page phase transition, clustering of the single trace operators implies that the algebra is a type III$_1$ factor, settling a recent conjecture of Leutheusser and Liu.
We explicitly construct the C$^*$-algebra and von Neumann subalgebras of GFF associated with time bands and, more generally, open sets of the bulk spacetime using the HKLL reconstruction map.}
\begin{document}
                              
\maketitle
\flushbottom

\section{Introduction} 

Consider a burning piece of coal described by a pure state $\ket{\Psi}$ of a microcanonical ensemble of average energy $E_0$. In an interacting system, perturbations to the system thermalize and we expect that under unitary time-evolution all connected correlators between early and late-time perturbations decay away exponentially fast in time (exponentially mixing):
\begin{eqnarray}\label{purestateevol}
    \braket{\Psi|a\: b(t)|\Psi}_{conn}\sim e^{-\alpha t E_0}
\end{eqnarray}
for some positive constant $\alpha$.
This is equivalent to saying that the correlator of one-point subtracted observables ${\bf a}:=a-\braket{\Psi|a|\Psi}$ should decay away. Similarly, in the canonical ensemble, in a thermalizing system, the connected thermal two-point functions are expected to decay exponentially fast at the late time:
\begin{eqnarray}\label{2ptthermall}
    \braket{{\bf a}\: {\bf b}(t)}_\beta\sim e^{-\alpha t/\beta}\ .
\end{eqnarray}
We call the decay of the connected correlator {\it clustering in time}.
For more complicated operators, the decay of connected correlators in time need not be exponential. If finite-temperature correlators decay away in time, we call the corresponding operators ``a" or ``b" {\it mixing}, and if the decay is exponentially fast we call them {\it exponentially mixing}.\footnote{In a spatially extended quantum system, the term thermalization time can refer to two distinct time scales: diffusion time (sometimes called pre-thermalization or local thermalization time), and global thermalization time (intimately tied to the scrambling time). In this work, we are only concerned with the time dependence of the observables in the strict thermodynamic limit. Hence, diffusion time is the only scale in the problem.} More generally, for arbitrary out-of-equilibrium state $\psi$, we define {\it $\psi$-mixing} operators and {\it exponentially $\psi$-mixing} operators as those that mix and mix exponentially fast, respectively, under the modular flow of the state $\psi$. 

The thermodynamic information loss problem is the statement that if the Hamiltonian has a discrete spectrum, due to Poincar\'e recurrences, there exist {\it no} pair of observables $a$ and $b$ whose correlator clusters in time. In particular, this implies that there cannot exist any $\psi$-mixing operators for any $\psi$. Stated in more mathematical terms, when the spectrum of the modular Hamiltonian is discrete, the correlator in (\ref{purestateevol}) for any pair of operators is an almost periodic function of time and has no long time limit. For instance, consider a 2d CFT and the two-point function of Virasoro primaries $\mO$ of dimension $\Delta$ at finite temperature in infinite volume:
\begin{eqnarray}\label{strongmixing}
    \braket{\mO(t)\mO(0)}_\beta=\lb\frac{\pi}{\beta\sinh(\pi t/\beta)}\rb^{2\Delta_\mO}\ .
\end{eqnarray}
Since the correlator decays exponentially fast, $\mO$ is an exponentially mixing operator.
Intuitively, one can think of type I algebras as those in which the spectrum of the modular Hamiltonian (logarithm of the density matrix) of any state is discrete.\footnote{To help the readers unfamiliar with the basics of operator algebras and the classification of von Neumann algebras we provide a quick review of the definitions and key intuitions in appendix \ref{app:op}.}
Therefore, (\ref{strongmixing}) already implies that the algebra of observables of a 2d CFT at finite temperature and infinite volume cannot be type I. This is as opposed to the same correlator on a spatial circle of circumference $L$ at zero temperature:
\begin{eqnarray}
  \braket{\mO(t)\mO(0)}_L=\lb\frac{\pi}{L\sin(\pi t/L)}\rb^{2\Delta_\mO}\ .
\end{eqnarray}
In this case, the correlator is periodic in time with periodicity $L$ and the algebra is type I.

The resolution to the information loss problem posed above is that in the thermodynamic limit (a large number of particles $N$, infinite volume, or vanishing $\hbar$), the entropy grows to infinity and the spectrum of the Hamiltonian in the high energies becomes continuous, allowing the correlators of certain distinguished {\it simple operators} to cluster in time. In the strict infinite-entropy limit, the Hilbert space is inseparable. There are many inequivalent superselection sectors, and the physical states of each sector correspond to finite energy excitations on top of the ``vacuum" of that sector. Each sector is a separable Hilbert space $\mH_\psi$ labeled by the vacuum $\psi$ that we represent in the Hilbert space as $\ket{1}_\psi$. For instance, choosing a (KMS) state of inverse temperature $\beta$ as the ``vacuum" the corresponding vector $\ket{1}_\beta$ is the thermofield double. It is an old (but less known) result in mathematical physics that the finite-temperature observable algebras generated by local operators in the strict thermodynamic limit of infinite volume are generically type III von Neumann algebras \cite{hugenholtz1967factor}. Intuitively, one can think of type III algebras as those that admit no density matrices. To state this result more precisely, we first notice that, for a generic Hamiltonian, we expect the energy levels  to be incommensurable.\footnote{In an infinite-dimensional system with no normalizable energy eigenvectors this condition is replaced by the assumption that (modular) time-evolution is a strongly continuous unitary flow.} Then, it follows from von Neumann's ergodic theorem that the time-averaged thermal correlators cluster 
\begin{eqnarray}\label{weakmixing}
    \lim_{T\to \infty}\frac{1}{T}\int_0^T dt \braket{\ba(t)\bb(0)}_{\beta}=0\ .
\end{eqnarray}
The work of \cite{hugenholtz1967factor} establishes that this occurs only if either the algebra is type III or the temperature is infinite.\footnote{For a summary of the proof see Theorem 1 of \cite{longo1979notes}.} In appendix \ref{app:typeiii}, we give an alternative argument based on the intuition that in the thermodynamic limit of infinite entropy (infinite volume or infinite $N$) the Hamiltonian is not observable, either because its fluctuations are infinite, or its action on the thermofield double $\ket{1}_\beta$ leaves the Hilbert space of the sector $\mH_\beta$. In Lemma \ref{KMStypeIII}, we show that this occurs only in type III algebras.\footnote{In light of the work in \cite{witten2022gravity}, this already implies that the GFF algebra in the KMS state above the Hawking-Page phase transition is type III.} Operators that satisfy (\ref{weakmixing}) are often called {\it ergodic} as opposed to mixing (or {\it strong mixing}) operators whose definition involves the $t\to\infty$ limit without any time averaging. In this work, we focus on strong mixing operators, which we simply refer to as mixing. 

In a thermal (KMS) state, the operators that commute with the Hamiltonian (conserved charges) play a distinguished dynamical role because they cannot decay in time. If they are observables (having finite fluctuations and their domain and range are contained within the Hilbert space), they form a subalgebra of observables that is the algebra of the zero modes (i.e. operators that do not evolve in time). The intuition is that in a truly interacting theory, any operator that is not conserved should be mixing (clusters in time).\footnote{This intuition underlies the emergence of hydrodynamics as the low-frequency effective theory of generic interacting theories.} In a sector defined with fixed charges, all operators should be mixing.
In this work, we argue that this is the smoking gun of a type III$_1$ observable algebra. Our key result can be summarized as the statement that if the set of mixing operators in the KMS state (more generally $\psi$-mixing operators) forms an algebra (close under multiplication), the resulting algebra is a von Neumann factor of type III$_1$ that contains no (modular) conserved charges. The set of (modular) conserved charges forms a subalgebra that, in mathematical terminology, is called the centralizer of the state $\psi$.\footnote{In the language of \cite{chandrasekaran2022large}, this is the algebra at infinity in time.} A type III$_1$ factor can be understood as an observable algebra where the modular flow of no state has any Poincar\'e recurrences. The intimate connection to mixing and the emergence of the hydrodynamic limit and the emergence of a local bulk in holography are precisely what make type III$_1$ von Neumann algebras physically unique and valuable to study. It is important to point out that type III$_1$ algebras can, and in general, have many states with non-trivial centralizers. 

As an example, once again consider a 2d CFT in infinite volume and finite temperature. We saw that all Virasoro primaries are mixing operators.
We will see in Lemma \ref{mixingop} that any countable sum of mixing operators is mixing. The operator product expansion (OPE) tells us that the multiplication of primary operators can be expanded as a formal sum over conformal primaries. Therefore, any two-point thermal function in this CFT can be expanded formally as a countable sum over exponentially decaying terms
\begin{eqnarray}
    \braket{a(t)b(0)}_\beta=\sum_\Delta c^{(\Delta)}_1 c_2^{(\Delta)} \lb\frac{\pi}{\beta\sinh(\pi t/\beta)}\rb^{2\Delta_\mO}\ .
\end{eqnarray}
The sum above is convergence if the time evolution from $a$ to $a(t)$ does not cross the lightcone of $b$. In the double thermofield, choosing $a(t)$ on the left boundary and $b$ on the right boundary, we can guarantee that $a(t)$ does not cross the lightcone of $b$. This suggests that in the KMS state, all operators are mixing, and then it follows from our Theorem \ref{clustering-thm} that the algebra of observables is type III$_1$ von Neumann factor.\footnote{Note that we are talking about the global algebra (the algebra of the entire spacetime) here. The von Neumann algebra of local causally complete regions in any QFT is type III$_1$ factor.}

Two-dimensional CFTs are special in that their thermal one-point functions at infinite volume vanish. In discussing mixing in a general theory, it is convenient to subtract the one-point functions from the local operators and define $\mathbb{O}=\mO-\braket{\mO}_\beta$. For instance, consider the four-point function of a free theory in higher than two dimensions. This correlator expressed in terms of the one-point removed operators makes the Wick contraction rule manifest (see Lemma \ref{isser}). The correlator
$\braket{\mathbb{O}(t_1)\mathbb{O}(t_2)\mathbb{O}(t_3)\mathbb{O}(t_4)}_\beta$ comprises three terms corresponding to different contractions $(12)(34)$, $(13)(24)$ and $(14)(23)$ which implies that the correlator $\braket{\mathbb{O}^2(t)\mathbb{O}^2(0)}_\beta$ clusters if the two-point function clusters. Therefore, it is easier to consider the operators $\mathbb{O}$ as the generator of the observable algebra.
This distinction is irrelevant in free field theory (generalized free fields) and in large $N$ theories because the thermal one-point functions vanish. 

From the point of view of quantum gravity as a unitary theory, an evaporating black hole is similar to a burning piece of coal. Consider a young black hole in an arbitrary state $\ket{\Psi}=\sum_\alpha c_\alpha \ket{E_\alpha}$ that belongs to a microcanonical ensemble of energy $E_0$. All correlators between early time probes $\mO$ acting during the collapse, and $\mO(t)$ after the black hole is formed decay exponentially in time:
\begin{eqnarray}
    \braket{\Psi|\mO(t)\: \mO(0)|\Psi}\sim e^{-\alpha t E_0} \ .
\end{eqnarray}
Maldacena pointed out that in AdS/CFT, a bulk gravitational calculation reveals that light conformal primaries are mixing which is in sharp contradiction with the expectation of a discrete spectrum of the Hamiltonian for the boundary CFT on a compact manifold. He argued that this is a manifestation of black hole information loss for an eternal (non-evaporating) black hole \cite{maldacena2003eternal}.


In Section \ref{sec2}, we review the analytic properties of the thermal two-point functions of the form \eqref{strongmixing}. We then review Maldacena's argument and observe that the von Neumann algebra of observables generated by the single-trace observables of $SU(N)$ $\mathcal{N}=4$ SYM in the limit of $N\to \infty$ in the thermofield double state above the Hawking-Page phase transition is not type I. This is implied by the existence of mixing operators. 
We end section \ref{sec2} by generalizing our discussion from the thermofield double (KMS state) and mixing under time-evolution (its modular flow) to a general out-of-equilibrium state and mixing under its corresponding modular flow.

In Section \ref{sec-info-loss}, we study mixing operators in a general quantum system. In particular, in Lemmas \ref{mixingop} and \ref{maximalmixingsystem}, we show that the set of all mixing observables forms an {\it operator system} (a $*$ closed subspace of observables that includes the identity operator) with the key property that it is preserved under time evolution. In Lemmas \ref{RB} and \ref{smooth}, we derive a necessary condition for a pair of observables to be mixing and the necessary and sufficient condition for them to be exponentially mixing. Then we focus on the strict thermodynamic or the large $N$ limit. In this limit, we expect the spectrum of Hamiltonian to become continuous at high energies, and in Lemma \ref{simpleop}, we explicitly construct mixing operators. We also draw parallels between these mixing operators and the {\it simple operators} of the eigenstate thermalization hypothesis (ETH). In this limit, we also expect that the operator system of mixing observables forms an algebra. In Lemma \ref{mixingvN}, we show that such an algebra can always be completed to a von Neumann algebra of mixing operators. Then, our key results in Theorem \ref{clustering-thm} and Corollary \ref{KSMtypeiii} show that the resulting von Neumann algebra must be a type III$_1$ factor in a state with a trivial centralizer. 

In section \ref{GFFsec}, we focus on the algebra Generalized Free Field (GFF). In a theory of GFF, the two-point function of the fundamental field fixes all the higher-point functions by Wick's theorem (see Lemma \ref{isser}). In Theorem \ref{GFFtypeiii1}, we show that if the fundamental field is a mixing operator, all operators in the von Neumann algebra of observables are mixing, and hence, in this case, the algebra of GFF is a type III$_1$ factor with a trivial centralizer. 

Our result in Theorem \ref{GFFtypeiii1} applies to the GFF algebra of single-trace observables of $SU(N)$ $\mathcal{N}=4$ SYM in the limit of $N\to \infty$ in the double thermofield state above the Hawking-Page phase. An insight into the nature of this GFF algebra comes from exact quantum error correction in holography. Complementary recovery implies that the map that projects the left (right) boundary algebra to the bulk algebra outside of the left (right) AdS black hole is a normal conditional expectation \cite{furuya2022real,faulkner2020holographic}. However, normal conditional expectations exist only between von Neumann algebras of the same type \cite{peterson2013notes,furuya2022real}. Since the bulk algebra outside of the horizon of either black hole is type III$_1$, the boundary GFF algebra is either type III$_1$ or exact quantum error correction does not hold in the $N\to \infty$ limit. Recently, Leutheusser and Liu conjectured that the GFF algebra above is type III$_1$ \cite{leutheusser2021causal,leutheusser2021emergent}. More specifically, they conjectured that in a GFF, the spectral density of the fundamental field is smooth and supported everywhere in real frequencies if and only if the resulting algebra is type III$_1$ (see Section 2.3 of \cite{leutheusser2021emergent}). We settle the Leutheusser-Liu conjecture and generalize to arbitrary states (in or out-of-equilibrium) of GFF in Theorem \ref{GFFtypeiii1} and Lemma \ref{measurableresult}. We find that if the spectral density is a Lebesgue-measurable function of real frequencies, then the algebra is type III$_1$ factor. The measurability condition is significantly weaker than the smoothness condition conjectured in \cite{leutheusser2021emergent}.\footnote{In fact, this condition is so weak that one has to use the uncountable axiom of choice to construct examples of non-measurable functions. A real non-Lebesgue-measurable function, by definition, sends some non-measurable set of $\mathbb{R}$ to a measurable set of $\mathbb{R}$. For instance, the characteristic function of a nonmeasurable set $X$ is a measurable function because it sends a nonmeasurable set $X$ to the measurable set $\{1\}$. See Vitali sets as elementary examples of nonmeasurable subsets of real numbers. From a physics standpoint, the spectral density is either a countable sum of Dirac delta functions (a tempered distribution) in type I algebras, or it is expected to be a measurable function of real frequencies.} Furthermore, we do not need an assumption on the support of spectral density in the frequency space.\footnote{See the discussion below Lemma \ref{measurableresult} for examples of type III$_1$ GFF algebras with spectral density supported on a finite frequency interval.} The key idea of our proof is that if all operators are $\psi$-mixing we have a type III$_1$ factor, however, the converse is incorrect. Type III$_1$ algebras can, and in general do have many states with nontrivial centralizers (nonmixing observables). That is why, as opposed to the conjecture by Leutheusser and Liu, we do not attempt to prove an if-and-only-if statement here. However, it is plausible that requiring the type III$_1$ algebra to be hyperfinite requires a stronger smoothness assumption on the spectral density similar to the Leutheusser-Liu conjecture. We postpone studying this problem for upcoming work.

Furthermore, Leutheusser and Liu proposed that there are von Neumann algebras associated with time intervals (time band algebras) in any theory of GFF \cite{leutheusser2021emergent}). We end the section \ref{GFFsec} by rigorously constructing the time-band algebras of GFF. We observe an ambiguity (choice) in the definition of these algebras that leads to many inequivalent C$^*$ and von Neumann algebras associated with time bands. For the choice suggested by holography, we construct the boundary C$^*$-algebras associated with arbitrary open sets in the bulk spacetime.

We conclude with a summary and discussion in Section \ref{sec:conclude} and include various appendices including Appendix \ref{app:notation} where we summarize the notation used in this work.

\section{Discrete energy gaps (type I algebras) forbid mixing}\label{sec2}

In this section, we first review the analytic properties of thermal correlators. The discussion is presented in a form that makes the generalization to modular correlators in an arbitrary out-of-equilibrium state straightforward. The starting point is the observation that if the energy gaps are discrete for any pair of operators, the thermal correlators are ($B^2$-Besicovitch) almost periodic functions and cannot decay in time. In AdS/CFT, the thermal correlator corresponding to light primary operators can be computed in the bulk using gravity. If the boundary CFT lives on a compact manifold, the energy spectrum is discrete, and therefore both the {\it energy differences} $E_m-E_n$ and the {\it energy gaps $E_{m+1}-E_m$} are discrete.\footnote{For a generic interacting infinite dimensional system, it is expected that the energies are incommensurable. Therefore, the energy differences are discrete but dense in $\mathbb{R}$. This is tied to the ergodicity and weak mixing in (\ref{weakmixing}).}
Hence, the thermal correlator is an almost periodic function of time which is in sharp contrast with the gravity answer that decays exponentially fast and clusters in time. In \cite{maldacena2003eternal}, Maldacena argued that the mismatch of the gravity answer with the boundary expectation is a manifestation of the black hole information loss problem for eternal black holes in Anti-de Sitter (AdS) space. Here, we generalize Maldacena's observation to the statement that in an arbitrary state $\psi$ (in or out of equilibrium) when the modular Hamiltonian $\hat{K}_\psi$ has a discrete spectrum, we have a type I von Neumann algebra. This forbids the existence of any modular mixing operator.

The resolution of this seeming paradox is tied to the large $N$ limit. In AdS/CFT the boundary theory has a large central charge $N^2$. At high energies $E\sim O(N^2)$ the energy gaps are expected to be $O(e^{-N^2})$\cite{festuccia2007arrow}. In the strict $N\to \infty$, even on a compact spatial manifold, the gaps close and the spectrum becomes continuous. In the order of limits, which $N\to \infty$ first and then $t\to \infty$, certain distinguished simple operators (gravity operators) cluster in time.

\subsection{Thermal correlators and clustering in time}

Given a Hamiltonian $H$ with energy eigenstate $\ket{E_{m}}$ and corresponding eigenvalues $E_m$, the thermofield double state (TFD) is defined on two copies of the Hilbert space, $\mathcal{H}_{L}\otimes\mathcal{H}_{R}$, as 
\begin{eqnarray}
    \ket{1}_\beta := Z^{-1/2} \sum_m e^{-\beta E_{m}/2}\ket{E_{m}}_L \ket{E_{m}}_R\ , \label{eq-tfd}
\end{eqnarray}
where $Z = \sum_{m}e^{-\beta E_{m}}$ is the thermal partition function and $\beta$ denotes inverse temperature. This state is invariant under the unitary evolution generated by the {\it modular Hamiltonian} $\hat{K} = H_{L} - H_{R}$, where in our notation $a_{L}=a\otimes 1\in \mA_L$ and $a_R=1\otimes a\in \mA_R$, and $\mA_L$ and $\mA_R$ are the left and right algebras, respectively. However, the evolution by $(H_L+H_R)/2$ is equivalent to the ``one-sided'' evolution generated by $H_{L}$ or by $H_{R}$:
\begin{eqnarray}
    e^{i (H_{L}+H_{R}) t/2}\ket{1}_\beta = e^{i H_L t}\ket{1}_\beta = Z^{-1/2} \sum_m e^{-E_{m}(\beta/2-it)}\ket{E_{m}}_L\ket{E_{m}}_R\ .
\end{eqnarray}

We are interested in the analytic properties of the following {\it left-right} (LR) time-evolved correlation functions 
\begin{align}\label{fabdef}
    f_{ab}(t) & := \mbox{}_{\beta}\bra{1} a_L^\dagger  b^{T}_R(t) \ket{1}_{\beta} =\mbox{}_{\beta}\bra{1} a_L^\dagger  e^{it\hat{K}} b^{T}_R \ket{1}_{\beta} \, , 
\end{align}
where $b^{T}$ is the transpose of $b$ in the energy eigenbasis, and we have defined the time-evolved operator
\begin{eqnarray}
    b(t):=e^{i t \hat{K}}b e^{-it \hat{K}}\ .
\end{eqnarray}
Expanded in energy eigenbasis we have
\begin{align}
    f_{ab}(t) =&\, \mbox{}_{\beta}\bra{1} e^{-itH_{L}} \, a^{\dagger} \otimes  b^{T} \, e^{itH_{L}}\ket{1}_{\beta} \, , \nonumber\\
    =&\, Z^{-1} \sum_{m,n}e^{ -\beta (E_m+E_n)/2}e^{i(E_m-E_n)t} a^*_{mn} b_{mn} \, ,\label{eq-fab-expression}
\end{align}
where $a_{mn} = \bra{E_{m}} a \ket{E_{n}}$, and similarly for $b$. The correlator $f_{ab}(t)$ is analytic inside the complex strip $\Im(t)\in [-\beta/2,\beta/2]$ with continuous boundary values\footnote{Note that time-evolution is a strongly continuous unitary flow.} corresponding to {\it left-left} (LL) correlators
\begin{eqnarray}\label{analext}
    &&f_{ab}(t-i\beta/2) = \mbox{}_{\beta}\bra{a} e^{it\hat{K}} \ket{b}_{\beta} \, , \nn\\
    &&f_{ab}(t+i\beta/2) = \mbox{}_{\beta}\bra{b^{\dagger}} e^{-it\hat{K}} \ket{a^{\dagger}}_{\beta} \ . 
\end{eqnarray}
We are using the notation $\ket{a}_{\beta} :=\, a_{L} \ket{1}_{\beta} $ and similarly for $\ket{b}_\beta$.
The two boundary values above are related to each other by the KMS condition \cite{kubo1957statistical,martin1959theory}.  

We say that operators $a$ and $b$ are mixing if the $f_{ab}(t)$ clusters in time. That is,
\begin{align}
    \lim_{t\to\infty} f_{ab}^{\text{conn}}(t) \, = \, 0 \, ,
\end{align}
where 
\begin{align}
    f_{ab}^{\text{conn}}(t) \, := \, f_{ab}(t) - {}_\beta\braket{1|a_{L}|1}_\beta {}_\beta\braket{1|b^{T}_{R}|1}_\beta 
\end{align}
is the connected part of the thermal two-point function. Therefore, in discussing mixing, it is convenient to introduce the following notation for the one-point subtracted operators 
\begin{eqnarray}
   \ba = a-{}_\beta\braket{1|a|1}_\beta\ ,
\end{eqnarray}
so that 
\begin{eqnarray}
&&f_{\ba\bb}(t)=f^{\text{conn}}_{ab}(t)={}_\beta\braket{1|a_L b_R^T(t)|1}_\beta - {}_\beta\braket{1|a_L|1}_\beta {}_\beta\braket{1|b_R^T|1}_\beta\ .
\end{eqnarray}
The correlator in the Fourier space is a tempered distribution 
    \begin{eqnarray}\label{fomega}
        \hat{f}_{ab}(\omega) \, = &&\, Z^{-1} \sum_{m,n} e^{-\beta(E_m+E_n)/2} \, a^*_{mn}b_{mn} \, \delta\left(\omega - E_{m} + E_{n}\right) \ .
    \end{eqnarray}
    We often consider the observables $a=b$ in (\ref{fomega}) so that the Fourier transform of $f_{aa}(t)$ is positive.\footnote{Functions with positive Fourier transform are called positive definite. For a discussion of positive definite 
functions and more special cases of $f_{ab}(t)$ see appendix \ref{app:special2pt}.} The relations in (\ref{analext}) show that the Fourier transform of the LL correlators are related to those of LR by multiplication by $e^{\pm \beta \omega/2}$. In frequency space, the KMS condition relates the time-ordered correlator to the commutator:
\begin{eqnarray}\label{KMSfreq}
    \braket{a(\omega)b}_\beta=\frac{1}{e^{\beta \omega}-1}\braket{[b,a(\omega)]}_\beta\ .
\end{eqnarray}

Alternatively, we can think of $\hat{f}_{ab}(\omega)$ 
 in terms of the spectral projections and measure of the modular Hamiltonian $\hat{K}$:
\begin{eqnarray}
    &&\hat{f}_{ab}(\omega)={}_\beta\braket{1|a_L dP_\omega b_R^T|1}_\beta\ . \label{fouriercont}
\end{eqnarray}
Since $dP_\omega$ is a positive operator we can write it as $dP_\omega=Q^\dagger Q$, then the Cauchy-Schwarz inequality implies
\begin{eqnarray}\label{Cauchyineq}
    |\hat{f}_{ab}(\omega)|=|{}_\beta\braket{1|(a_L Q^\dagger)(Q b_R^T)|1}_\beta|&&\leq |{}_\beta\braket{1|(a_L Q^\dagger)(Q a_L^\dagger)|1}_\beta|^{1/2}|{}_1\braket{1|((b_R^*) Q^\dagger)(Q b_R^T)|1}_\beta|^{1/2},\nn\\
    &&=|\hat{f}_{aa}(\omega)|^{1/2}|\hat{f}_{bb}(\omega)|^{1/2}
\end{eqnarray}
where in the last step we have used 
\begin{eqnarray}
    &&\braket{1|a_L dP_\omega a_L^\dagger|1}=e^{\beta \omega/2}\hat{f}_{aa}(\omega)\nn\\
  &&  \braket{1|b_R^* dP_\omega (b_R^T)|1}=e^{-\beta \omega/2}\hat{f}_{bb}(\omega)\ .
\end{eqnarray}


\begin{lemma}\label{almostperiod}
If the spectrum of the (modular) Hamiltonian is discrete then $f_{ab}(t)$ is a $B^2$-Besicovitch almost periodic function of time. 
\end{lemma}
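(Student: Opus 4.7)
The plan is to exhibit $f_{ab}(t)$ as a uniformly convergent series of pure exponentials with real frequencies, and then to appeal to the fact that uniform limits of trigonometric polynomials are Bohr almost periodic and, a fortiori, $B^2$-Besicovitch almost periodic. The starting point is the expansion \eqref{eq-fab-expression},
\begin{equation*}
f_{ab}(t) = \sum_{m,n} c_{mn}\, e^{i(E_m - E_n)t}, \qquad c_{mn} := Z^{-1} e^{-\beta(E_m+E_n)/2} a^*_{mn} b_{mn},
\end{equation*}
which, under the discreteness hypothesis, is a formal sum over a countable family of real frequencies $\{E_m - E_n\}_{m,n}$, so each finite truncation is a trigonometric polynomial with real frequencies. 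The goal is to control the tail of this series uniformly in $t$.

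The main computational step is to establish absolute summability of the coefficients. Applying the Cauchy-Schwarz inequality to the double sum with $x_{mn} = e^{-\beta E_m/2}|a_{mn}|$ and $y_{mn} = e^{-\beta E_n/2}|b_{mn}|$ gives
\begin{equation*}
\sum_{m,n} |c_{mn}| \leq Z^{-1}\Bigl(\sum_{m,n} e^{-\beta E_m}|a_{mn}|^2\Bigr)^{1/2}\Bigl(\sum_{m,n} e^{-\beta E_n}|b_{mn}|^2\Bigr)^{1/2}.
\end{equation*}
The row and column sums $\sum_n |a_{mn}|^2 = (a a^\dagger)_{mm} \leq \|a\|^2$ and $\sum_m |b_{mn}|^2 \leq \|b\|^2$, combined with $\sum_m e^{-\beta E_m} = Z < \infty$, bound each factor by $Z^{1/2}\|a\|$ and $Z^{1/2}\|b\|$ respectively, yielding $\sum_{m,n}|c_{mn}| \leq \|a\|\|b\| < \infty$. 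Since each exponential in the series has unit modulus, this bound is uniform in $t$, and the finite partial sums converge uniformly to $f_{ab}$.

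A uniform limit of trigonometric polynomials with real frequencies is Bohr almost periodic by definition, and every Bohr almost periodic function lies in the Besicovitch space $B^2$, since $\|\cdot\|_{B^2} \leq \|\cdot\|_\infty$ ensures that the same approximating trigonometric polynomials also converge in the $B^2$ seminorm. This yields the claim, in fact in a slightly stronger form. The main obstacle is the summability bound: it rests on the standard assumptions that $a$ and $b$ are bounded elements of the algebra and that $e^{-\beta H}$ is trace class, both of which are automatic in the type I setting where the modular Hamiltonian has discrete spectrum. If one wished to treat unbounded observables or non-type-I settings where only $\ell^2$-summability $\sum |c_{mn}|^2 < \infty$ is available, one would have to forgo uniform convergence and invoke the Riesz-Fischer theorem for Besicovitch almost periodic functions, which would still deliver $B^2$-almost periodicity but not Bohr almost periodicity.
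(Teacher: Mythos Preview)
Your argument is correct and in fact establishes a stronger conclusion than the paper claims: you obtain $\ell^1$-summability of the Fourier coefficients and hence uniform (Bohr) almost periodicity, not merely $B^2$. The paper's proof takes a different route: it shows only $\ell^2$-summability $\sum_{m,n}|\xi^{(ab)}_{mn}|^2<\infty$, first treating the diagonal case $a=b$ via the observation $\sum_{m,n}|\xi^{(aa)}_{mn}|^2\le\bigl(\sum_{m,n}\xi^{(aa)}_{mn}\bigr)^2=f_{aa}(0)^2$, and then reducing the off-diagonal case by Cauchy--Schwarz on the product $\xi^{(aa)}_{mn}\xi^{(bb)}_{mn}$. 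This matches the $B^2$ definition directly without passing through Bohr almost periodicity. The trade-off is exactly the one you identify at the end: your Cauchy--Schwarz estimate uses $\sum_n|a_{mn}|^2\le\|a\|^2$ and hence requires boundedness of $a,b$, whereas the paper's argument needs only the finiteness of the zero-time correlators $f_{aa}(0),f_{bb}(0)<\infty$, a weaker hypothesis that can accommodate some unbounded observables. In the von Neumann algebra setting of the paper both hypotheses are available, so your sharper conclusion comes for free.
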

\begin{proof}
A $B^2$-Besicovitch almost periodic function can be defined as the formal series 
\begin{align}
    f(t) = \sum_{k} \xi_{k} e^{i\lambda_{k}t} \, ,
\end{align}
where $\sum_{k} |\xi_{k}|^{2} < \infty$ and real $\lambda_{k}$ \cite{besicovitch1926generalized}. 
Recall from \eqref{eq-fab-expression} that $f_{ab}(t)$ can be written as 
\begin{align}
    f_{ab}(t) = \sum_{m}\sum_{n} \xi^{(ab)}_{mn} e^{i(E_m-E_n)t} \, ,
\end{align}
where
\begin{align}
    \xi^{(ab)}_{mn} \, = \, Z^{-1} e^{ -\beta (E_m+E_n)/2} a^*_{mn} b_{mn} \, .
\end{align}
Therefore, to show that $f_{ab}(t)$ is a $B^2$-Besicovitch almost periodic, we need to show that
\begin{align}
    \sum_{m}\sum_{n} |\xi^{(ab)}_{mn}|^{2} < \infty \, .
\end{align}
First, consider the case where $a = b$. In this case, $\xi^{(aa)}_{mn} > 0$, and hence,
\begin{align}
    \sum_{m}\sum_{n} |\xi^{(aa)}_{mn}|^{2} \leq \left( \sum_{m}\sum_{n} \xi^{(aa)}_{mn} \right)^{2} = \left( f_{aa}(t=0) \right)^{2} < \infty , \label{eq-almost-periodic-cond-1}
\end{align}
where we have assumed that the correlation function without modular flow is finite. Next, we consider the general case where $a \ne b$. In this case, 
\begin{align}
    \sum_{m}\sum_{n} |\xi^{(ab)}_{mn}|^{2} =  \sum_{m}\sum_{n} \xi^{(aa)}_{mn} \xi^{(bb)}_{mn}  \leq \, \sqrt{\sum_{m}\sum_{n} |\xi^{(aa)}_{mn}|^{2} } \, \sqrt{ \sum_{m}\sum_{n} |\xi^{(bb)}_{mn}|^{2} } < \infty \, ,
\end{align}
where we have used Cauchy-Schwarz inequality in the second step and \eqref{eq-almost-periodic-cond-1} in the third step. 
\end{proof}
An almost periodic function does not have a long time limit and cannot cluster in time. In other words, as long as the spectrum of energy gaps is discrete, the algebra is type I, and there exist no mixing operators.

\subsection{Information loss in eternal black holes}
Consider a pair of holographic CFTs on compact spaces (e.g. spheres) in the thermofield double state $\ket{1}_{\beta}$ of (\ref{eq-tfd}).
At any large but finite $N$, the energy spectrum $E_m-E_n$ and energy gaps are both discrete. Therefore, for any probe operators $\mO_L$ and $\mO_R$, the $LR$ thermal correlator $f_{\mO\mO}(t)$ is an almost-periodic function.\footnote{This argument can be repeated for $LL$ or $RR$ correlators as well.} However, this contradicts the gravity result.

For simplicity, let us focus on AdS$_{3}$/CFT$_{2}$. 
Choose as probe a Virasoro conformal primary $\mO$ of dimension $\Delta$ dual to a scalar field $\phi$ of mass $m^2=\Delta(2-\Delta)$ in the bulk. 
For fixed inverse temperature $\beta$, the boundary theory correlator corresponds to a quantum gravity path-integral. At large $N$ this path-integral is approximated by the saddle point approximation. Both the thermal AdS and BTZ black hole and, in fact, an $SL(2,\mathbb{Z})$ worth of saddle points contribute to the path-integral \cite{maldacena1999ads3,dijkgraaf2000black}. These saddles correspond to all geometries that fill in the interior of the Euclidean boundary torus. Above the Hawking-Page phase transition ($\beta<2\pi$) the BTZ saddle dominates, whereas for lower temperatures ($\beta>2\pi$) the thermal AdS saddle wins. Since we are interested in the $N\to \infty$ ($G_N\to 0)$ limit, we focus on the thermal AdS and the BTZ saddles.

In gravity, we compute the time-dependent thermal correlator $f_{\mO\mO}(t)$ using the two-point functions of $\phi$ in either the BTZ background or the thermal AdS depending on which saddle dominates. Setting the angular momentum to zero, we find the following LR-correlators on a spatial circle of circumference $2\pi$ at inverse temperature $\beta$
\cite{keski1999bulk,maldacena2003eternal}
\begin{align}
    \text{BTZ: } \quad f_{\mathcal{O}\mathcal{O}}(\theta,t) \, \sim& \, \sum_{n=-\infty}^\infty \lb \cosh\lb \frac{2\pi}{\beta}(\theta+2\pi n)\rb+\cosh\lb \frac{2\pi t}{\beta}\rb\rb^{-2\Delta} \, ,\label{thermal2pt}\\
        \text{Thermal AdS: } \quad f_{\mO\mO}(\theta,t) \, \sim& \, \sum_{m=-\infty}^\infty \lb \cos(\theta)+\cos\lb\frac{2\pi}{\beta}(t+im\beta )\rb\rb^{-2\Delta} \, \label{thermalads},
\end{align}
where $(\theta,t)$ are coordinates on $S^1\times \mathbb{R}_t$. In the BTZ answer, the $n=0$ term corresponds to the finite temperature two-point function when the boundary is $\mathbb{R}$, and the sum over $n$ comes from the method of images compactifying the spatial $\mathbb{R}$ to $S^1$ by identifying $\theta\sim \theta+2\pi$. 
In the thermal AdS answer, on the other hand, the $m=0$ corresponds to the zero temperature two-point function on a circle, and the sum over $m$ comes from the method of compactifying the time $\mathbb{R}_t$ to $S^1$ by identifying $t\sim t+i\beta$. 

The correlator $f_{\mO\mO}(t)$ in \eqref{thermal2pt} clusters in time because each term inside the sum decays to zero exponentially fast.\footnote{Note that $f_{ab}(t)$ is the same as the connected correlator because the one point functions in the BTZ background vanish.} The origin of this clustering correlator is the analyticity of the Fourier transform (see appendix \ref{thermalFourier})
\begin{eqnarray}
    \hat{f}_{\mO\mO}(k,\omega)\sim  \prod_{\pm} e^{-\omega_\pm/2}\Gamma(\Delta-i \omega_+)\Gamma(\Delta+i \omega_+) \, .
\end{eqnarray}
This is a smooth function of $\omega$ supported everywhere on the real $\omega$ axis \cite{leutheusser2021causal}. For fixed $n$, each term in \eqref{thermal2pt} corresponds to a thermal two-point function in the infinite volume where the spectrum of the CFT Hamiltonian is continuous, $\hat{f}_{\mO\mO}(\omega)$ is analytic, which implies that the correlator clusters. This is as opposed to the thermal AdS answer. For fixed $m$, each term in \eqref{thermalads} corresponds to a two-point function at a finite volume where the spectrum of the CFT Hamiltonian is discrete, therefore the correlator is an almost periodic function of time. In fact, one can observe that the thermal AdS answer is invariant under $t\to t+\beta$, and hence a periodic function of time.




In \cite{maldacena2003eternal}, Maldacena argued that the fact that $f_{\mO\mO}(t)$ computed from the gravity point of view clusters is a manifestation of the information loss problem for eternal black holes.
The way out of this dilemma is to realize that in the limit of $N\to \infty$, the spectrum of the Hamiltonian at large enough energies $E\sim O(N^2)$ becomes continuous, allowing for the existence of simple operators like $\mO$ that cluster.

In a large $N$ theory in arbitrary dimensions, single-trace conformal primary operators $\mO$ with dimension $\Delta =O(N^0)$ on the boundary are dual to supergravity fields in the bulk. Above the Hawking-Page phase transition, the gravity path-integral is dominated by large (semiclassically stable) black holes. In such backgrounds, the correlators $f_{\mO\mO}(t)$ cluster in time, and $\mO$ generate an operator system of mixing observables. For a detailed discussion of this correlator in AdS$_5$ black holes see \cite{festuccia2006excursions}. Festuccia and Liu have argued that clustering in time is a signature of an event horizon in dual geometry \cite{festuccia2007arrow}. For an LL-correlator the perturbation $\mO(0)$ falls into the black hole (thermalizes). The probes $\mO(t)$ at late times cannot recover the information content of $\mO(0)$. In the strict $N\to\infty$ limit, the one-point removed probes $\mO-\braket{\mO}_\beta$ generate an algebra of GFF \cite{witten2022gravity}. Below the Hawking-Page transition, there are no mixing operators and the algebra is type I. Above the Hawking-Page transition, every operator in the algebra is mixing, and by Theorem \ref{clustering-thm} that we prove in section \ref{sec-info-loss}, the observable algebra is a type III$_1$ factor.

At large but finite $N$, to recover the initial perturbation $\mO(0)$ we need to wait until the contribution of the other saddles becomes comparable to the dominant one. Since the non-dominant saddles are exponentially suppressed in $N^2$ we need to wait for a long time. This physics is intrinsically non-perturbative and lies beyond the $1/N$ perturbative correction to the correlator. 

The information loss puzzle is sharp in the limit where we send $N\to \infty$ first, and then $t\to \infty$. To clarify the importance of this limit, we briefly review the expected behavior of the thermal correlator in the hierarchy of time scales. In our large $N$ discussion we have the following hierarchy of time scales: Diffusion time $t_{eq}=O(N^0)$ (sometimes referred to as the local thermalization time), the scrambling time $t_{sc}=O(\log N)$, the Thouless time (the onset of the random matrix behavior) $t_{Thou}=O(\log N)$, and the classical Poincar\'e recurrence time $t_{Poin}=e^{O(N^2)}$.

\begin{enumerate}
    \item {\bf Diffusion time:} 
    This time-scale is $t_{eq}=O(N^0)$ and survives the $N\to \infty$ limit. This is the time scale we are primarily concerned with in this work. 
    Gravity suggests that single-trace operators are mixing. They generate an operator system of mixing observables $\mS$.
    At this time scale both the LL and LR two-point functions are expected to cluster in time evolution toward the future and the past:
    \begin{eqnarray}\label{clusterOO}
        \lim_{|t|\to \infty}f_{\mO\mO}(t)=0, \lim_{|t|\to \infty}f_{\mO\mO}(t+i\beta/2)=0\ .
    \end{eqnarray}
    The decay of the correlator above is exponential with exponents that are related to the black hole quasi-normal modes. 
    In the $N\to \infty$ limit, as we will see in section \ref{GFFsec} the multi-trace operators that we find by multiplying the one-point removed operators also cluster. 
    The equation (\ref{clusterOO}) implies that
    \begin{eqnarray}
        {}_\beta\braket{1|[\mO,\mO(t)]|1}_\beta=0\ .
    \end{eqnarray}
In GFF, the commutator above is central. Therefore, as we separate operators by a large time they become asymptotically Abelian with respect to each other. In other words, we have the {\it Norm Asymptotic Abelianness} property:
    \begin{eqnarray}\label{normAbel}
        \lim_{t\to \infty}\|[\mO,\mO(t)]\|_\infty=0\ .
    \end{eqnarray}
    As we see in section \ref{GFFsec}, this implies that all the higher point Out-of-Time-Ordered correlators (OTOCs) decay as well:
    \begin{eqnarray}
        \lim_{t\to \infty}{}_\beta\braket{1||[\mO,\mO(t)]|^{2k}|1}_\beta=0\ .
    \end{eqnarray}
    
    \item {\bf Scrambling time:} In holography, this time scale is $t_{sc}=O(\log N)$. In a large $N$ theory, the dominant piece of the commutator decays exponentially in $t$, however, there are always $1/N$ corrections. At time scale $t\sim \log N$ the commutator has decayed enough that we can no longer ignore the $1/N$ corrections. 
    At this stage, the OTOC starts growing \cite{maldacena2016bound}. 
    
    \item {\bf Thouless time:} The Thouless time is often defined to be the time scale for the perturbation to exponentially mix across the whole system. Alternatively, we can define it to be the onset of ramp (or the random matrix behavior) in the spectral form factor \cite{gharibyan2018onset}, or the time-scale at which the dynamics is well-approximated by a 2-design. In gravity, this time scale is expected to be $O(\log N)$. 
    
    \item {\bf Classical Poincar\'e recurrence time:} This is an exponentially long time $t_{Poin}=e^{O(N^2)}$ when the time evolution starts distinguishing individual black hole microstates. It corresponds to the plateau in the spectral form factor \cite{cotler2017black}.
\end{enumerate}

In lower dimensions, we have a better understanding of how the non-perturbative $e^{O(N^2)}$ effects restore the information at late times. In $0+1$-dimensional boundaries, we can study the thermal correlators of the SYK model in the conformal limit:
\begin{eqnarray}
    \braket{\psi_L(t)\psi_L(0)}_\beta=\frac{e^{-i\pi\Delta}}{\frac{\beta}{\pi} \sinh(\pi(t-i\ep)/\beta)}
\end{eqnarray}
which decays to zero exponentially fast. However, the theory is simple enough to study the corrections to this behavior \cite{PhysRevD.94.106002}. In two-dimensional boundaries, the thermal correlator on a compact space was studied using the {\it light-heavy} Virasoro block in the large $c$ limit, which corresponds to the correlation function of light-light operators in heavy black hole states \cite{fitzpatrick2016information}.

\subsection{Modular correlators in an arbitrary state}

To prepare for the main results in the next section, here, 
we generalize the definition of the thermal two-point function to the modular two-point function in a general state $\psi$ of an arbitrary von Neumann algebra. Consider an arbitrary density matrix $\psi$ in the separable Hilbert space $\mH_L$:
\begin{eqnarray}
    \psi=\sum_m p_m \ket{m}\bra{m}
\end{eqnarray}
where the probabilities $e^{-\beta E_m}/Z$ in the thermal state are replaced by a general probability distribution $p_m$. The analog of the thermofield double is the canonical purification
\begin{eqnarray}
    \ket{1}_\psi=\sum_m \sqrt{p_m}\ket{m}_L\ket{m}_R\ .
\end{eqnarray}
The modular evolution is given by the unitary flow $e^{-it\hat{K}} = \psi^{it}\otimes\psi^{-it} := \Delta_{\psi}^{it}$ and the modular  two-point function, analogous to \eqref{eq-fab-expression} is
\begin{eqnarray}
    &&f^\psi_{ab}={}_\psi\braket{1|a_L^\dagger e^{i t \hat{K}} b_R^T |1}_\psi=\sum_{mn}p_m^{1/2+it}p_n^{1/2-it}a_{mn}^*b_{mn}\ .
\end{eqnarray}

More generally, in an arbitrary von Neumann algebra $\mA$ and a faithful normal state $\psi$, the Hilbert space $\mH_L\otimes \mH_R$ is replaced by the GNS Hilbert space $\mH_\psi$ which is a faithful representation of the von Neumann algebra $a\to a\ket{1}_\psi$.
The Tomita-Takesaki modular theory defines for us a strongly continuous unitary flow called modular flow $a_\psi(t):=\alpha_\psi^t(a)=\Delta_\psi^{it}a \Delta_\psi^{-it}$. 
To simplify our notation we sometimes suppress the $\psi$ in $a_\psi(t)$.
For a quick review of von Neumann algebras, the definition of normal states, the construction of the GNS Hilbert space $\mH_\psi$, modular theory, and the classification of factors see appendix \ref{app:op}. 

We can consider $a_L\in \mA_L$ in a general von Neumann algebra $\mA_L$, and $b_R$ in the commutant $\mA_R=\mA'_L$. The time-evolution is the modular flow of the state $\Delta_\psi^{it}$, and the modular flowed two-point function \begin{eqnarray}
    f_{ab}^\psi(t):=
    {}_\psi\braket{1|a_L^\dagger\Delta_\psi^{1/2+it} b_L|1}_\psi \ .
\end{eqnarray}
Above we have defined $f_{ab}^\psi(t)$ using the analytic continuation of the LL modular correlators to avoid taking the transpose in the basis of the modular Hamiltonian. This correlator is analytic inside the strip, $\Im(t)\in (-1/2,1/2)$ with continuous boundary values satisfying the modular KMS condition:
\begin{eqnarray}
    f^\psi_{ab}(t-i/2)=f^\psi_{b^\dagger a^\dagger}(-t+i/2)\ .
\end{eqnarray}

In the Fourier space, we have an analog of (\ref{fouriercont}) where $dP_\psi(\omega)$ is the spectral operator-valued measure corresponding to the modular Hamiltonian: $\hat{K}_\psi=-\log \Delta_\psi$. Note that $f_{ab}(0)$ is the alternate inner product between vectors in $\mH_\psi$ \cite{furuya2022real}. Therefore, it follows from the Cauchy-Schwarz inequality that 
\begin{eqnarray}\label{cauchyalter}
    |f^\psi_{ab}(t)|\leq f^\psi_{aa}(0)^{1/2}f^\psi_{bb}(0)^{1/2}<\infty\ .
\end{eqnarray}
To simplify our notation, when unambiguous, we suppress the index $\psi$ in $f^\psi_{ab}$.

We can now generalize our discussion of clustering in time and mixing operators to an arbitrary normal state $\psi$ and $\psi$-mixing operators (those that cluster in modular time).
If the spectrum of the modular Hamiltonian (modular energy differences) of $\psi$ is discrete,\footnote{In generic infinite dimensional systems, the spectrum of energy differences are expected to be discrete but dense in $\mathbb{R}$. See appendix \ref{app:typeiii}.} we have type I von Neumann algebra. In this case, there exists no pair of mixing operators in type I algebras. This is a generalization of Maldacena's argument to out-of-equilibrium states. This has potential implications for pure states  dual to one-sided black holes. We will postpone this discussion to upcoming work.

For the modular flow of a general state $\psi$ the centralizer is the subalgebra of observables $z$ such that 
\begin{eqnarray} \label{eq-centralizer-invariant-flow}
    \alpha^t_\psi(z)=z\ .
\end{eqnarray}
In the physics literature, the operators in the centralizer are sometimes called the {\it modular zero modes} \cite{lashkari2021modular}. Of course, the modular zero modes depend on the state $\psi$, but the non-trivial fact proved by Connes is that the center of the centralizer (the maximal Abelian subalgebra of the centralizer) is the same in all states of the Hilbert space (the folium) (see appendix \ref{app:op}). Therefore, it is an algebraic invariant. It is this key result of Connes that allows us to deduce the algebra type from the modular flow of a single state. In a von Neumann algebra where all operators are $\psi$-mixing, the centralizer has to be trivial. Therefore, this algebraic invariant is trivial, which as we will see fixes the algebra type to be type III$_1$.

\section{Mixing operators}\label{sec-info-loss}

From the point of view of a unitary theory of quantum gravity, black hole evaporation is the same as a piece of coal burning. The decay of the thermal two-point function due to mixing raises a similar challenge in explaining thermalization in a closed quantum system. The thermal two-point function cannot decay unless we are in the strict thermodynamic limit of infinite entropy (for instance, an infinite number of particles $N\to\infty$ or in the classical limit of $\hbar\to 0$). The analog of the gravity answer for the burning piece of coal is a subspace of {\it mixing operators} whose correlators cluster in time. They form an operator system, i.e. a $*$-closed linear subspace of operators that includes identity.
In Lemma \ref{simpleop} below, we show that when the spectrum of the modular Hamiltonian is continuous in some range of energies there always exist mixing operators. We construct them explicitly and comment on the connections to the Eigenstate Thermalization Hypothesis (ETH). In holography, mixing operators are {\it simple operators}, in the sense that they correspond to excitations that propagate locally in the bulk \cite{engelhardt2019coarse}. In the thermalization of chaotic systems, mixing operators are {\it simple operators} as well in the sense that, in the strict infinite entropy limit, the operators that satisfy ETH are mixing (see Lemma \ref{simpleop}). In both holography and ETH, the mixing operators are a distinguished subspace of simple operators.

We start with the definition:
\begin{definition}
    We call a finite set of operators $\{a_1,a_2,\cdots \}$ ``future mixing" in state $\psi$ if for any $i,j$ the correlator $f^\psi_{ij}(t):=f^\psi_{a_i a_j}(t)$ clusters as $t\to \infty$, and ``past mixing" if they cluster as $t\to -\infty$. We call them ``mixing" if they cluster in both the future and the past.
    Similarly, if the decay to zero is exponentially fast, we call them ``future exponentially mixing", ``past exponentially mixing" and ``exponentially mixing", respectively.
\end{definition}
We often use the word ``mixing" for the case when the state is thermal (KMS), and use ``$\psi$-mixing" for the case where the state is some general $\psi$. 

\begin{definition}
    Consider a set of self-adjoint operators $\mX=\{\mO_1,\mO_2,\cdots\}$. We call the set of all operators $a=\alpha_0+\sum_i \alpha_i \mO_i$ with complex numbers $\alpha_i$ such that $f^\psi_{aa}(0)<\infty$ the {\it operator system generated by $\mX$} and denote it by $S_\psi^\mX$.\footnote{Note that the condition $f_{aa}^\psi(0)=\psi(a^\dagger a)<\infty$ means that the operator $a$ acting on the GNS Hilbert vacuum $\ket{1}_\psi$ creates a normalizable state. In other words, our operator system $S_\psi^\mX$ corresponds to a subspace of the GNS Hilbert space that is closed in the Hilbert space norm $L_{2,\psi}$ \cite{furuya2023monotonic}. 
    }
\end{definition}
Note that, in our definition, we can always add the identity operator to the generating set without changing $S_\psi^\mX$. Therefore, from now on, we will always assume that $1\in \mX$.
An operator system is a $*$-closed linear subspace of operators that includes the identity operator (see appendix \ref{app:op} for more information). 

To set the stage for the next two subsections, first in Lemma \ref{mixingop}, we show that $\psi$-mixing operators form an operator system. In Lemma \ref{maximalmixingsystem} we enumerate a few key properties of the maximal mixing operator system. Then, in Lemma \ref{RB}, we derive a sufficient condition for a set of operators to be mixing. In Lemma \ref{smooth}, we find the necessary and sufficient condition for operators to be exponentially mixing. 

\begin{lemma}[Mixing Operator Systems]\label{mixingop}
Consider $\mathcal{X}=\{\mO_1,\mO_2,\cdots\}$ a countable set of pairwise orthogonal future (or past) $\psi$-mixing bounded observables, i.e. self-adjoint operators, for a normal state $\psi$. The set $\mathcal{X}$ generates an operator system $\mS^\mX_\psi$ mixing in both the future and the past. 
\end{lemma}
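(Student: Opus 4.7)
The plan is to establish four facts in turn: (i) $\mS^\mX_\psi$ is a $*$-closed linear subspace of observables containing the identity; (ii) the connected correlator of finite partial sums of generators clusters in time; (iii) a uniform-in-$t$ Cauchy-Schwarz tail estimate promotes (ii) to arbitrary elements of $\mS^\mX_\psi$; and (iv) future mixing on the generating set forces past mixing.

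Claims (i) and (ii) are essentially bookkeeping. For (i), the identity sits inside $\mS^\mX_\psi$ via the $\alpha_0$ term, the set is $*$-closed because each $\mO_i$ is self-adjoint and the coefficients $\alpha_i$ can be conjugated, and closure under addition follows from the triangle inequality for the KMS seminorm $\|a\|_\psi := f^\psi_{aa}(0)^{1/2}$, itself a direct consequence of \eqref{cauchyalter}. For (ii), if $a_N$ and $b_N$ denote the partial sums truncated at level $N$, their connected correlator
\begin{equation*}
f^{\psi,\text{conn}}_{a_N b_N}(t) \;=\; \sum_{i,j\geq 1} \bar\alpha_i \beta_j\, f^{\psi,\text{conn}}_{\mO_i\mO_j}(t)
\end{equation*}
is a finite sum of functions that each tend to $0$ as $t\to+\infty$ by the mixing hypothesis on $\mX$; contributions involving $\mO_0 := 1$ drop out after subtracting one-point functions, so the whole sum clusters.

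Claim (iii) is the main obstacle. Pairwise orthogonality of the generators in the KMS inner product $\braket{a|b}_\psi^{\text{KMS}} := f^\psi_{ab}(0)$ turns the hypothesis $f^\psi_{aa}(0)<\infty$ into a Parseval-type identity, so that the KMS tails $\|a-a_N\|_\psi$ and $\|b-b_N\|_\psi$ vanish as $N\to\infty$. Two applications of \eqref{cauchyalter} then give
\begin{equation*}
|f^\psi_{ab}(t) - f^\psi_{a_N b_N}(t)| \;\leq\; \|a-a_N\|_\psi\, \|b\|_\psi + \|a_N\|_\psi\, \|b-b_N\|_\psi,
\end{equation*}
a bound independent of $t$. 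Given $\epsilon>0$, pick $N$ so this is below $\epsilon/2$, then invoke (ii) to make $|f^{\psi,\text{conn}}_{a_N b_N}(t)|<\epsilon/2$ for $|t|$ large enough; the two estimates together deliver future mixing on all of $\mS^\mX_\psi$.

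Finally, claim (iv) follows from the identity $f^\psi_{ab}(-t)=\overline{f^\psi_{ba}(t)}$, obtained by complex-conjugating the defining expression $f^\psi_{ab}(t) = {}_\psi\braket{1|a_L^\dagger \Delta_\psi^{1/2+it} b_L|1}_\psi$ and using self-adjointness of $\log\Delta_\psi$. Since the definition of a future mixing set involves both ordered pairs $(i,j)$ and $(j,i)$, this identity forces $f^{\psi,\text{conn}}_{\mO_i\mO_j}(t)\to 0$ also as $t\to -\infty$, and rerunning (iii) lifts past mixing from the generators to the entire operator system $\mS^\mX_\psi$.
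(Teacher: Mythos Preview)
Your proof is correct and follows essentially the same route as the paper: establish past mixing from future mixing via the conjugation identity $f^\psi_{ab}(-t)=\overline{f^\psi_{ba}(t)}$ for self-adjoint generators, then use a Cauchy--Schwarz tail estimate uniform in $t$ to pass from finite partial sums to arbitrary elements of $\mS^\mX_\psi$. Your telescoping $f_{ab}-f_{a_Nb_N}=f_{(a-a_N)b}+f_{a_N(b-b_N)}$ is slightly cleaner than the paper's three-term split, and you are more explicit that orthogonality (via Parseval) is what drives $\|a-a_N\|_\psi\to 0$, whereas the paper attributes this to normality of the state; otherwise the arguments coincide.
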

\begin{proof} 
Since $\mO_i$ are self-adjoint we have
\begin{eqnarray}
    \lim_{t\to \infty}f^\psi_{ij}(-t)=\lim_{t\to \infty}f^\psi_{ji}(t)^*=0\ .
\end{eqnarray}
Therefore, the set $\mathcal{X}$ is future mixing if and only if it is past mixing. Next, we would like to prove that all operators in the operator system generated by $\mX$ are mixing.
Consider a pair of bounded operators $a=\sum_{i} \alpha_i \mO_i$ and $b=\sum_i \beta_i \mO_i$ with $\alpha_i$ and $\beta_i$ complex numbers such that the correlators $f_{aa}(0)<\infty$ and $f_{bb}(0)<\infty$.
We define the sequence of operators $\{a_n\}$ and $\{b_n\}$:
\begin{eqnarray}\label{countable}
a_n=\sum_{i=1}^n\alpha_i \mO_i,\qquad
b_n=\sum_{i=1}^n\beta_i \mO_i\ .
\end{eqnarray}
For finite $n$, the limit $t\to\infty$ of $f_{a_nb_n}(t)$ vanishes because we can bring the limit inside the sum, and each  term in summand vanishes. The idea is to prove that we can exchange the order of limits in $n$ and $t$ because the sequence of functions $f_{a_nb_n}(t)$ converges to $f_{ab}(t)$ uniformly.
To establish uniform convergence, we start with the inequalities
 \begin{eqnarray}
    |f_{ab}(t)-f_{a_nb_n}(t)|=&& |f_{(a-a_n)b}(t)+f_{a(b-b_n)}(t)-f_{(a-a_n)(b-b_n)}(t)|\nn\\
     \leq && |f_{(a-a_n)b}(t)|+|f_{a(b-b_n)}(t)|+|f_{(a-a_n)(b-b_n)}(t)|\nn\\
 \leq &&f_{(a-a_n)(a-a_n)}(0)^{1/2}f_{bb}(0)^{1/2}+f_{(b-b_n)(b-b_n)}(0)^{1/2}f_{aa}(0)^{1/2}\nn\\
 &&+f_{(a-a_n)(a-a_n)}(0)^{1/2}f_{(b-b_n)(b-b_n)}(0)^{1/2}\nn\\
 \leq && 2f_{(a-a_n)(a-a_n)}(0)^{1/2}f_{bb}(0)^{1/2}+f_{(b-b_n)(b-b_n)}(0)^{1/2}f_{aa}(0)^{1/2}\,
 \end{eqnarray}
 where we have used the Cauchy-Schwarz inequality in (\ref{cauchyalter}) and the fact that
 \begin{eqnarray}
   0\leq f_{(b-b_n)(b-b_n)}(0)\leq f_{bb}(0)\ .
 \end{eqnarray}
 Since the state is normal we have $\lim_{n\to\infty}f_{(a-a_n)(a-a_n)}(0)=0$. Hence, 
 \begin{eqnarray}
     \lim_{n\to \infty}\sup_t|f_{ab}(t)-f_{a_nb_n}(t)|\leq &&\lim_{n\to \infty}2f_{(a-a_n)(a-a_n)}(0)^{1/2}f_{bb}(0)^{1/2}+f_{(b-b_n)(b-b_n)}(0)^{1/2}f_{aa}(0)^{1/2}\nn\\
     = && 0\,
 \end{eqnarray}
 which implies that the sequence of functions $f_{a_nb_n}(t)$ converges to $f_{ab}(t)$ uniformly. 

 Therefore, we are allowed to exchange the order of limits $\lim_{n\to \infty}$ and $\lim_{t\to\infty}$, and as a result, we find that $f^\psi_{ab}(t)$ clusters in time
 \begin{eqnarray}
     \lim_{t\to \infty}f_{ab}(t)=\lim_{t\to \infty}\lim_{n\to \infty}f_{a_{n}b_{n}}(t)=\lim_{n\to \infty}\lim_{t\to \infty}f_{a_{n}b_{n}}(t)=0\ .
 \end{eqnarray}
It is clear that if $a$ is mixing $a^\dagger$ is also mixing, and the identity operator is trivially mixing. We find that every pair of operators in the operator system $\mS_\psi^\mX$ generated by $\mX$ is mixing.
\end{proof}

\begin{lemma}[Maximal Mixing Operator System]\label{maximalmixingsystem}
The operator system $\mS_\psi$ of all mixing operators in a state $\psi$  (the Maximal Mixing Operator System) has the following properties:
    \begin{enumerate}
        \item $\mS_\psi$ is closed under modular time-evolution: if $a\in \mS_\psi$ then the modular flowed operator $a_\psi(t)\in \mS_\psi$.
    \item The operator $a_\psi(z)$ with analytically continued time evolution to complex times $\Im(z)\in [0,1/2)$ is in $\mS_\psi$.
    \item $\mS_\psi$ cannot contain any operator that is a countable sum over eigenoperators of $\hat{K}$. In particular, in the KMS case, the corresponding maximal mixing operator system $\mS_\beta$ cannot contain any operator that is a countable sum over partial isometries $v_{mn}=\ket{E_m}\bra{E_n}$.  
    \end{enumerate}
\end{lemma}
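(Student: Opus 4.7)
I handle the three parts using (i) modular-flow invariance of $\psi$ together with $\Delta_\psi^{z}\ket{1}_\psi=\ket{1}_\psi$ for Parts 1 and 2, and (ii) uniqueness of Fourier--Bohr coefficients of almost-periodic functions for Part 3. The only real analytic content is the extension of clustering from the real line into the KMS strip.

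A direct computation starting from $f^\psi_{ab}(t)=\braket{1|a^\dagger\Delta_\psi^{1/2+it}b|1}_\psi$ and using that $\ket{1}_\psi$ is fixed by every $\Delta_\psi^{z}$ with $z\in\mathbb{C}$ gives the translation identity
\begin{equation}\label{eq-plan-translate}
    f^\psi_{a_\psi(z)\,b}(t) \;=\; f^\psi_{ab}(t-\bar z)\ ,
\end{equation}
valid for real $z$ (Part 1) and, by analytic continuation inside the analyticity strip $\Im(w)\in(-1/2,1/2)$ of $f^\psi_{ab}(w)$, for any $z$ with $\Im(z)\in(-1/2,1/2)$; in particular for the range $\Im(z)\in[0,1/2)$ of Part 2. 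The one-point term is modular-invariant, $\psi(a_\psi(z))=\psi(a)$, so the identity passes to the connected correlator. Part 1 follows immediately: $f^{\psi,\mathrm{conn}}_{a_\psi(s)\,b}(t)=f^{\psi,\mathrm{conn}}_{ab}(t-s)\to 0$ for every $b\in\mS_\psi$, and $a_\psi(s)^\dagger=\alpha^s_\psi(a^\dagger)$ keeps $\mS_\psi$ closed under involution.

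For Part 2, the real work is to upgrade clustering on the real line to clustering on every horizontal line of the open strip. I would first establish the uniform bound
\begin{equation}
    |f^\psi_{ab}(w)| \;\leq\; \|a\ket{1}_\psi\|\,\max\!\lb\|b\ket{1}_\psi\|,\,\|b^\dagger\ket{1}_\psi\|\rb
\end{equation}
on the closed strip $\Im(w)\in[-1/2,1/2]$, by Cauchy--Schwarz combined with log-convexity of $s\mapsto \|\Delta_\psi^{s}b\ket{1}_\psi\|$ on $[0,1/2]$ (Hadamard three lines with endpoint values $b\ket{1}_\psi$ and $J_\psi b^\dagger\ket{1}_\psi$). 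Given this bound, for any sequence $t_n\to\infty$ the translates $F_n(w):=f^{\psi,\mathrm{conn}}_{ab}(w+t_n)$ form a uniformly bounded family on the open strip; Montel's theorem extracts a subsequence converging locally uniformly to a holomorphic limit $F$ which vanishes on the real axis by the mixing hypothesis, so $F\equiv 0$. Every subsequential limit being zero, $f^{\psi,\mathrm{conn}}_{ab}(w+t)\to 0$ locally uniformly in $w$ across the open strip, and evaluating \eqref{eq-plan-translate} at $w=-\bar z$ gives $a_\psi(z)\in\mS_\psi$. The principal obstacle of the lemma sits here: one also has to verify that $a_\psi(z)\ket{1}_\psi$ is a bona fide vector in $\mH_\psi$ for $\Im(z)\in[0,1/2)$, which I would handle by first restricting to Tomita's dense subalgebra of entire-analytic elements and extending by density.

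For Part 3, suppose $a=\sum_n\alpha_n e_n$ with $\alpha^t_\psi(e_n)=e^{i\omega_n t}e_n$ and the $\omega_n$ distinct (after grouping). Modular invariance of $\psi$ forces $\psi(e_n)=0$ for $\omega_n\neq 0$, and the eigenvector identity $\Delta_\psi^{1/2+it}e_n\ket{1}_\psi=e^{(1/2+it)\omega_n}e_n\ket{1}_\psi$ together with $\braket{1|e_m^\dagger e_n|1}_\psi=0$ for $\omega_m\neq\omega_n$ yields
\begin{equation}
    f^{\psi,\mathrm{conn}}_{aa}(t) \;=\; \sum_{n:\,\omega_n\neq 0} |\alpha_n|^2\,e^{\omega_n/2}\,\|e_n\ket{1}_\psi\|^{2}\,e^{i\omega_n t}\ ,
\end{equation}
a $B^2$-Besicovitch almost-periodic function in the sense of Lemma~\ref{almostperiod}. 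If it clustered, dominated convergence would force its Bohr mean $\lim_T T^{-1}\int_0^T f^{\psi,\mathrm{conn}}_{aa}(t)\,e^{-i\omega_n t}\,dt$ to vanish for every $n$, while character orthogonality identifies that same mean with $|\alpha_n|^2 e^{\omega_n/2}\|e_n\ket{1}_\psi\|^2$; every nontrivial summand therefore vanishes. The KMS specialization to partial isometries $v_{mn}=\ket{E_m}\bra{E_n}$ at modular frequency $E_m-E_n$ is the same statement verbatim.
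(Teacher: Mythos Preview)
Your argument is correct. Parts~1 and~3 match the paper's approach: the paper is terser on Part~3, asserting only that a countable sum of eigenoperator correlators is almost periodic and hence has no limit, whereas your Bohr-mean computation makes explicit why clustering forces every Fourier--Bohr coefficient to vanish. (One minor remark: the phrase ``dominated convergence'' is not quite the mechanism---the vanishing of the Bohr mean of a function tending to zero is an elementary $\epsilon$--$T_0$ estimate---but the conclusion is of course correct.)

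The genuine difference is in Part~2. The paper represents the analytically continued vector as a Poisson-type superposition of real-time translates,
\[
a_\psi(z)\ket{1}_\psi \;=\; \int_{-\infty}^{\infty} \frac{\alpha\,dt}{\pi(t^2+z^2)}\,a_\psi(t)\ket{1}_\psi\ ,
\]
and then invokes Part~1 under the integral. You instead run a normal-families argument directly on the correlator: the translates $F_n(w)=f^{\psi,\mathrm{conn}}_{ab}(w+t_n)$ are uniformly bounded on the open strip by Hadamard three lines, Montel extracts a locally uniform holomorphic limit, and the real-axis mixing hypothesis forces that limit to vanish identically. Your route has the advantage that the analytic continuation of the \emph{correlator} into the strip is already granted by modular theory, so no separate justification of the integral representation (or of the interchange of limit and integral) is needed; the domain issue you flag for $a_\psi(z)\ket{1}_\psi$ is handled once at the level of Tomita's entire-analytic elements and then closed by density. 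The paper's route has the advantage of being more operator-theoretic: it exhibits $a_\psi(z)$ explicitly as a weighted superposition of operators already known to lie in $\mS_\psi$, which plugs directly into the countable-sum closure of Lemma~\ref{mixingop}. Both arguments are sound and yield the same conclusion.
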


\begin{proof} 
{\bf (1):} First consider the real-time evolutions $t=\Re(z)$ and $t'=\Re(z')$. We have
\begin{eqnarray}
    f^\psi_{a_\psi(t)b_\psi(t')}(s)={}_\psi \braket{1|a_L e^{i (-t+s+t'-i/2)\hat{K}} b_L|1}_\psi\ .
\end{eqnarray}
Therefore,
\begin{eqnarray}
    \lim_{s\to \infty}f^\psi_{a(t)b(t')}(s)=\lim_{s\to \infty} f^\psi_{ab}(s-t+t')\ .
\end{eqnarray}
{\bf (2):} For complex $z$ with $\Re(z)\geq 0$ we can write 
\begin{eqnarray}
a_\psi(z)\ket{1}_\beta=\int_{-\infty}^\infty \frac{\alpha dt}{\pi(t^2+z^2)}\:a_\psi(t)\ket{1}_\beta\ .  
\end{eqnarray}
Then, repeating the argument above we find
\begin{eqnarray}
    \lim_{s\to \infty}f^\psi_{a_\psi(z)b_\psi(z')}(s)=\lim_{s\to \infty}f_{ab}(s)\ .
\end{eqnarray}
{\bf (3):} Eigenoperators of $\hat{K}$ are defined as $v_\lambda$ that solve
\begin{eqnarray}
    (\hat{K}-\lambda)v_\lambda\ket{1}_\psi= 0\ .
\end{eqnarray}
Any such operator cannot be a mixing operator because its correlators are periodic in modular time. 
Any operator that is a countable sum over $v_\lambda$ cannot be in $\mS_\psi$ because its correlators are almost periodic in modular time and have no limits.
In the case of KMS state, this means that any operator of the form below cannot be mixing:
\begin{eqnarray}
    a=\sum_{mn} \alpha_{mn} \ket{E_m}\bra{E_n}\notin \mS_\beta\ .
\end{eqnarray}
\end{proof}
It is worth emphasizing that there can be many mixing operator systems. For instance, in the GFF example of time-band algebras in section \ref{sec:timeband}, we will see that there are mixing operator systems with support limited to any finite time interval. However, the maximal operator system has the key property that it is closed under time evolution analytically continued to the strip. In particular, if $a$ and $b$ are mixing operators then when 
\begin{eqnarray}
    f^\psi_{ab_\psi(i\beta/2)}(s)=f_{ab}(t+i\beta/2)<\infty
\end{eqnarray}
we can add $b_\psi(i\beta/2)$ to the list of mixing operators.

To obtain a sufficient condition for the mixing property we prove the following Lemma based on the Riemann-Lebesgue Lemma.\footnote{For similar statements see  \cite{festuccia2007arrow,lashkari2021modular}.}

\begin{lemma}[Measurability criterion]\label{RB}
 Consider a countable set of pairwise orthogonal bounded observables $\mX=\{\mO_1,\mO_2,\cdots \}$, $\bO_i=\mO_i-\psi(\mO_i)$ the one-point subtracted operators in state $\psi$, and $f^\psi_{ij}:=f^\psi_{\bO_i\bO_j}$. 
  Let the Fourier transform of the thermal correlators $\hat{f}_{ij}(\omega)$ be (Lebesgue)-measurable functions of frequency $\omega$ for all $i,j$. Then, the operator system $S_\psi^\mX$ generated by $\mX$ is mixing. 
   \end{lemma}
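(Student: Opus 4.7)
The plan is to reduce the claim to a statement about the generators and then invoke the Riemann--Lebesgue Lemma. By Lemma \ref{mixingop}, once I show that every pair of generators $\bO_i,\bO_j\in\mX$ is $\psi$-mixing, the entire operator system $S_\psi^\mX$ automatically becomes mixing. So it suffices to prove that $\lim_{|t|\to\infty} f^\psi_{ij}(t)=0$ for each pair $i,j$, where as usual $f^\psi_{ij}(t)=\int d\omega\, e^{i\omega t}\hat{f}^\psi_{ij}(\omega)$. The classical Riemann--Lebesgue Lemma says exactly that the Fourier transform of any $L^1(\mathbb{R})$ function vanishes at infinity, so the whole task is to upgrade the assumption that $\hat{f}^\psi_{ij}(\omega)$ is Lebesgue-measurable to the statement that it actually belongs to $L^1(\mathbb{R},d\omega)$.

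The upgrade is the content of the second step and uses the pointwise Cauchy--Schwarz inequality
\begin{equation}
|\hat{f}^\psi_{ij}(\omega)|\leq \hat{f}^\psi_{ii}(\omega)^{1/2}\hat{f}^\psi_{jj}(\omega)^{1/2},
\end{equation}
which was already established for the KMS case in \eqref{Cauchyineq} and carries over verbatim to a general faithful normal $\psi$ using the spectral resolution $dP_\omega$ of the modular Hamiltonian $\hat{K}_\psi$ and the alternate inner product. Because $\hat{f}^\psi_{ii}(\omega)\geq 0$ is a nonnegative measurable function whose total integral is
\begin{equation}
\int d\omega\, \hat{f}^\psi_{ii}(\omega)= f^\psi_{ii}(0)=\psi(\bO_i^\dagger\bO_i)<\infty,
\end{equation}
the diagonal Fourier densities lie in $L^1$. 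The integral Cauchy--Schwarz inequality then gives
\begin{equation}
\int d\omega\, |\hat{f}^\psi_{ij}(\omega)| \leq \Bigl(\int d\omega\, \hat{f}^\psi_{ii}(\omega)\Bigr)^{1/2}\Bigl(\int d\omega\, \hat{f}^\psi_{jj}(\omega)\Bigr)^{1/2}<\infty,
\end{equation}
so $\hat{f}^\psi_{ij}\in L^1(\mathbb{R})$, and Riemann--Lebesgue finishes this step.

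The main conceptual point, and where the measurability hypothesis does its work, is in ruling out any pure-point or singular-continuous component in the spectral measure $\hat{f}^\psi_{ij}(\omega)d\omega$. If we only knew that $\hat{f}^\psi_{ij}$ were a tempered distribution — as in the discrete-spectrum formula \eqref{fomega}, which is a countable sum of Dirac deltas — then $f^\psi_{ij}(t)$ would be almost periodic by Lemma \ref{almostperiod} and could not cluster. The assumption that $\hat{f}^\psi_{ij}$ is a Lebesgue-measurable function forces the spectral measure to be absolutely continuous with respect to Lebesgue measure, which is precisely the input Riemann--Lebesgue needs. The one minor technical concern is the promotion of \eqref{Cauchyineq} from the KMS to the general modular setting, but this is immediate once one writes $dP_\omega=Q^\dagger Q$ and uses the alternate inner product of Lemma \ref{cauchyalter}; no essential obstacle arises.
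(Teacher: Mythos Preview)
Your proof is correct and follows essentially the same route as the paper: reduce to the generators via Lemma~\ref{mixingop}, show $\hat{f}^\psi_{ii}\in L^1$ from positivity, measurability, and boundedness of $\bO_i$, then use the pointwise Cauchy--Schwarz bound \eqref{Cauchyineq} together with the integral Cauchy--Schwarz inequality to get $\hat{f}^\psi_{ij}\in L^1$, and finish with Riemann--Lebesgue. The only cosmetic point is that, strictly following the paper's definition $f^\psi_{ab}(t)={}_\psi\langle 1|a_L^\dagger\Delta_\psi^{1/2+it}b_L|1\rangle_\psi$, the quantity $f^\psi_{ii}(0)$ is the alternate inner product $\langle\bO_i|\Delta_\psi^{1/2}|\bO_i\rangle_\psi$ rather than $\psi(\bO_i^\dagger\bO_i)$; both are finite for bounded $\bO_i$, so this does not affect the argument.
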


 \begin{proof} We split the proof into two parts. First, we prove $f^\psi_{ii}(t)$ clusters. Then, we prove that $f^\psi_{ij}(t)$ clusters. Finally, Lemma (\ref{mixingop}) finishes the proof. 
 
  {\bf (1):} Since $\hat{f}_{ii}(\omega)>0$ and it is measurable it integrates to 
  \begin{eqnarray}
   \int d\omega \hat{f}_{ii}(\omega)= f_{ii}(t=0)={}_\psi\braket{a_i|e^{-\pi\hat{K}}a_i}_\psi<\infty, 
  \end{eqnarray}
   where in the last step we have used the fact that $a$ is a bounded operator.
   Therefore, $\hat{f}_{ii}(\omega)\in L^1(\omega)$. Then, it follows from the Riemann-Lebesgue Lemma that in the long time limit the two-point function decays\footnote{The Riemann-Lebesgue Lemma says that an $L^1$-function cannot have infinite frequencies.}
  \begin{eqnarray}
      \lim_{t\to \infty} f_{ii}(t)=0
  \end{eqnarray}
  and we learn that $\mO_i$ clusters in time. Note that we need to remove the one-point function; otherwise, $\hat{f}_{\mO\mO}(\omega)$ is not measurable because its one-point function gives a Dirac delta function at zero frequency.


{\bf (2):} From the first part we know that $\hat{f}_{ii}$ and $\hat{f}_{jj}$ are in $L^1\cap L^2$. Since we assumed $\hat{f}_{ij}$ is Lebesgue measurable it follows from the Cauchy-Schwarz inequality (\ref{Cauchyineq}) that 
\begin{eqnarray}
    \int d\omega |\hat{f}_{ij}(\omega)|\leq \int d\omega |\hat{f}_{ii}(\omega)|^{1/2}|\hat{f}_{jj}(\omega)|^{1/2}\leq f_{ii}^{1/2}(t=0)f_{jj}^{1/2}(t=0)<\infty\ .
\end{eqnarray}
Therefore, $\hat{f}_{ij}(\omega)\in L^1$ and as a result $f_{ij}(t)$ clusters.
It follows from Lemma (\ref{mixingop}) that they form an operator system. This finishes the proof.
\end{proof}

Next, we prove that the correlator $f_{ab}(t)$ falls to zero exponentially fast as $t\to \pm \infty$ (i.e. the operators are exponentially mixing) if and only if $\hat{f}_{ab}(\omega)$ is analytic. 

  \begin{lemma}[Analyticity criterion]\label{smooth}
There exists a small $\epsilon>0$ such that for all $z\leq \epsilon$ we have
 \begin{eqnarray}\label{expfalloff}
     \int dt\: e^{2z |t|}|f_{ab}(t)|^2 <\infty
 \end{eqnarray}
if and only if $\hat{f}_{ab}(\omega)$ is analytic in the strip $|\Im(\omega)|\leq \ep$ and inside the strip we have: 1) $\hat{f}(\omega)\in L^2(\Re(\omega))$ and 2) the supremum over these $L^2$-norms over the strip remains bounded. 
  \end{lemma}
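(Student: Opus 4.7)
The statement is the classical weighted $L^2$ Paley--Wiener theorem for a strip. My plan is to prove both directions using the Plancherel identity paired with contour deformation in the frequency plane.

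For the $(\Leftarrow)$ direction, assume $\hat f_{ab}$ is analytic on $|\Im(\omega)|\leq\epsilon$ with $M^2:=\sup_{|s|\leq\epsilon}\|\hat f_{ab}(\cdot+is)\|_{L^2}^2<\infty$. Starting from Fourier inversion and applying Cauchy's theorem to shift the contour from $\mathbb{R}$ to $\mathbb{R}+is$ for any $|s|\leq\epsilon$ gives
\[
e^{-st}f_{ab}(t)=\frac{1}{2\pi}\int_{\mathbb{R}}e^{i\omega t}\hat f_{ab}(\omega+is)\,d\omega.
\]
The vertical cross-bar contributions at $\Re(\omega)=\pm R$ can be sent to zero along a subsequence $R_n\to\infty$ by a Fubini argument: the finiteness of $\int_{-\epsilon}^{\epsilon}\int_{\mathbb{R}}|\hat f_{ab}(\omega+is')|^2 \,d\omega\,ds'\leq 2\epsilon M^2$ forces arbitrarily large $R$ with $\int_{-\epsilon}^{\epsilon}|\hat f_{ab}(\pm R+is')|^2\, ds'$ arbitrarily small. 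Plancherel applied to the displayed identity yields $\int e^{-2st}|f_{ab}(t)|^2\, dt=\tfrac{1}{2\pi}\|\hat f_{ab}(\cdot+is)\|_{L^2}^2$. Choosing $s=-z$ bounds $\int_0^\infty e^{2z|t|}|f_{ab}|^2\, dt$ and $s=+z$ bounds $\int_{-\infty}^0 e^{2z|t|}|f_{ab}|^2\, dt$; summing gives $\int e^{2z|t|}|f_{ab}(t)|^2\, dt\leq\pi^{-1}M^2<\infty$ for all $0\leq z\leq\epsilon$.

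For the $(\Rightarrow)$ direction, assume $e^{z|t|}f_{ab}(t)\in L^2(\mathbb{R})$ for every $0\leq z\leq\epsilon$. For $|s|\leq\epsilon$, the inequality $|e^{st}|\leq e^{|s||t|}$ gives $e^{st}f_{ab}\in L^2$, so I would define $F(\omega+is)$ to be the $L^2$ Fourier transform of $e^{st}f_{ab}(t)$. Plancherel identifies $F(\cdot+is)\in L^2(\Re\omega)$ with norm $\sqrt{2\pi}\|e^{st}f_{ab}\|_{L^2}\leq\sqrt{2\pi}\|e^{\epsilon|t|}f_{ab}\|_{L^2}$, delivering the two frequency-space requirements. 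To obtain analyticity I would first restrict to $|s|<\epsilon$, where the Cauchy--Schwarz bound
\[
\int e^{|s||t|}|f_{ab}(t)|\,dt\leq \bigl\|e^{-(\epsilon-|s|)|t|}\bigr\|_{L^2}\,\bigl\|e^{\epsilon|t|}f_{ab}\bigr\|_{L^2}<\infty
\]
upgrades $e^{st}f_{ab}$ to $L^1$; then $F(\omega+is)=\int e^{-i(\omega+is)t}f_{ab}(t)\,dt$ is an absolutely convergent integral with jointly holomorphic integrand, and Morera together with Fubini on triangles in the open strip yields analyticity of $F$ there. The $L^2$-continuity of $s\mapsto F(\cdot+is)$ (via dominated convergence with the $z=\epsilon$ majorant) then extends analyticity to the closed strip in the sense of $L^2$ boundary values.

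The main technical obstacle lies in the $(\Rightarrow)$ direction: the raw $L^2$ hypothesis does not make the defining Fourier integral absolutely convergent, which is why I introduce an $L^1$ margin on the open strip via Cauchy--Schwarz and then patch the boundary $|\Im(\omega)|=\epsilon$ by $L^2$-continuity. Everything else reduces to Plancherel and standard contour shifting; once the shifted-contour calculation in $(\Leftarrow)$ and the Morera application in $(\Rightarrow)$ are in place, the equivalence follows.
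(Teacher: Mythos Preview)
Your proof is correct and follows the standard Paley--Wiener argument for strips. The paper, however, does not supply a proof at all: it simply cites Theorem IX.12 of Reed--Simon, Volume~2, which is precisely this classical result. So you have written out the proof that the paper outsources to a reference, and your argument is essentially the one found there (Plancherel plus contour deformation in one direction, Morera plus an $L^1$ margin via Cauchy--Schwarz in the other). The only places one might tighten the exposition are the ones you already flag: in the $(\Leftarrow)$ direction the pointwise contour shift is cleanest after pairing against a Schwartz test function so that all integrals are absolutely convergent before invoking Cauchy's theorem, and in the $(\Rightarrow)$ direction the identification of your $F$ with the analytic continuation of $\hat f_{ab}$ should be stated as a definition (they agree on the real axis, hence everywhere on the open strip by uniqueness). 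None of this affects the substance.
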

  \begin{proof} 
  See theorem IX.12 in Reeds and Simons, Volume 2 \cite{reed1975ii}. 
  \end{proof}
  
We saw that the LR correlator $f^\psi_{ab}(t)$ is an analytic function of the modular time $t$ for $|\Im(t)|< 1/2$. It follows from the theorem above that $|\hat{f}_{ab}(\omega)|$ must fall faster than $e^{- \omega/2}$. In the KMS state, the fall-off is faster than $e^{-\beta \omega/2}$.

The measurability criterion in Lemma \ref{RB} and the analyticity criterion above are examples of a class of {\it uncertainty principles} in functional analysis that relate the decay properties of functions and distributions to the smoothness of their Fourier transforms.
Informally, one can summarize the intuition coming from these theorems as:
\begin{itemize}
\item {\bf Plancherel's theorem:} If $f_{ab}(t)$ is in $L^2$ (i.e. it is square-integrable) its Fourier transform $\hat{f}_{ab}(\omega)$ is also in $L^2$.

\item {\bf Riemann-Lebesgue lemma:} If $\hat{f}_{ab}(\omega)$ is absolutely continuous (in our case Lebesgue measurable) then its Fourier transform $f_{ab}(t)$ is continuous and decays to zero.
    \item If $f_{ab}(t)$ decays faster than any power law in time (i.e. it is a Schwartz function of time), its Fourier transform is also a Schwartz function.
    \item If $f_{ab}(t)$ decays exponentially fast in time its Fourier transform is analytic and decays exponentially fast in real frequency space. 
    \item {\bf Paley-Wiener theorem:} If $f_{ab}(t)$ decays faster than any exponent the Fourier transform is an entire function of imaginary frequencies.
    \item {\bf Paley-Wiener theorem (alternate form):} If $f_{ab}(t)$ has compact support then its Fourier transform is an entire function of exponential type $C$, i.e. there exists a $C$ such that $|\hat{f}(\omega)|\leq e^{C|\omega|}$.
\end{itemize}

\subsection{Mixing operator systems and continuous spectrum}\label{mixingopexample}

So far, we have identified the measurability of $\hat{f}_{ab}(\omega)$ as a sufficient condition for $f_{ab}(t)$ to cluster in time. Here, we construct such operators and observe that the constraints coming from clustering in time on the matrix elements of mixing operators in the energy eigenbasis resemble the condition on simple operators in the Eigenstate Thermalization Hypothesis (ETH).

Let us analyze the measurability criterion of Lemma \ref{RB} more carefully. We saw that a discrete spectrum of $\Omega(E)$ does not allow mixing operators. In the large $N$ or the thermodynamic limit, if the spectrum of $\Omega(E)$ becomes continuous in some range $I=(\bar{E}-\delta,\bar{E}+\delta)$ we explicitly construct mixing operators. Denote the density of state at energy $E$ by $\Omega(E)$. Then,
\begin{eqnarray}
    &&\hat{f}_{ab}(\omega)=\, Z^{-1} \int dEdE'\:e^{-\beta(E+E')/2}\delta\left(\omega - (E-E')\right)\hat{f}_{ab}(E,E')\nn\\
    &&\hat{f}_{ab}(E,E')=\sum_{\alpha=1}^{\Omega(E)}\sum_{\beta=1}^{\Omega(E')} \, a^*(E_\alpha,E_\beta') b(E_\alpha,E_\beta')\,  \, 
\end{eqnarray}
where the sum $\alpha$ and $\beta$ run over microstates of a fixed energy.
A simplifying assumption to make is to consider $a$ and $b$ whose matrix elements become independent of the microstates in the thermodynamic limit and depend only on the energies. For such operators, we can write
\begin{eqnarray}\label{fomega2}
    \hat{f}_{ab}(\omega)&&=\, Z^{-1} \int dEdE'\: \Omega(E)\Omega(E') e^{-\beta(E+E')/2} \, a^*(E,E') b(E,E')\, \delta\left(\omega - (E-E')\right) \, ,\nn\\
        &&=\, Z^{-1} \int dE \:e^{-\beta E}\Omega(E-\omega/2)\Omega(E+\omega/2)\: a^*(E-\omega/2,E+\omega/2)b(E-\omega/2,E+\omega/2)\nn\\
        &&=\, \int d\nu_\omega(E) \: a^*(E-\omega/2,E+\omega/2)b(E-\omega/2,E+\omega/2)
    \end{eqnarray} 
where we have shifted the energy spectrum, and we have defined the discrete measure supported on the countable set of energy differences $\omega=E_m-E_n$:
\begin{eqnarray}\label{numeasure}
    d\nu_\omega(E)=dE \:(Z^{-1} e^{-\beta E})\Omega(E-\omega/2)\Omega(E+\omega/2)\ .
\end{eqnarray}
The measure $d\nu_\omega(E)$ captures the autocorrelation between the density of states at energy $E-\omega/2$ and $E+\omega/2$. It integrates to the Fourier transform of the spectral form factor:
\begin{eqnarray}
    \int d\nu_\omega(E)=Z(\beta)^{-1}\int dt \:e^{i t\omega}|Z(\beta/2+it)|^2\ .
\end{eqnarray}

As long as $d\nu_\omega$ (or equivalently the spectrum of $dP_\omega$) is discrete, the Fourier transform $\hat{f}_{ab}(\omega)$ is a tempered distribution that is a countable sum of Dirac delta functions. However, if $\Omega(E)$ is a measurable function of energy, then $d\nu_\omega(E)$ defined in (\ref{numeasure}) is absolutely continuous with respect to the Lebesgue measure.\footnote{The multiplication of measurable functions is also measurable.}
Then, it follows from (\ref{fomega2}) that further constraints on the matrix elements of $a$ in the energy eigenbasis can ensure that $\hat{f}_{ab}(\omega)$ is a measurable function of $\omega$. 
Based on this intuition, we prove the following lemma.

  \begin{lemma}\label{simpleop}
Consider a modular Hamiltonian with a spectrum that in some range $I=(\bar{E}-\delta, \bar{E}+\delta)$ is measurable, i.e. it is in $L^1(E)$, and a countable set of bounded orthonormal observables $\mathcal{X}=\{\bO_1,\bO_2,\cdots \}$ with vanishing one-point functions where the matrix elements of every $\bO_i$ in the modular energy eigenbasis have the following form
  \begin{eqnarray}\label{simplecond}
\bO_i(E,\alpha;E',\beta) =
  \begin{cases}
   g_i(E-E',E+E')      & \quad \text{if $E\in I$ and $E'\in I$}\\
    0  & \quad \text{Otherwise}
  \end{cases} \, ,
  \end{eqnarray}
where $\alpha$ and $\beta$ run over microstates of energy $E$ and $E'$, respectively, and $g_i(E-E',E+E')$ is a bounded measurable function of both variables, i.e. $g_i\in L^\infty(E)$. Then, the set $\mathcal{X}$ generates a mixing operator system $\mS^\mX$ with the property that for any $a,b \in \mS^\mX$ we have
   \begin{enumerate}
       \item The correlator $f_{ab}(z)$ is an entire function of complex time $z$ that grows at most exponentially fast at large $|z|$, i.e. there exists a constant $C$ such that for all $z\in \mathbb{C}$ we have $|f_{ab}(z)|\leq C e^{\delta |z|}$.

       \item The correlator clusters in complex time $\Im(z)\geq 0$, i.e. $\lim_{|z|\to \infty} f_{ab}(z)=0$.
       
   \end{enumerate}
   
  \end{lemma}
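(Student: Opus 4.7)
The strategy is to compute $\hat f_{ab}(\omega)$ from \eqref{fomega2}, show that it is Lebesgue measurable and has compact support in $\omega$, and then harvest three separate consequences: $\psi$-mixing from Lemma \ref{RB}, entireness together with the exponential growth bound from the fact that the inverse Fourier transform of a compactly supported integrable function is entire of exponential type, and the clustering at real infinity from the Riemann--Lebesgue lemma.

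First I would fix two generic elements $a=\sum_i\alpha_i\bO_i$ and $b=\sum_i\beta_i\bO_i$ of $\mS^\mX$; by linearity their matrix elements inherit the structure of \eqref{simplecond}, vanishing unless both $E,E'\in I$ and depending only on $E\pm E'$ on $I\times I$. Substituting into \eqref{fomega2}, the integrand vanishes unless both $E\pm\omega/2\in I$, which constrains $E$ to lie in $I$ and forces $|\omega|<2\delta$; so $\hat f_{ab}(\omega)$ is compactly supported. On this support the $E$-integrand is a product of the density factors $\Omega(E\pm\omega/2)$, measurable by hypothesis on $I$, with bounded measurable matrix-element functions built linearly from the $g_i$, so Fubini produces a measurable and bounded function of $\omega$. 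Consequently $\hat f_{ab}\in L^1\cap L^\infty$, and Lemma \ref{RB} yields $\psi$-mixing for the generators; Lemma \ref{mixingop} then extends the conclusion to all of $\mS^\mX$.

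The analytic properties follow from compact support. Writing
\begin{equation}
f_{ab}(z)=\int\hat f_{ab}(\omega)\,e^{-i\omega z}\,d\omega
\end{equation}
as a finite-interval integral that converges absolutely for every $z\in\mathbb{C}$ and admits differentiation under the integral sign, one reads off that $f_{ab}(z)$ is entire; the pointwise bound $|e^{-i\omega z}|=e^{\omega\Im z}$ over the compact support of $\hat f_{ab}$ gives the claimed exponential growth estimate. For the clustering as $|z|\to\infty$ with $\Im z\ge 0$, I would fix $y=\Im z$ and note that $\hat f_{ab}(\omega)\,e^{\omega y}$ remains compactly supported and bounded, hence lies in $L^1$; Riemann--Lebesgue then yields $f_{ab}(t+iy)\to 0$ as $|t|\to\infty$ at fixed $y$. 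The main step I expect to require additional care is upgrading this to a uniform limit as $|z|\to\infty$ throughout the upper half-plane, since the factor $e^{\omega y}$ grows for $\omega>0$ as $y\to\infty$; I anticipate handling this either by restricting the limit to sectors with $|\Im z|$ bounded, or by using further structural properties of $\hat f_{ab}$ near the upper edge of its support to suppress the growth.
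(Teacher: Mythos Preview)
Your approach mirrors the paper's: show $\hat f_{ab}$ is measurable with compact support in $\omega$, invoke Lemma~\ref{RB} and Lemma~\ref{mixingop} for the mixing operator system, Paley--Wiener for part~(1), and Riemann--Lebesgue for part~(2). Your caution about the uniform limit in~(2) is well placed---the paper's proof of~(2) is equally terse (it only appeals to ``a variation of the Riemann--Lebesgue lemma for a Laplace transform''), and as literally stated the claim along the positive imaginary axis would require $\hat f_{ab}$ to vanish suitably near the upper edge of its $\omega$-support, so your instinct to restrict to bounded $\Im z$ or to impose extra decay is the honest way to make~(2) precise.
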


  \begin{proof} Since $\Omega(E)$ is a measurable function of $E$ the measure $d\nu_\omega(E)$ in (\ref{numeasure}) is measurable. Since both $g_i$ and $g_j$ are measurable functions of both variables, their product is also measurable, and hence the integrand in (\ref{fomega2}) is a measurable function of two variables $E$ and $\omega$. Finally, we observe that the integral over $E$ in (\ref{fomega2}) is bounded because $a$ and $b$ are bounded operators
  \begin{eqnarray}
      |\hat{f}_{ab}(\omega)|\leq |a||b|\int d\nu_\omega(E)\leq |a||b|\ .
  \end{eqnarray}
  As a result, $\hat{f}_{ab}(\omega)$ is measurable and Lemma \ref{RB} implies that the correlator $f_{ab}(t)$ clusters. Then, it follows from the lemma (\ref{mixingop}) that the set $\mX$ generates a mixing operator system $S^\mX$. 

  Since $|\hat{f}_{ab}(\omega)|$ is bounded, it is in $L^2(I)$. Then the statement {\bf (1)} is a consequence of the Paley–Wiener theorem for compactly supported functions. It says that the Fourier transform of a function that belongs to $L^2(I)$ for some interval $I$ of width $2\delta$ is an entire function of exponential type $\delta$. 
  
  {\bf (2):} This follows from the fact that $\hat{f}_{ab}(\omega)$ is a measurable function of $\mathbb{R}_+$ by the variation of the Riemann-Lebesgue lemma for a Laplace transform. 
  \end{proof}

The second statement of the above lemma implies that for operators $\bf{a}_i$ with the measurability criterion in (\ref{simplecond}) the correlator $f^\psi_{aa}(t)$ is an entire function of $t$, and in particular, for all real $\tau$ we have $\|a_\psi(i\tau)\ket{1}_\beta\|<\infty$. When $\Omega(E)$ is measurable, such operators belong to a $*$-subalgebra that is called the {\it Tomita subalgebra} of observables. Acting on the ``vacuum" they create vectors that are entire in complex modular time.

Note that a function of two variables that is continuous in each variable separately is a measurable function of both. Formulated in terms of continuity,
the condition (\ref{simplecond}) for the matrix elements of mixing observables $\mO$ in the energy eigenbasis in Lemma \ref{simpleop} resembles the Eigenstate Thermalization Hypothesis (ETH). This resemblance is not accidental. When the spectrum of Hamiltonian is discrete (type I algebra), the obstruction to the clustering of $f_{ab}(t)$ comes from the existence of normalizable energy eigenvectors. The existence of energy eigenstates implies periodicity in time and Poincar\'e recurrences. A piece of coal burning in a closed quantum system (type I algebra) never truly thermalizes because the diagonal matrix elements $\braket{E|a|E}$ in a microcanonical ensemble do not evolve over time. ETH was invented to solve the problem of {\it thermal information loss} by focusing attention on a set of {\it simple operators} that cluster in time at the thermodynamic limit. The resolution proposed by ETH is that in a narrow-width microcanonical ensemble, the diagonal matrix elements of simple operators {\it} are smooth functions of only the overall energy $\bar{E}$ and independent of the microscopic detail of the microstate. We postpone a more detailed discussion of this connection to future work.

\subsection{Mixing algebras are type III$_1$ factors}

In this subsection, we consider the case where the mixing property is preserved under multiplication. Starting with a set of mixing operators $\mX$, we are assuming that, by addition and multiplication, we obtain an algebra of mixing operators in $\psi$ that we refer to as a {\it mixing subalgebra of $\psi$ generated by $\mX$} and denote it by $\mM^\mX_\psi$. 
As we will see, this occurs naturally in any Gaussian (quasi-free) state. This includes free field theories and large $N$ theories in the limit of Generalized Free Fields ($N\to \infty$). The Isserlis Lemma and Lemma \ref{higherpointmixing} below clarify how in Gaussian states, Wick's theorem allows the multiplication of mixing operators to remain mixing. We postpone a discussion of the algebra of mixing operators for a general chaotic system in the thermodynamic limit of infinite volume to upcoming work. 

Consider a set of random variables $\{\varphi(t)\}_{t\in X}$ in a multivariate normal distribution (state) with a one-point and connected two-point function (covariance matrix)
\begin{eqnarray}\label{covariance}
   && {}_\psi\braket{1|\varphi(t)|1}_\psi=\mu(t)\nn\\
   &&\Sigma_\psi(t,t')={}_\psi\braket{1|(\varphi(t)-\mu(t))(\varphi(t')-\mu(t'))|1}_\psi\ .
\end{eqnarray}
The characteristic function of this distribution is the expectation value of the random variable 
$\varphi(f)=\braket{\varphi|f}=\sum_{t\in X}f(t) \varphi(t)$: 
\begin{eqnarray}\label{CharGauss}
    {}_\psi\braket{1|e^{i \varphi(f)}|1}_\psi=e^{-\frac{1}{2}\braket{f|\Sigma_\psi f}+i\braket{\mu|f}}\ .
\end{eqnarray}
We define zero-mean random variables by removing the one-point functions
\begin{eqnarray}
    \varPhi(t)=\varphi(t)-{}_\psi\braket{1|\varphi(t)|1}_\psi\ .
\end{eqnarray}
Then, the Isserlis Lemma below establishes that the higher point functions factor in terms of various contractions of the two-point functions.

\begin{lemma}[Isserlis Lemma]\label{isser}
Consider a set of random variables $\{\varPhi(t)\}_{t\in X}$ in a multi-variate Gaussian state with zero mean. All the odd-point functions of $\varPhi(t)$ vanish, and all the even-point functions are given by the Wick contractions:
\begin{eqnarray}
    \psi(\varPhi(t_1)\cdots \varPhi(t_{2n}))=\sum_{p\in pairings}\prod_{(i,j)\in p}\psi(\varPhi(t_i)\varPhi(t_j))
\end{eqnarray}
where the sum is over all pairings of the $2n$ points.
\end{lemma}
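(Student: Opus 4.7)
The plan is to extract $n$-point functions from the Gaussian generating function \eqref{CharGauss}. For the centered variables $\varPhi(t)$ one has $\mu=0$, so \eqref{CharGauss} reduces to $\psi(e^{i\varPhi(f)})=\exp(-\tfrac{1}{2}\langle f|\Sigma_\psi f\rangle)$. To resolve the ordered product on the left-hand side of the Isserlis formula---rather than a totally symmetrized moment of $\varPhi(f)$---I would introduce $n$ independent parameters $\lambda_1,\ldots,\lambda_n$ and study the ordered generating function
\begin{eqnarray}
F(\lambda_1,\ldots,\lambda_n) \, := \, \psi\bigl(e^{i\lambda_1\varPhi(t_1)}e^{i\lambda_2\varPhi(t_2)}\cdots e^{i\lambda_n\varPhi(t_n)}\bigr),
\end{eqnarray}
from which the desired correlator is recovered via $\psi(\varPhi(t_1)\cdots\varPhi(t_n))=(-i)^n\,\partial_{\lambda_1}\cdots\partial_{\lambda_n}F(\lambda)\big|_{\lambda=0}$.

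The next step is to show that $F(\lambda)$ is itself the exponential of a quadratic form in $\lambda$. The Gaussian (quasi-free) structure guarantees that each commutator $[\varPhi(t_i),\varPhi(t_j)]$ is central, so the Baker--Campbell--Hausdorff expansion terminates at the first commutator:
\begin{eqnarray}
\prod_{k=1}^n e^{i\lambda_k\varPhi(t_k)} \, = \, \exp\Bigl(i\sum_k\lambda_k\varPhi(t_k)\Bigr)\cdot \exp\Bigl(-\tfrac{1}{2}\sum_{i<j}\lambda_i\lambda_j\,[\varPhi(t_i),\varPhi(t_j)]\Bigr).
\end{eqnarray}
Applying \eqref{CharGauss} to the first factor produces a symmetric quadratic in the $\lambda$'s built from $S_{ij}=\tfrac{1}{2}\psi(\{\varPhi(t_i),\varPhi(t_j)\})$; combining with the central phase and using $\psi(\varPhi(t_i)\varPhi(t_j))=S_{ij}+\tfrac{1}{2}[\varPhi(t_i),\varPhi(t_j)]$ one obtains
\begin{eqnarray}
F(\lambda) \, = \, \exp\Bigl(-\sum_{i<j}\lambda_i\lambda_j\,\psi(\varPhi(t_i)\varPhi(t_j)) \, - \, \tfrac{1}{2}\sum_i\lambda_i^2\,\psi(\varPhi(t_i)^2)\Bigr).
\end{eqnarray}

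Finally, differentiating an exponential of a quadratic form and setting $\lambda=0$ reduces combinatorially to a sum over perfect matchings of $\{1,\ldots,n\}$, each pair $(i,j)$ contributing the coefficient of $\lambda_i\lambda_j$ in the exponent. For odd $n$ no perfect matching of an odd-sized set exists, so all odd moments vanish; for even $n=2k$ the coefficients reassemble into $\sum_{p}\prod_{(i,j)\in p,\,i<j}\psi(\varPhi(t_i)\varPhi(t_j))$, precisely the stated Wick formula, with the ordering inside each pairwise factor inherited from the $i<j$ convention fixed in the quadratic form. The main obstacle is justifying the centrality of $[\varPhi(t_i),\varPhi(t_j)]$: in the classical probabilistic setting all variables commute and this is automatic, while in the quantum setting it is a hypothesis built into the definition of a quasi-free state (CCR algebra with a $c$-number symplectic form). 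One could alternatively bypass BCH by expanding both sides of \eqref{CharGauss} directly in powers of $f$ and matching coefficients of $f(t_1)\cdots f(t_n)$, but that route only controls totally symmetrized moments and would require a further argument to recover the ordered identity in the form written in the Lemma.
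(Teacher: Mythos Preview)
Your proposal is correct and in fact supplies substantially more than the paper does: the paper's own ``proof'' is a one-line citation to Isserlis' 1918 article and nothing else. Your argument via the multi-parameter generating function $F(\lambda)$, the truncated Baker--Campbell--Hausdorff formula relying on central commutators, and the perfect-matching combinatorics of differentiating $\exp(\text{quadratic})$ is a standard and complete route to the result.

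One point worth flagging: the original Isserlis reference treats \emph{commuting} (classical) Gaussian random variables, where the ordering of the product is irrelevant and no BCH step is needed. The Lemma as stated in the paper, however, is immediately applied to quantum generalized free fields, where the $\varPhi(t_i)$ do not commute and the ordered $2n$-point function genuinely differs from its symmetrizations. Your derivation handles precisely this non-commutative case by separating the symmetric part $S_{ij}$ (supplied by \eqref{CharGauss}) from the antisymmetric commutator (supplied by BCH) and reassembling them into the ordered two-point function $\psi(\varPhi(t_i)\varPhi(t_j))$. In that sense your argument is actually better adapted to the paper's quasi-free setting than the cited classical reference; the only additional hypothesis you invoke---centrality of $[\varPhi(t_i),\varPhi(t_j)]$---is exactly the defining property of the Weyl/CCR algebra used throughout Section~\ref{GFFsec}, so it is available in context.
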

\begin{proof}
See \cite{isserlis1918formula} for proof.    
\end{proof}

\begin{lemma}\label{higherpointmixing}
   Consider a set of random variables $\{\varPhi(t)\}_{t\in \mathbb{R}}$ in a multi-variate Gaussian state with zero mean. Assume that the two-point function such that for all $t_1,t_2\in \mathbb{R}$ we have 
   \begin{eqnarray}\label{2ptvanish}
   &&\psi(\varPhi(t_1)\varPhi(t_2))=\psi(\varPhi(t_1-t_2)\varPhi(0))\nn\\
   &&    \lim_{s\to \infty}\psi(\varPhi(t_1+s)\varPhi(t_2))=0\ .
   \end{eqnarray}
  Then, for any $a(t_1,\cdots t_n)=\varPhi(t_1)\cdots \varPhi(t_n)$ we have
   \begin{eqnarray}\label{higherpt}
       \lim_{s\to \infty}\psi(a(t_1+s, \cdots t_k+s)a(t_{k+1}, \cdots, t_{2n}))=\psi(a(t_1,\cdots t_k))\psi(a(t_{k+1},\cdots t_{2n}))\ .
   \end{eqnarray}
\end{lemma}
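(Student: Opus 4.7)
The plan is to expand both sides using Wick's theorem (Lemma \ref{isser}) and classify the resulting pairings according to whether they contain any ``cross'' contraction pairing an early time $t_i+s$ (with $i\le k$) with a late time $t_j$ (with $j>k$). Cross factors die as $s\to\infty$ by the mixing hypothesis in \eqref{2ptvanish}, while non-cross pairings factorize cleanly and become $s$-independent by stationarity.

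First I would apply Isserlis to the left-hand side:
\begin{align*}
\psi\bigl(\varPhi(t_1+s)\cdots \varPhi(t_k+s)\varPhi(t_{k+1})\cdots \varPhi(t_{2n})\bigr) \, = \, \sum_{p\in P_{2n}} \prod_{(i,j)\in p,\, i<j} \psi\bigl(\varPhi(t_i')\varPhi(t_j')\bigr)\,,
\end{align*}
where $t_i'=t_i+s$ if $i\le k$ and $t_i'=t_i$ otherwise, and $P_{2n}$ denotes the finite set of pairings of $\{1,\dots,2n\}$. Partition $P_{2n}=P^0\cup P^{\ge 1}$ according to whether the pairing contains zero or at least one cross pair.

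Next, for each $p\in P^{\ge 1}$ the associated product contains a factor $\psi(\varPhi(t_i+s)\varPhi(t_j))$ with $i\le k<j$ that vanishes in the limit by \eqref{2ptvanish}. The other factors are bounded uniformly in $s$: within-left factors $\psi(\varPhi(t_i+s)\varPhi(t_j+s))$ equal $\psi(\varPhi(t_i)\varPhi(t_j))$ by stationarity, within-right factors do not depend on $s$, and any further cross factors are bounded on $[0,\infty)$ because a continuous function with a finite limit at infinity is bounded on a half-line. Since $P^{\ge 1}$ is finite, its total contribution tends to zero.

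For the non-cross pairings $P^0$: if $k$ is odd the set is empty, so the left-hand side tends to zero; the right-hand side also vanishes since $\psi(a(t_1,\dots,t_k))=0$ by the odd-order case of Lemma \ref{isser}. If $k$ is even, each $p\in P^0$ decomposes uniquely as $(p_L,p_R)$ where $p_L$ pairs $\{1,\dots,k\}$ and $p_R$ pairs $\{k+1,\dots,2n\}$; by stationarity its contribution is $s$-independent and equals $\prod_{(i,j)\in p_L}\psi(\varPhi(t_i)\varPhi(t_j))\,\prod_{(i,j)\in p_R}\psi(\varPhi(t_i)\varPhi(t_j))$. Summing over all such $(p_L,p_R)$ and re-applying Lemma \ref{isser} to each group identifies the sum with $\psi(a(t_1,\dots,t_k))\,\psi(a(t_{k+1},\dots,t_{2n}))$, which is the desired right-hand side of \eqref{higherpt}.

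The main (and rather mild) obstacle is verifying uniform boundedness of the cross factors as $s$ varies, which follows from the existence of the limit in \eqref{2ptvanish}. Given that, the finiteness of $P_{2n}$ lets us exchange the limit with the sum term by term, and the remaining identification is purely combinatorial.
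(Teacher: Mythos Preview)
Your proposal is correct and follows essentially the same approach as the paper's proof: partition the Wick pairings into those with and without cross contractions between the two groups, use stationarity for the non-cross pairings to recover the factorized right-hand side, and kill the cross pairings via \eqref{2ptvanish}. You are in fact more careful than the paper on two points it glosses over --- the uniform boundedness of the remaining factors when taking $s\to\infty$, and the separate handling of odd $k$.
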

\begin{proof}
    It follows from Lemma \ref{isser} that the 4-point function is
\begin{eqnarray}
\lim_{s\to \infty}\psi(\varPhi(t_1+s)\varPhi(t_2+s)\varPhi(t_3)\varPhi(t_4))&&=2\lim_{s\to \infty}\psi(\varPhi(s)\varPhi(0))^2+\psi(\varPhi(t_1)\varPhi(t_2))\psi(\varPhi(t_3)\varPhi(t_4))\nn\\
&&=\psi(\varPhi(t_1)\varPhi(t_2))\psi(\varPhi(t_3)\varPhi(t_4))\ .
\end{eqnarray}
Therefore, the 4-point function clusters. More generally, consider any $2n$-point function split as $k$ points (group 1) and $2n-k$ points (group 2) separated by a large $s$. From Lemma \ref{isser} we know that to compute (\ref{higherpt}) we have to sum over all pairings. There are two types of such pairings. Those that include no cross pairings between groups (1) and (2). They contribute to the right-hand side of (\ref{higherpt}). And those that have at least one pairing between the two. All such terms vanish by (\ref{2ptvanish}). As a result, all $2n$-point functions also cluster.
\end{proof}

By definition, if a set of mixing observables form an algebra it is a $*$-algebra. Since we are interested in C$^*$-algebras, in this work, we only consider bounded mixing operators.\footnote{In the examples we will discuss, the distribution valued operator $\varphi(t)$ can be viewed as an unbounded mixing operator that generates a $*$-algebra of unbounded mixing operators.} First, we show that if a set of bounded observables $\mX=\{\mO_1,\cdots \}$ generates an algebra of mixing operators, we can close this algebra in the norm topology or the weak topology preserving the mixing property to get C$^*$ and a von Neumann algebra of mixing operators that we denote respectively by, $\mB^\mX_\psi$ and $\mM^\mX_\psi$. Then, we present a general operator algebra result, Lemma \ref{simpleIII1}, that establishes if the set of all mixing operators, in some thermodynamic or large $N$ limit, forms a von Neumann algebra (the maximal mixing subalgebra) it is a type III$_1$ factor with a trivial centralizer. In other words, the maximal mixing von Neumann subalgebra of $\psi$ (i.e. $\mM_\psi\subset \mA$) has trivial overlap with the centralizer $\mA^\psi$ of $\psi$: $\mM_\psi\cap \mA^\psi=\lambda 1$.

\begin{lemma}[Mixing subalgebras]\label{mixingvN}
Consider $\psi$ a normal state of a von Neumann algebra $\mA$. If a countable set of observables $\mX=\{\bO_1,\bO_2,\cdots \}$ with one-point functions removed, i.e. $\psi(\bO_i)=0$, generates a $\psi$-mixing subalgebra of $\mA$, it can be completed to a $\psi$-mixing C$^*$-algebra $\mB^\mX_\psi$ and a $\psi$-mixing von Neumann algebra $\mM^\mX_\psi$.
Furthermore, the set of all mixing operators in $\mA$ is a von Neumann subalgebra $\mM_\psi\subset \mA$. 
\end{lemma}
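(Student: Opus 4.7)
The plan is to complete $\mathcal{C}^\mX_\psi$, the $*$-algebra generated by $\mX$ (which is mixing by hypothesis), first in the operator norm to obtain $\mB^\mX_\psi$, and then in the weak operator topology to obtain $\mM^\mX_\psi$. The strategy in both stages is the uniform-in-$t$ convergence argument of Lemma \ref{mixingop}: the Cauchy-Schwarz bound (\ref{cauchyalter}) reduces the task to showing that $f^\psi_{(a-a_n)(a-a_n)}(0) \to 0$ as the approximating net $a_n$ converges to $a$ in the appropriate topology.

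For the norm closure, the key ingredient is the inequality
\[
f^\psi_{cc}(0) \,=\, \braket{1|c^\dagger \Delta_\psi^{1/2} c|1}_\psi \,\leq\, \|c \ket{1}_\psi\|\cdot \|c^\dagger \ket{1}_\psi\| \,\leq\, \|c\|^{2},
\]
obtained by writing $\Delta_\psi^{1/2} c \ket{1}_\psi = J c^\dagger \ket{1}_\psi$ with $J$ the antiunitary modular conjugation and applying Cauchy-Schwarz in $\mH_\psi$ together with $\|J v\| = \|v\|$. Hence $\|a - a_n\| \to 0$ implies $f^\psi_{(a-a_n)(a-a_n)}(0) \to 0$, and the argument of Lemma \ref{mixingop} applies verbatim to conclude $f^\psi_{ab}(t) \to 0$ for every $a, b \in \mB^\mX_\psi$.

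For the von Neumann closure $\mM^\mX_\psi := (\mB^\mX_\psi)''$, I would invoke the Kaplansky density theorem, which provides for each $a \in \mM^\mX_\psi$ a bounded net $(a_\alpha) \subset \mB^\mX_\psi$ with $\|a_\alpha\| \leq \|a\|$ converging to $a$ in the strong-$*$ operator topology. Strong-$*$ convergence evaluated on the cyclic-separating vector $\ket{1}_\psi$ (available since $\psi$ is faithful and normal) gives both $\|(a - a_\alpha)\ket{1}_\psi\| \to 0$ and $\|(a - a_\alpha)^\dagger \ket{1}_\psi\| \to 0$; substituting these into the displayed inequality yields $f^\psi_{(a-a_\alpha)(a-a_\alpha)}(0) \to 0$. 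On norm-bounded subsets the strong-$*$ topology is metrizable (the GNS Hilbert space being separable because $\psi$ is normal and $\sigma$-finite), so one may pass to sequences and rerun the uniform convergence argument to conclude $f^\psi_{ab}(t) \to 0$ for all $a, b \in \mM^\mX_\psi$. The ``furthermore'' statement is then automatic: under the standing assumption of this subsection, the set of all $\psi$-mixing operators in $\mA$ is closed under multiplication and hence an algebra, and by the preceding argument its von Neumann closure is again mixing, forcing it to coincide with itself.

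The main obstacle is the second step, since weak operator convergence provides no direct control over $f^\psi_{ab}(t)$ in $t$. Kaplansky density is essential: it reduces the inherently net-based weak closure to bounded strong-$*$ approximation, which is precisely what supplies the Hilbert-norm control of both $a_\alpha \ket{1}_\psi$ and $a_\alpha^\dagger \ket{1}_\psi$ needed to run the Cauchy-Schwarz estimate on the single distinguished vector $\ket{1}_\psi$, and hence to transport the mixing property through the closure.
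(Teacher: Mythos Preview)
Your argument is correct and follows the same strategy as the paper---pass mixing through limits of approximants---but you supply the analytic control that the paper's proof leaves implicit. The paper simply writes
\[
\psi(ab(t))=\lim_{n,m}\psi(a_n b_m(t))=\lim_{n,m}\psi(a_n)\psi(b_m(t))=\psi(a)\psi(b),
\]
citing normality of $\psi$ to pull limits through; the interchange of $t\to\infty$ with $n,m\to\infty$ is never isolated. Your version makes that interchange honest by establishing uniform-in-$t$ convergence via the modular bound $f^\psi_{cc}(0)\le\|c\ket{1}_\psi\|\,\|c^\dagger\ket{1}_\psi\|$ (from $\Delta_\psi^{1/2}c\ket{1}_\psi=Jc^\dagger\ket{1}_\psi$), and for the weak closure you correctly identify that Kaplansky density is what converts double-commutant membership into bounded strong-$*$ approximation, which is exactly the topology needed to feed the estimate. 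This is a genuine improvement: normality alone controls $\psi$ against operator limits at fixed $t$, not the swap with $t\to\infty$, so your route through the Lemma~\ref{mixingop} mechanism is the rigorous one.
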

\begin{proof} By construction, the mixing subalgebra is a $*$-algebra. The spectral radius of each operator $a$ provides a norm that satisfies the C$^*$-property $\|a\|_\infty=\|a^\dagger\|_\infty$; see appendix \ref{app:op}. For normal states $\psi$, two sequences of mixing operators $\{a_n\},\{b_m\}$ generated by $\mX$ and their strong operator limits $\lim_{n\to\infty} \|a_n-a\|=0$ and $\lim_{m\to \infty}\|b_m-b\|=0$ we have 
\begin{eqnarray}
     \psi(a b(t))=\psi(\lim_{n,m} a_n b_m(t))= \lim_{n,m}\psi( a_n b_m(t))=\lim_{n,m} \psi(a_n) \psi(b_m(t))=\psi(a)\psi(b)\ .
\end{eqnarray}
We can take the limits outside because the state is normal.
The limit operators $a$ and $b$ are also mixing. Therefore, closing $\mA^\mX$ in norm-topology gives a C$^*$-algebra of $\psi$-mixing operators $\mB^\mX_\psi$ generated by $\mX$. Repeating the argument above for weak-operator limit says that the double commutant of the C$^*$-algebra above gives a $\psi$-mixing von Neumann algebra $\mM^\mX_\psi$.
There is a maximal $\psi$-mixing von Neumann algebra $\mM_\psi\subset \mA$ corresponding to all bounded mixing operators in $\mA$. 
\end{proof}

Consider the maximal $\psi$-mixing von Neumann subalgebra $\mM_\psi\subset \mA$. The restriction of a normal state on $\mA$ to this subalgebra is a normal state on $\mM_\psi$. Therefore, we can define the modular flow of the maximal $\psi$-mixing subalgebra. For all $\psi$-mixing operators $a\in \mM_\psi$ we have the flow:
\begin{eqnarray}
    a_{\psi_\mM}(t):=\Delta_{\mM;\psi}^{it} a\Delta_{\mM;\psi}^{-it}\ .
\end{eqnarray}
A classic result of operator algebra (the Lemma below) implies that, on $\psi$-mixing operators, the modular flow of $\mM_\psi$ is the same as the modular flow of $\mA$. 

\begin{lemma}\label{Takesaki}
    Consider the inclusion of von Neumann algebras $1\in \mM\subset \mA$ and a normal faithful state $\psi$ on $\mA$ whose modular flow preserves $\mM$, i.e. $\Delta_{\mA}^{it}\mM\Delta_{\mA}^{-it}\in \mM$ for all $t\in \mathbb{R}$. Then, for all $a\in\mM$ we have
    $\Delta_\mM^{it}a\Delta_{\mM}^{-it}=\Delta_{\mA}^{it}a\Delta_{\mA}^{-it}$.  
\end{lemma}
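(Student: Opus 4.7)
The plan is to invoke the uniqueness of the modular flow via the KMS condition. Recall that the modular automorphism group $\sigma_t^\psi$ associated to a faithful normal state $\psi$ on a von Neumann algebra is characterized uniquely (up to the usual conventions) as the strongly continuous one-parameter group of $*$-automorphisms that preserves $\psi$ and satisfies the modular KMS condition at inverse temperature $1$: for each pair of elements $a,b$ in the algebra, there is a function $F_{a,b}$ bounded and continuous on the strip $\{0\le \Im(z)\le 1\}$, holomorphic in the interior, with $F_{a,b}(t)=\psi(\sigma_t^\psi(a)b)$ and $F_{a,b}(t+i)=\psi(b\,\sigma_t^\psi(a))$. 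My strategy is to restrict the modular flow of $\mathcal{A}$ to $\mathcal{M}$ and verify that this restriction meets all the defining properties for the modular flow of $(\mathcal{M},\psi|_\mathcal{M})$; uniqueness then gives the equality.

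First I would set $\alpha_t := \mathrm{Ad}(\Delta_\mathcal{A}^{it})|_\mathcal{M}$ and check that this defines a strongly continuous one-parameter group of $*$-automorphisms of $\mathcal{M}$. The hypothesis $\Delta_\mathcal{A}^{it}\mathcal{M}\Delta_\mathcal{A}^{-it}\subseteq \mathcal{M}$ applied to both $t$ and $-t$ gives $\alpha_t(\mathcal{M})=\mathcal{M}$, and strong continuity of $\alpha_t$ descends from that of $\sigma_t^{\psi,\mathcal{A}}$. Next, since $\psi$ is faithful and normal on $\mathcal{A}$, its restriction $\psi|_\mathcal{M}$ is faithful and normal on the von Neumann subalgebra $\mathcal{M}$, and the identity $\psi\circ\sigma_t^{\psi,\mathcal{A}}=\psi$ restricts to $(\psi|_\mathcal{M})\circ\alpha_t=\psi|_\mathcal{M}$.

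The KMS condition for $\alpha_t$ with respect to $\psi|_\mathcal{M}$ is then inherited verbatim: for $a,b\in\mathcal{M}$, the analytic function $F_{a,b}$ provided by the KMS property of $\sigma_t^{\psi,\mathcal{A}}$ on $\mathcal{A}$ already satisfies
\begin{equation}
F_{a,b}(t)=\psi(\sigma_t^{\psi,\mathcal{A}}(a)b)=(\psi|_\mathcal{M})(\alpha_t(a)b),\qquad F_{a,b}(t+i)=(\psi|_\mathcal{M})(b\,\alpha_t(a)),
\end{equation}
because the arguments lie in $\mathcal{M}$ throughout. Thus $\alpha_t$ is a strongly continuous one-parameter group of $*$-automorphisms of $\mathcal{M}$ that preserves $\psi|_\mathcal{M}$ and obeys the KMS condition with respect to it. By the uniqueness clause of Tomita–Takesaki theory, $\alpha_t$ must coincide with the modular flow $\sigma_t^{\psi,\mathcal{M}}=\mathrm{Ad}(\Delta_\mathcal{M}^{it})$, giving $\Delta_\mathcal{M}^{it}a\Delta_\mathcal{M}^{-it}=\Delta_\mathcal{A}^{it}a\Delta_\mathcal{A}^{-it}$ for all $a\in\mathcal{M}$.

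The main obstacle is not a deep technical point but the invariance hypothesis itself: without $\Delta_\mathcal{A}^{it}\mathcal{M}\Delta_\mathcal{A}^{-it}\subseteq\mathcal{M}$, $\alpha_t$ would not even be well-defined as a flow on $\mathcal{M}$, and indeed Takesaki's theorem shows this invariance is equivalent to the existence of a $\psi$-preserving normal conditional expectation $\mathcal{A}\to\mathcal{M}$. Once invariance is granted, the proof reduces to the bookkeeping above, with the KMS uniqueness theorem doing all the heavy lifting.
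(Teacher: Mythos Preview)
Your proposal is correct and follows exactly the same approach as the paper: both argue that the restriction of the modular flow of $\mathcal{A}$ to $\mathcal{M}$ satisfies the KMS condition for $\psi|_{\mathcal{M}}$, and then invoke the uniqueness of the modular automorphism group to identify it with the modular flow of $\mathcal{M}$. Your version is simply more explicit about checking the auxiliary hypotheses (strong continuity, invariance of the state, two-sided inclusion), while the paper states the uniqueness result and cites the literature.
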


\begin{proof} The Lemma follows from the following uniqueness result. The modular flow of $\mM$ is the unique group of $*$-automorphisms of $\mM$ that satisfies the KMS condition (see \cite{stratila2020modular} page 18, or \cite{struatilua2019lectures} 10.17):
\begin{eqnarray}
   \forall a,b\in \mM:\qquad \psi(a b_{\psi_\mM}(t+i))=\psi(b_{\psi_\mM}(t) a)\ .
\end{eqnarray}
\end{proof}

In Lemma \ref{maximalmixingsystem}, we proved that the maximal mixing subalgebra $\mM_\psi\subset \mA$ has the key property that it is preserved under the modular flow of $\mA$. As a result, we have the following Corollary:

\begin{corollary}\label{maximalmixingflow}
    Consider a von Neumann algebra $\mA$, a faithful normal state $\psi$, and the maximal von Neumann algebra of $\psi$-mixing operators $\mM_\psi$. The action of the modular flow of $\mA$ restrict to mixing operators is the same as that of the modular flow of $\mM_\psi$.
\end{corollary}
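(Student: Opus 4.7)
The plan is to reduce the corollary to a direct application of Lemma~\ref{Takesaki}. That lemma requires only one nontrivial hypothesis, namely that $\mM_\psi$ be globally invariant under the modular flow $\alpha^t_\psi$ of the ambient algebra $\mA$; once invariance is in hand, the KMS-uniqueness of the modular automorphism group does the rest.

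First I would verify that $\alpha^t_\psi(\mM_\psi) \subseteq \mM_\psi$ for every $t \in \mathbb{R}$. Fix $a \in \mM_\psi$. By the characterization in Lemma~\ref{mixingvN}, $a$ is a bounded $\psi$-mixing element of $\mA$. Part~(1) of Lemma~\ref{maximalmixingsystem}, applied to the maximal mixing operator system $\mS_\psi$ that contains $\mM_\psi$, tells us that $\alpha^t_\psi(a)$ is again mixing. Since $\alpha^t_\psi$ is implemented by the unitary $\Delta_\psi^{it}$, one has $\|\alpha^t_\psi(a)\|_\infty = \|a\|_\infty$, so $\alpha^t_\psi(a)$ is bounded. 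Hence $\alpha^t_\psi(a)$ is a bounded mixing element of $\mA$ and thus belongs to $\mM_\psi$ by its maximality. The same reasoning applied to $-t$ gives $\alpha^t_\psi(\mM_\psi) = \mM_\psi$.

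Next I would note that $\psi|_{\mM_\psi}$ is a faithful normal state on $\mM_\psi$: normality is automatic under restriction to a von Neumann subalgebra, and faithfulness descends from $\mA$ because $\mM_\psi$ is unital (the identity is trivially mixing). The inclusion $1 \in \mM_\psi \subset \mA$ therefore meets every hypothesis of Lemma~\ref{Takesaki}, which yields
\begin{equation}
\Delta_{\mM_\psi}^{it} a \Delta_{\mM_\psi}^{-it} \;=\; \Delta_{\mA}^{it} a \Delta_{\mA}^{-it}
\end{equation}
for all $a \in \mM_\psi$ and all $t \in \mathbb{R}$, which is precisely the statement of the corollary.

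The only conceptual step that deserves care is the transition from the operator-system statement of Lemma~\ref{maximalmixingsystem}(1) to the von Neumann algebra $\mM_\psi$. This is not really an obstacle: $\mM_\psi$ is, by definition, the collection of bounded operators in $\mA$ whose two-point correlators with arbitrary mixing operators cluster, and $\alpha^t_\psi$ both preserves mixing (by Lemma~\ref{maximalmixingsystem}(1)) and preserves the operator norm. Thus no further weak-closure argument is required, and the corollary falls out immediately from Lemma~\ref{Takesaki}.
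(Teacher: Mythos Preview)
Your argument is correct and follows the same route as the paper: use Lemma~\ref{maximalmixingsystem}(1) to establish that $\mM_\psi$ is preserved under the modular flow of $\mA$, then invoke Lemma~\ref{Takesaki}. You are somewhat more explicit than the paper in verifying boundedness and in checking that $\psi|_{\mM_\psi}$ is faithful and normal, but the strategy is identical.
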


Consider a von Neumann algebra $\mA$ and a faithful normal state $\psi$.
 The modular flow for some fixed $\tau\in \mathbb{R}$ is an called an {\it inner automorphism} of the algebra if there exists a unitary operator $u_\tau\in \mA$ such that 
\begin{eqnarray}
    \forall a\in \mA: \alpha_\tau(a)=u_\tau a u_\tau^\dagger\ .
\end{eqnarray}
The set of such $u_\tau$ generates a subalgebra of $\mA$ that we denote by $\mA_{inn}$, and as we will see it is inside the centralizer $\mA^\psi$. These definitions depend on the state $\psi$. However, as we review in appendix \ref{app:op}, the set of $\tau\in \mathbb{R}$ for which the modular flow $\Delta_\psi^{i\tau}$ is inner is a subgroup of $\mathbb{R}$, and an algebraic invariant (independent of the state of $\psi$):
\begin{eqnarray}
    \mT=\{\tau\in \mathbb{R}: \alpha^\tau_\psi\: \text{is inner}\}\ .
\end{eqnarray}
The connection to clustering comes from the following Lemma \ref{trivialcentralizer} and Theorem \ref{clustering-thm}. 
 \begin{lemma}[Trivial centralizer]\label{trivialcentralizer}
     Consider a non-trivial von Neumann algebra $\mA$ and a faithful normal state $\psi$.\footnote{We call an algebra trivial if all operators are proportional to identity. In other words, the Abelian algebra of complex numbers is trivial.} If the centralizer of $\psi$ is trivial then the modular flow of all states cannot be an inner flow for all modular times, and $\mA$ must be a type III factor.
 \end{lemma}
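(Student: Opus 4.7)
The plan is to exploit the tension between innerness of the modular flow and triviality of the centralizer, then use the standard fact that on semifinite algebras every modular flow is inner. First I would observe that $\mA$ must automatically be a factor under the hypothesis: the center $Z(\mA)=\mA\cap\mA'$ is pointwise fixed by every $\alpha^t_\psi$, because central elements commute with every $a\in\mA$ and in particular with the modular operator $\Delta_\psi^{it}$. Hence $Z(\mA)\subset \mA^\psi=\mathbb{C}1$, and $\mA$ is a factor.

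For the first conclusion I would argue by contradiction. Suppose $\alpha^t_\psi$ is inner for every $t\in\mathbb{R}$, so there is a strongly continuous one-parameter group of unitaries $u_t\in\mA$ with $\alpha^t_\psi(a)=u_t a u_t^\dagger$ for all $a\in\mA$. (Passing from a $t$-by-$t$ choice of implementing unitary to an actual group uses factoriality: the cocycle $u_s u_t u_{s+t}^{-1}$ lives in $Z(\mA)=\mathbb{C}1$, so the obstruction is a $U(1)$-valued 2-cocycle on $\mathbb{R}$, which is trivializable by rephasing.) The $u_t$ mutually commute, so
\begin{eqnarray}
\alpha^s_\psi(u_t)=u_s u_t u_s^{-1}=u_t,
\end{eqnarray}
which places each $u_t$ inside the centralizer $\mA^\psi$. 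Triviality of $\mA^\psi$ then forces $u_t=\lambda(t)\,1$, so $\alpha^t_\psi=\mathrm{id}$ for every $t$, which means $\mA^\psi=\mA$. Combined with $\mA^\psi=\mathbb{C}1$ this contradicts nontriviality of $\mA$. Since the set of times at which the modular flow is inner is an algebraic invariant (independent of the faithful normal state used to compute it, cf.\ appendix \ref{app:op}), the same conclusion holds replacing $\psi$ by any other faithful normal state.

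For the second conclusion I would rule out the semifinite case. If $\mA$ were semifinite, a faithful normal semifinite trace $\tau$ would exist, and by the noncommutative Radon--Nikodym theorem $\psi$ would be of the form $\psi(\cdot)=\tau(\rho\,\cdot)$ for some positive self-adjoint $\rho$ affiliated with $\mA$. The modular flow would then be $\alpha^t_\psi(a)=\rho^{it}a\rho^{-it}$ with $\rho^{it}\in\mA$, i.e.\ inner for all $t$, contradicting the first half of the proof. Therefore $\mA$ is not semifinite, hence type III.

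The main obstacle in a careful write-up is the cocycle step: upgrading a pointwise-in-$t$ collection of implementing unitaries to a strongly continuous one-parameter unitary subgroup of $\mA$. In a factor this reduces to trivializing a Borel $U(1)$-cocycle on $\mathbb{R}$, which is standard but technical; one can alternatively bypass it by noting that if $\Delta_\psi^{it}=u_t v_t$ with $u_t\in\mA$ and $v_t\in\mA'$, then the factoriality argument shows $u_t$ can be chosen to be a genuine group before feeding it into the centralizer argument above. All remaining steps are immediate consequences of Tomita--Takesaki theory.
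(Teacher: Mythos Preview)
Your argument is correct and follows the same overall architecture as the paper: factoriality from $Z(\mA)\subset\mA^\psi$, a contradiction showing the modular flow of $\psi$ cannot be inner for all times, state-independence of the inner-times set, and finally the semifinite $\Rightarrow$ inner implication to conclude type III. The one substantive difference is how you place the implementing unitaries in the centralizer. You first upgrade the $u_t$ to a one-parameter group (via the cocycle trivialization) and then use mutual commutation to get $\alpha^s_\psi(u_t)=u_t$. The paper bypasses the cocycle entirely: for each \emph{fixed} $\tau$ it computes directly
\[
\psi(a u_\tau)=\psi\big(\alpha^\tau_\psi(a u_\tau)\big)=\psi(u_\tau a u_\tau u_\tau^\dagger)=\psi(u_\tau a),
\]
which is exactly the centralizer condition $\psi([a,u_\tau])=0$, so $u_\tau\in\mA^\psi$ without any group structure needed. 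This is cleaner and removes what you correctly flagged as the main technical obstacle in your write-up; the rest of your proof (including the Radon--Nikodym argument for the semifinite case) matches the paper's Lemma \ref{typenotiii}.
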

 \begin{proof}
By definition, the center of the algebra is inside the centralizer of all states. Since the centralizer of $\psi$ is trivial, we deduce that the center of $\mA$ is trivial, and hence, $\mA$ is a factor. 

The trivial centralizer of $\psi$ also rules out the possibility that the modular flow of $\psi$ is an inner flow for all modular times. For the sake of contradiction, we assume that the modular flow of $\psi$ is an inner flow for all modular times. That is, $\forall t \in \mathbb{R}$ and $\forall a \in \mA$, $\alpha_{\psi}^{\tau}(a) = u_{\tau}au_{\tau}^{\dagger}$ where $u_{\tau} \in \mA$. Then it follows that $u_{\tau}$ is in the centralizer of $\psi$ since for an arbitrary $a\in \mA$ we have 
\begin{eqnarray}
  \psi(au_\tau)=\psi(\alpha^{\tau}_\psi(a u_\tau))=\psi(u_\tau a u_\tau u_\tau^\dagger)=\psi(u_\tau a)\ . 
\end{eqnarray}
Thus, the trivial centralizer of $\psi$ implies $u_{\tau}$ is a phase, and hence, the modular flow is trivial meaning $\alpha_{\psi}^{\tau}(a) = a$ for all $\tau \in \mathbb{R}$ and for all $a \in \mA$. This means that the whole algebra $\mA$ is the centralizer of $\psi$ (see equation \eqref{eq-centralizer-invariant-flow}), which is a contradiction. Thus, the trivial centralizer of $\psi$ implies that the modular flow of $\psi$ cannot be an inner flow for all modular time $\tau$. 

The set of modular times for which the modular flow of $\psi$ is an inner flow is independent of $\psi$ and is a property of the algebra $\mA$; see Appendix \ref{app:op} for a review. Thus, we prove that the modular flow of all states cannot be an inner flow for all modular time. 

Finally, we note that the algebra is type I or type II if the modular flow is an inner flow for all modular times; see Lemma \ref{typenotiii} in Appendix \ref{app:op}. The basic idea behind this lemma is that if and only if the modular flow of $\psi$ is always inner, then we can define a semidefinite trace, which only exists in type I or type II algebras. Thus, algebra $\mA$ cannot be type I or type II and, therefore, is a type III factor. This finishes the proof.
\end{proof}
 
     

\begin{theorem}[Mixing von Neumann algebra is a type III$_{1}$ factor]\label{clustering-thm}
    Consider a non-trivial von Neumann algebra $\mM$ and a normal faithful state $\psi$. The algebra $\mM$ is $\psi$-mixing only if it is a type III$_1$ factor and the state $\psi$ has a trivial centralizer.
\end{theorem}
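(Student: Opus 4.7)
The plan is to split the implication into two parts: triviality of the centralizer of $\psi$, and the type III$_1$ refinement. I would first dispose of the centralizer. For any $z\in\mM^\psi$, the defining invariance $\alpha_\psi^t(z)=z$ makes the modular correlator $f^\psi_{zz}(t)=\psi(z^*z)$ constant in $t$, so its connected part $\psi(z^*z)-|\psi(z)|^2$ is a single time-independent number, which the $\psi$-mixing hypothesis forces to be zero. Cauchy--Schwarz in the GNS inner product $\langle a,b\rangle_\psi := \psi(a^*b)$ gives
\begin{equation}
    |\psi(z)|^2 \,=\, |\langle 1,z\rangle_\psi|^2 \,\le\, \langle 1,1\rangle_\psi\,\langle z,z\rangle_\psi \,=\, \psi(z^*z),
\end{equation}
with equality iff $z\ket{1}_\psi \propto \ket{1}_\psi$. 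Since $\ket{1}_\psi$ is separating for $\mM$ (because $\psi$ is faithful), this forces $z\in\mathbb{C}\cdot 1$, and hence $\mM^\psi=\mathbb{C}\cdot 1$.

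With the centralizer trivial, Lemma \ref{trivialcentralizer} applies verbatim and concludes that $\mM$ is a type III factor. The remaining task is to rule out III$_\lambda$ for $\lambda\in[0,1)$. Here I would invoke Connes' classification of type III factors: only III$_1$ admits a faithful normal state with trivial centralizer. For III$_\lambda$ with $\lambda\in(0,1)$ the modular flow of every faithful normal state is, up to inner automorphism, periodic with period $2\pi/|\log\lambda|$, and the centralizer always contains a nontrivial type II$_\infty$ piece; for III$_0$ the center of the centralizer carries the non-transitive flow of weights, which is a state-independent (hence nontrivial) invariant. Having exhibited a state $\psi$ on $\mM$ with trivial centralizer, $\mM$ must therefore be type III$_1$.

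The hard part is this last step, which rests on Connes' structural theorems relating the size of centralizers of faithful normal states to the type-III subtype through the flow of weights and the $S$-invariant. If one wanted a proof that avoids citing classification, the natural alternative would be to strengthen Lemma \ref{RB} and show directly that $\psi$-mixing forces the spectral measure of the modular Hamiltonian $\hat K_\psi = -\log\Delta_\psi$ to be absolutely continuous with full support on $\mathbb{R}$ on every cyclic subspace generated by $\mM\ket{1}_\psi$. State-independence of the Connes spectrum would then upgrade this to $S(\mM)=[0,\infty)$, i.e.\ type III$_1$. This alternative stays inside the spectral framework already built up in this section but is more technical; the quickest route, and the one I would take first, is the classification argument above.
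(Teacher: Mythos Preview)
Your argument for the triviality of the centralizer is correct and is essentially the paper's argument rephrased via the equality case of Cauchy--Schwarz (the paper instead chooses $a^\dagger=c-\psi(c)$ and uses faithfulness directly; these are equivalent). Both then invoke Lemma~\ref{trivialcentralizer} to reach type~III.

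Where you diverge is in the III$_1$ refinement. You invoke as a black box the structural fact ``only III$_1$ admits a faithful normal state with trivial centralizer.'' That statement is true, but your justification is loose: the claim that for III$_\lambda$ the centralizer of \emph{every} faithful normal state ``contains a nontrivial type~II$_\infty$ piece'' holds for the periodic weight, not obviously for arbitrary states, and your III$_0$ sentence gestures at the flow of weights without pinning down why that forces the centralizer of the specific state $\psi$ to be nontrivial. The paper's route is more elementary and stays closer to the hypotheses already in play: once the centralizer is trivial (hence a factor), Lemma~\ref{factorcentralizer} gives $\mS(\mM)=\mathrm{Spec}(\Delta_\psi)$; in the III$_0$ and III$_\lambda$ cases this spectrum is discrete, so the modular two-point function $f^\psi_{aa}(t)=\sum_m m^{1/2+it}\,{}_\psi\!\langle a|dP_m|a\rangle_\psi$ is almost periodic and cannot cluster, contradicting $\psi$-mixing. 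So the paper uses the mixing hypothesis a second time to rule out III$_0$ and III$_\lambda$, whereas you trade that second use for a heavier appeal to Connes' classification.

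Your ``alternative route'' at the end (forcing absolute continuity and full support of the spectral measure of $\hat K_\psi$) is actually stronger than what the paper does; the paper only needs the contrapositive that a discrete modular spectrum forbids clustering. If you want to make your primary route self-contained, either supply a clean argument that trivial centralizer is impossible in III$_\lambda$ and III$_0$ (e.g.\ via the eigenoperator decomposition when $\mathrm{Spec}(\Delta_\psi)$ is discrete), or simply adopt the paper's spectral contradiction, which is shorter and already uses only Lemma~\ref{factorcentralizer}.
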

\begin{proof}
Assume an operator $c\in \mM^\psi$ is in the centralizer of the state $\psi$. It follows from Theorem 3.6 of \cite{pedersen1973radon} that any $c\in \mM^\psi$ is invariant under the modular flow
\begin{eqnarray}
    \forall t\in \mathbb{R}:\qquad \alpha^t_\psi(c)=c\ .
\end{eqnarray}
As a result, the two-point function 
\begin{eqnarray}
    \psi(a \alpha^t_\psi(c))=\psi(a c)
\end{eqnarray}
clusters in modular time only if 
\begin{eqnarray}
    \forall a\in \mM:\qquad \psi(a(c-\psi(c)))=0\ .
\end{eqnarray}
If we choose $a^\dagger={\bf c}=c-\psi(c)$ then we get 
$\psi({\bf c}^\dagger {\bf c})=0$. Since $\psi$ is faithful we find 
\begin{eqnarray}
    c=\psi(c)\ .
\end{eqnarray}
This implies that the centralizer is trivial. It follows from Lemma \ref{trivialcentralizer} that the algebra $\mM$ is a type III factor. 
 Moreover, it follows from a well-known result of Connes (Lemma \ref{factorcentralizer} in appendix \ref{app:op}) that since the centralizer is trivial, the spectrum of $\Delta_\psi$ is the same for all states (and the same as the Arveson spectrum in (\ref{arveson})):
 \begin{eqnarray}
     \mS(\mM)=\text{Spec}(\Delta_\psi)\ .
 \end{eqnarray}
 Then, from the classification of type III algebras in Definition \ref{deftypeiii} it follows that 
 \begin{enumerate}[(a)]
     \item $\mM$ is type III$_0$ if $\text{Spec}(\Delta_\psi)=\{0,1\}$.
     \item $\mM$ is type III$_\lambda$ if $\text{Spec}(\Delta_\psi)=\{0\cup \lambda^n\}$ with $n\in \mathbb{Z}$.
     \item $\mM$ is type III$_1$ if $\text{Spec}(\Delta_\psi)=\mathbb{R}_+$.
 \end{enumerate}
 We rule out cases (a) and (b) by contradiction.
 In cases (a) and (b) the modular Hamiltonian has a discrete spectrum, therefore a modular correlator takes the form
 \begin{eqnarray}
     f^\psi_{aa}(t)=\sum_{m\in \text{Spec}(\Delta_\psi)} m^{1/2+it} {}_\psi\braket{a|dP_m|a}_\psi\ .
 \end{eqnarray}
 Since the sum is discrete, this is an almost-periodic function of time and cannot cluster which is a contradiction. Therefore, a $\psi$-mixing algebra must be a type III$_1$ factor with a trivial centralizer.
\end{proof}
As a corollary, we find that in $\psi$-mixing von Neumann algebra, the spectrum of $\Delta_\psi$ is always $\mathbb{R}_+$.
\begin{corollary}\label{specstate}
    A von Neumann algebra $\mM$ is $\psi$-mixing only if the spectrum of the modular flow of $\psi$ and any other state in the folium is $\mathbb{R}_+$:
    \begin{eqnarray}
        \forall \ket{\chi}\in\mH_\psi  \qquad \text{Spec}(\Delta_\chi)=\mathbb{R}_+\ .
    \end{eqnarray}
\end{corollary}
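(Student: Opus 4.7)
The plan is to bootstrap from Theorem \ref{clustering-thm} and use the Connes $S$-invariant to control the spectrum of $\Delta_\chi$ for arbitrary states $\chi$ in the folium. By Theorem \ref{clustering-thm}, a $\psi$-mixing nontrivial von Neumann algebra $\mM$ is a type III$_1$ factor and $\psi$ has trivial centralizer, so by Lemma \ref{factorcentralizer} the spectrum of $\Delta_\psi$ already coincides with the algebraic invariant $\mathcal{S}(\mM)$. The goal is to upgrade this from $\psi$ to every other normal state $\chi$.

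First I would recall that the Connes invariant is defined as the intersection
\begin{equation}
\mathcal{S}(\mM) \;=\; \bigcap_{\phi} \mathrm{Spec}(\Delta_\phi),
\end{equation}
where $\phi$ ranges over all faithful normal states on $\mM$, and that $\mM$ is type III$_1$ precisely when $\mathcal{S}(\mM) = \mathbb{R}_+$. Given a faithful normal $\chi \in \mH_\psi$, I have the sandwich $\mathbb{R}_+ = \mathcal{S}(\mM) \subseteq \mathrm{Spec}(\Delta_\chi) \subseteq \mathbb{R}_+$, where the last inclusion is automatic from positivity of $\Delta_\chi$. This forces equality and settles the faithful case.

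The remaining step is to handle non-faithful normal states $\chi \in \mH_\psi$. Since $\mM$ is a factor of type III$_1$, it has no finite projections, and standard modular theory says that if $\chi$ has support projection $p \in \mM$, the modular operator of $\chi$ on $\mH_\psi$ decomposes into a piece on $p\mH_\psi (1-p)'$ carrying the modular structure of the reduced algebra $p\mM p$ together with a trivial (kernel) piece. Because $\mM$ is type III$_1$ and properly infinite, $p\mM p$ is again a type III$_1$ factor, so by the faithful case applied inside $p\mM p$ we again get $\mathrm{Spec}(\Delta_{\chi|p\mM p}) = \mathbb{R}_+$, and adjoining the kernel only contributes $\{0\} \subset \mathbb{R}_+$.

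The main conceptual obstacle is invoking the Connes invariant cleanly: strictly speaking $\mathcal{S}(\mM)$ is defined as an intersection, so the sandwich argument requires knowing a priori that $\mathcal{S}(\mM) = \mathbb{R}_+$ for type III$_1$ factors, which is the content of the classification lemma already referenced earlier in the paper. The minor technical nuisance is the non-faithful case, but since we are working in the folium of $\psi$ and $\mM$ is a factor of type III$_1$, support projections behave well and no new ideas are needed.
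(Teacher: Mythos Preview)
Your argument is correct and is precisely the reasoning the paper intends: the corollary is stated immediately after Theorem~\ref{clustering-thm} without a separate proof, because the sandwich $\mathbb{R}_+=\mathcal{S}(\mM)\subseteq\mathrm{Spec}(\Delta_\chi)\subseteq\mathbb{R}_+$ is the evident consequence of the type~III$_1$ classification and the definition of the Connes invariant as an intersection over faithful normal states. You have in fact gone slightly further than the paper by explicitly treating the non-faithful case via the reduction $p\mM p$, which the paper's statement ``any other state in the folium'' leaves implicit.
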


Another interesting consequence is that if the algebra is $\psi$-mixing it is $\chi$-mixing for almost all other $\ket{\chi}\in \mH_\psi$:
\begin{lemma}
    If a von Neumann algebra $\mM$ is $\psi$-mixing then it is $\chi$-mixing for a dense set of vectors $\ket{\chi}\in \mH_\psi$. 
\end{lemma}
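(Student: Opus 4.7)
The plan is to exhibit a concrete norm-dense subset $\mcD\subset\mH_\psi$ every vector of which induces a state on $\mM$ whose own modular flow is mixing on all of $\mM$. My candidate is the double-orbit
\begin{eqnarray}
    \mcD := \{\,UV'\ket{1}_\psi \,:\, U\in\mathcal{U}(\mM),\ V'\in\mathcal{U}(\mM')\,\},\nonumber
\end{eqnarray}
where $\mathcal{U}(\cdot)$ denotes the unitary group of a von Neumann algebra. By Theorem \ref{clustering-thm}, $\mM$ is already known to be a type III$_1$ factor, and every $\chi\in\mcD$ is cyclic and separating for $\mM$ since $U,V'$ are unitaries and $\ket{1}_\psi$ is cyclic and separating.

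First I would check that every $\chi\in\mcD$ yields an $\omega_\chi$-mixing algebra. For $\chi=UV'\ket{1}_\psi$, the commutation $V'\in\mM'$ with $x\in\mM$ gives $\omega_\chi(x):=\braket{\chi|x|\chi}=\psi(U^*xU)$, i.e.\ $\omega_\chi=\psi\circ\mathrm{Ad}_{U^*}$. A direct KMS verification then shows $\alpha_{\omega_\chi}^t(y)=U\,\alpha_\psi^t(U^*yU)\,U^*$, and applying the $\psi$-mixing hypothesis to the operators $U^*aU,\,U^*bU\in\mM$ yields
\begin{eqnarray}
    \omega_\chi\bigl(a\,\alpha_{\omega_\chi}^t(b)\bigr) = \psi\bigl(U^*aU\cdot\alpha_\psi^t(U^*bU)\bigr) \xrightarrow[t\to\infty]{} \psi(U^*aU)\psi(U^*bU) = \omega_\chi(a)\omega_\chi(b).\nonumber
\end{eqnarray}

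The main obstacle is showing $\mcD$ is norm-dense in $\mH_\psi$. For this I would invoke Connes's characterization of type III$_1$ factors: the unitary orbit $\{\psi\circ\mathrm{Ad}_{U^*}:U\in\mathcal{U}(\mM)\}$ is norm-dense in the faithful normal state space of $\mM$. Given any cyclic and separating $\xi\in\mH_\psi$ (these form a norm-dense subset of $\mH_\psi$), choose $U_n\in\mathcal{U}(\mM)$ with $\|\psi\circ\mathrm{Ad}_{U_n^*}-\omega_\xi\|\to 0$. The Powers--St{\o}rmer--Araki inequality $\|\eta-\eta'\|^2\leq\|\omega_\eta-\omega_{\eta'}\|$ on natural-cone representatives upgrades this to norm convergence $\eta_n\to\eta_\xi$ in $\mP^\natural$. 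Since $U_n\ket{1}_\psi$ is cyclic-separating and represents the same state on $\mM$ as $\eta_n$, there is $V_n'\in\mathcal{U}(\mM')$ with $\eta_n=V_n'U_n\ket{1}_\psi$; similarly $\xi=W'\eta_\xi$ for some $W'\in\mathcal{U}(\mM')$. Using $[\mM,\mM']=0$,
\begin{eqnarray}
    W'\eta_n = W'V_n'U_n\ket{1}_\psi = U_n(W'V_n')\ket{1}_\psi \in \mcD,\nonumber
\end{eqnarray}
and these vectors converge in norm to $W'\eta_\xi=\xi$. Hence $\mcD$ is dense, completing the argument. The delicate bookkeeping is tracking the $\mathcal{U}(\mM')$ intertwiners that relate an arbitrary cyclic-separating vector to its natural-cone representative, but this is standard Tomita--Takesaki theory.
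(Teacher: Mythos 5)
Your proposal is correct and follows essentially the same route as the paper: both arguments use the double unitary orbit $UV'\ket{1}_\psi$, the covariance $\alpha_\chi^t = \mathrm{Ad}_U\circ\alpha_\psi^t\circ\mathrm{Ad}_{U^*}$, and the $\psi$-mixing hypothesis applied to $U^*aU$, $U^*bU$; the paper simply cites the Connes--St{\o}rmer homogeneity result for the density of this orbit, whereas you supply the standard derivation of that density from the norm-density of the unitary orbit of states via the Powers--St{\o}rmer--Araki inequality and natural-cone intertwiners. That extra detail is sound but not a different method.
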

\begin{proof}
    The proof follows from the result of \cite{connes1978homogeneity} that proved for a type III$_1$ factor that the set of vectors 
    \begin{eqnarray}
        \ket{\chi}=UU'\ket{1}_\psi,
    \end{eqnarray}
    with $U\in \mM$ and $U'\in \mM'$ is dense in $\mH_\psi$. Then, it follows that for all $a\in \mM$
    \begin{eqnarray}
    \alpha_\chi^t(a)=U\alpha_\psi^t(U^{\dagger}aU) U^\dagger\ ,
    \end{eqnarray}
    and for $a, b \in \mM$,
    \begin{align}
        \bra{\chi} \alpha_{\chi}^{t}(a) b \ket{\chi} 
        = \, & {}_{\psi}\bra{1} \alpha_{\psi}^{t}(U^{\dagger} a U) \, U^{\dagger} b U\ket{1}_{\psi} \ . 
    \end{align}
    Since $U^\dagger aU\in \mM$ and $U^\dagger b U\in \mM$ and $\mM$ is $\psi$-mixing, we find
    \begin{align}
        \lim_{t\to\infty} \bra{\chi} \alpha_{\chi}^{t}(a) b \ket{\chi} =& \,  \lim_{t\to\infty}  {}_{\psi}\bra{1} \alpha_{\psi}^{t}(U^{\dagger} a U) \, U^{\dagger} b U\ket{1}_{\psi} \, ,\nonumber\\
        =& \, {}_{\psi}\bra{1} U^{\dagger} a U\ket{1}_{\psi} \, {}_{\psi}\bra{1} U^{\dagger} b U\ket{1}_{\psi} \, ,\nonumber\\
        =&  \bra{\chi} a \ket{\chi} \, \bra{\chi} b \ket{\chi} \, .
    \end{align}
\end{proof}
Note that the above result implies that we have $\chi$-mixing for all almost states $\chi$. However, there are many states in $\mH_\psi$ with non-trivial centralizers. The existence of a centralizer is an obstruction to mixing. 

Theorem \ref{clustering-thm} combined with Lemma \ref{Takesaki} implies the following lemma: 
  \begin{lemma}\label{simpleIII1}
  The von Neumann subalgebra of all $\psi$-mixing operators $\mM_\psi\subset \mA$, if nontrivial, is a type III$_{1}$ factor and the restriction of $\psi$ to $\mM_\psi$ has a trivial centralizer in $\mM_\psi$.
  \end{lemma}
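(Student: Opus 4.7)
The plan is to reduce the statement to a direct application of Theorem \ref{clustering-thm} to the pair $(\mM_\psi, \psi|_{\mM_\psi})$. The three hypotheses to verify are (i) that $\mM_\psi$ is a nontrivial von Neumann algebra, (ii) that $\psi|_{\mM_\psi}$ is a normal faithful state on it, and (iii) that $\mM_\psi$ is $\psi|_{\mM_\psi}$-mixing when modular flow is computed \emph{intrinsically} in $\mM_\psi$, rather than inherited from the ambient $\mA$.

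For (i), Lemma \ref{mixingvN} already establishes that the set of all $\psi$-mixing operators in $\mA$ is a von Neumann subalgebra; nontriviality is the standing assumption. For (ii), normality of the restriction is automatic, and faithfulness on $\mM_\psi$ follows from faithfulness of $\psi$ on the larger algebra $\mA$: if $a \in \mM_\psi \subset \mA$ satisfies $\psi(a^\dagger a)=0$, then $a=0$.

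Step (iii) is the only step with content, and it is where Lemma \ref{Takesaki} does the heavy lifting. By Lemma \ref{maximalmixingsystem}, $\mM_\psi$ is preserved by the modular flow of $\mA$ with respect to $\psi$. Lemma \ref{Takesaki} therefore applies to the inclusion $\mM_\psi \subset \mA$ and gives
\begin{eqnarray}
\forall a\in \mM_\psi,\ \forall t\in\mathbb{R}: \qquad \Delta_{\mM_\psi;\psi}^{it}\, a\, \Delta_{\mM_\psi;\psi}^{-it} \;=\; \Delta_\mA^{it}\, a\, \Delta_\mA^{-it},
\end{eqnarray}
which is precisely the content of Corollary \ref{maximalmixingflow}. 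Consequently, the modular two-point function $f^{\psi|_{\mM_\psi}}_{ab}(t)$ computed inside $\mM_\psi$ agrees with the ambient correlator $f^\psi_{ab}(t)$ for all $a,b \in \mM_\psi$. Since by construction every pair $a,b \in \mM_\psi$ is $\psi$-mixing in $\mA$, the same pair is $\psi|_{\mM_\psi}$-mixing in $\mM_\psi$. Thus $\mM_\psi$ itself is a $\psi|_{\mM_\psi}$-mixing von Neumann algebra.

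With (i)--(iii) in hand, Theorem \ref{clustering-thm} applied to $(\mM_\psi, \psi|_{\mM_\psi})$ immediately yields that $\mM_\psi$ is a type III$_1$ factor and that $\psi|_{\mM_\psi}$ has trivial centralizer inside $\mM_\psi$. The only subtle point in this proof—and the reason the statement is not a tautology given Theorem \ref{clustering-thm}—is the matching of intrinsic and ambient modular flows on the subalgebra; everything else is bookkeeping, and the main technical obstacle has already been absorbed into Lemma \ref{Takesaki} and Corollary \ref{maximalmixingflow}.
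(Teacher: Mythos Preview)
Your proof is correct and follows the same route as the paper: use Corollary \ref{maximalmixingflow} (via Lemma \ref{maximalmixingsystem} and Lemma \ref{Takesaki}) to identify the intrinsic modular flow of $\mM_\psi$ with the ambient one, conclude that $\mM_\psi$ is mixing under its own modular flow, and then invoke Theorem \ref{clustering-thm}. Your version is simply more explicit about verifying the normality and faithfulness hypotheses of Theorem \ref{clustering-thm}, which the paper leaves implicit.
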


  \begin{proof} It follows from Corollary \ref{maximalmixingflow} that
  \begin{eqnarray}
      \forall a\in \mM_\psi:\qquad a(t):=\Delta_{\mM_\psi}^{it}a\Delta_{\mM_\psi}^{-it}=\Delta_\mA^{it}a\Delta_\mA^{-it}\ .
  \end{eqnarray}
  If all operators in $\mM_\psi$ are $\psi$-mixing with respect to the modular flow of $\mA$ then they are $\psi$-mixing with respect to the modular flow of $\mM_\psi$ as well, and by Theorem \ref{clustering-thm} we find that $\mM_\psi$ is a type III$_1$ von  Neumann algebra with a trivial centralizer. 
  \end{proof}

In a KMS state, the modular flow is the time evolution. Therefore, we have the following corollary: 
\begin{corollary}\label{KSMtypeiii}
If the set of all mixing operators in a KMS state forms an algebra, it is a type III$_1$ von Neumann factor $\mM\subset \mA$ and none of the conserved charges of $\mA$ can be affiliated with $\mM$.\footnote{An operator $a\in B(H)$ is affiliated with $\mA$ if it commutes with the commutant $\mA'$. When $\mA$ is a von Neumann factor, if $g$ is a bounded function and $a$ is affiliated with $\mA$ then $g(a)\in\mA$.}
\end{corollary}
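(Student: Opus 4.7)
The plan is to reduce the corollary directly to Lemma \ref{simpleIII1} by exploiting the fact that in a KMS state the modular flow is (up to the rescaling by $\beta$) the Hamiltonian time evolution, so the notions of ``mixing'' and ``$\psi$-mixing'' coincide. Under the hypothesis that the mixing operators close under multiplication, the second part of Lemma \ref{mixingvN} produces a von Neumann subalgebra $\mM\subset \mA$ consisting of all bounded mixing operators, on which the KMS state $\psi$ restricts to a normal faithful state. Lemma \ref{simpleIII1} then gives at once that $\mM$ is a type III$_1$ factor and that $\psi|_{\mM}$ has trivial centralizer in $\mM$. Internally this step relies on Lemma \ref{Takesaki} to ensure that the modular flow of $\psi|_{\mM}$ agrees with the restriction of the ambient modular flow of $\mA$, so that mixing with respect to one is the same as mixing with respect to the other.

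For the second part, I would argue by contradiction. Let $Q$ be a self-adjoint conserved charge of $\mA$, i.e.\ a (possibly unbounded) self-adjoint operator affiliated with $\mA$ that commutes with the Hamiltonian generating the time evolution. Then $\alpha^t_\psi(g(Q))=g(Q)$ for every bounded Borel function $g$. Suppose $Q$ were affiliated with $\mM$; because $\mM$ is a factor this is equivalent to $g(Q)\in \mM$ for all bounded Borel $g$. Each such $g(Q)$ is then a modular-invariant element of $\mM$, so it belongs to the centralizer $\mM^{\psi}$. Since this centralizer is trivial by the first part, every $g(Q)$ is a scalar. Taking $g$ to run over spectral projections of $Q$ shows that $Q$ itself is a real scalar, so the only ``conserved charges'' of $\mA$ that could be affiliated with $\mM$ are trivial.

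The only obstacle I anticipate is the bookkeeping around unbounded operators and affiliation: one must be careful that affiliation with $\mM$ really does place all bounded spectral data inside $\mM$ (which is fine because $\mM$ is a factor by the first part), and that the conservation condition $[Q,H]=0$ really does transfer to $\alpha^t_{\psi|_{\mM}}$-invariance via Lemma \ref{Takesaki}. Once these two points are acknowledged, the corollary is an immediate packaging of Lemmas \ref{mixingvN}, \ref{Takesaki} and \ref{simpleIII1} in the special KMS setting; no new analytic input is required.
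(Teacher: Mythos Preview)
Your proposal is correct and follows essentially the same route as the paper: the paper simply observes that in a KMS state the modular flow coincides with time evolution, so ``mixing'' equals ``$\psi$-mixing,'' and then invokes Lemma \ref{simpleIII1} (built on Lemmas \ref{mixingvN} and \ref{Takesaki}) to conclude. Your contradiction argument for the conserved-charge statement---that any $g(Q)$ would land in the trivial centralizer $\mM^\psi$---is the natural unpacking of the phrase ``trivial centralizer'' that the paper leaves implicit.
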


 \section{Mixing algebras of GFF and holography}\label{GFFsec}

We saw that if the maximal mixing subalgebra is non-trivial, it must be a type III$_1$ von Neumann factor. An important example for us is the $N\to \infty$ limit of 
$SU(N)$ $\mathcal{N}=4$ SYM above the Hawking-Page phase transition.\footnote{For completeness, in appendix \ref{app:SYMGFF}, we review the boundary arguments by Festuccia and Liu that suggest that the two-point function of single trace primaries in interacting $\mathcal{N}=4$ SYM above the Hawking-Page phase transitions should cluster.} In this case, every single-trace conformal primary $\mO$ with the one-point function removed generates an algebra of generalized free fields (GFF) that we denote by $\mY_\mO$, following the notation of \cite{leutheusser2021emergent}. Since the correlators in a theory of GFF are Gaussian (the state is quasi-free), we only need to provide the two-point function $f_{\mO\mO}$  to define the von Neumann algebra. 
We show that if the generator of the GFF algebra, $\mO$, is mixing, they form an algebra of observables that is mixing, and by Theorem \ref{clustering-thm}, the algebra $\mY_\mO$ is a type III$_1$ factor. Our result settles and generalizes a conjecture of Leutheusser and Liu in \cite{leutheusser2021emergent}. Note that, as was discussed in \cite{witten2022gravity}, the single-trace operators $H/N$ commutes with all the other single-trace operators. Therefore, in what follows, we choose to not include it in the algebra of the single-trace observables. In the language of \cite{chandrasekaran2022large} we restrict the generating set of our GFF to the single-trace operators that are non-central in the large $N$ limit.

We construct the algebra $\mY_\mO$ explicitly and show that there are C$^*$-subalgebras associated with time-intervals $I_t=(t,\infty)$ at any temperature (time-band algebras), as was argued by \cite{leutheusser2021emergent}. We observe that there is an ambiguity in defining the time-band algebra. To resolve this ambiguity, we need to make choices. As we show, for some choices the time-band algebras are not von Neumann algebras. However, the choice motivated by holography is special in that it associates GFF von Neumann algebras with every causally complete region of the bulk.

Generalized free fields in $d$-spacetime dimensions can be described as \cite{dutsch2003generalized}
\begin{eqnarray}
    \varphi_\rho(x^\mu)=\int d^{d}k\: \Theta(k^0)\rho(k_\mu k^\mu) (a^\dagger(k^\mu)e^{i k_\mu x^\mu}+a(k^\mu)e^{-i k_\mu x^\mu})\ .
\end{eqnarray}
Consider GFF on a compact manifold, e.g. a sphere. Expanding the field in spherical harmonics, each mode, in effect, is a model of GFF in $0+1$-dimensions \cite{festuccia2007arrow} (see appendix \ref{app:SYMGFF}). To simplify our discussion of GFF without sacrificing generality we focus on the $0+1$-dimensional GFF, which as we will describe in appendix \ref{app:GHO} can be interpreted as a collection of simple harmonic oscillators with different frequencies $\omega$ (masses). The generalization to higher dimensions is straightforward. 

\subsection{Algebras of GFF}\label{sec:GFFalgebra}


Generalized free fields in $0+1$-dimensions (the zero mode of GFF on a compact spatial manifold) correspond to an effective field
\begin{eqnarray}\label{GFF01}
    \varphi_\rho(t)&&=\int d\omega \Theta(\omega)\sqrt{\rho(\omega^2)}(e^{i \omega t} a_\omega^\dagger+ e^{-i\omega t}a_\omega)
\end{eqnarray}
where the {\it spectral density} function $\rho(\omega^2)$ is a general (discrete or continuous) positive function of frequencies $\omega$, and $\Theta(\omega)$ is the Heaviside step function that ensures the positivity of mass.\footnote{See appendix \ref{app:GHO} for a discussion of the origin of the expansion above, and more detail on the construction of the algebra in $0+1$-dimensions.}
For future convenience, we also define the choice of GFF where the measure is $d\mu(\omega)=d\omega \Theta(\omega)$:    
\begin{eqnarray}\label{Lebesqchoice}
        \varphi_1(t)=\int d\omega \Theta(\omega)(e^{i\omega t}a_\omega^\dagger+e^{-i\omega t}a_\omega)\ .
\end{eqnarray}
This theory includes all massive fields equally. To quantize a theory of GFF and construct the observable algebra we first need to build the one-particle Hilbert space. The reader who is not interested in the detailed mathematical construction of the operator algebra of GFF can skip to Theorem \ref{GHOclsuter}.

{\bf One-particle Hilbert space:} Consider the Abelian $*$-algebra of complex Schwartz functions of time (smooth bounded functions $f$ that decay away faster than any power law at $t\to \pm\infty$). There are two advantages to choosing Schwartz functions in quantum field theory. First, the smoothness allows us to do calculus on the ``phase space".\footnote{The $*$-algebra of Schwartz functions is in the intersection of all $L^p$ spaces for $p\in (1,\infty]$. They can be viewed as the $*$-subalgebra of bounded measurable functions  where the action of time-evolution on them can be extended to the entire complex $t$-plane. In fact, the algebra of Schwartz functions is the Tomita $*$-subalgebra of $L^\infty(\mathbb{R}^{d-1,1})$. We thank Yidong Chen for explaining the second point to us.} However, as opposed to analytic functions, Schwartz functions can have finite support which allows us to make sense of local algebras of observables. In the case of $0+1$-dimensions, these local algebras become the algebra of time bands. Second, the Fourier transform sends Schwartz functions to Schwartz functions, therefore they enjoy the same smoothness properties in frequency space as well.\footnote{There is nothing sacred about this choice. We could have started with any $*$-subalgebra of Schwartz functions as well. We will see that the freedom in making such choices is tied to an ambiguity in the algebra of the GFF. We come back to this in our discussion of the GFF algebras of holography. Moreover, the process of building the one-particle Hilbert space involves choosing a measure $d\mu$ on time, discarding all functions that are in the kernel of $d\mu$ as null states and enlarging the remaining set of Schwartz functions to complete it to  $L^2(\mathbb{R}_t,d\mu)$.}

The set of these functions forms a complex vector space $\ket{f}\in\mK$.
 A state of this $*$-algebra is a positive distribution measure (discrete or continuous) over frequencies \footnote{More generally, we can consider unnormalizable states (weights) such as the Lebesgue measure $d\omega$ on the algebra as was the case in (\ref{Lebesqchoice}).}
\begin{eqnarray}
    \mu(f)=\int d\mu(\omega) \: \hat{f}(\omega)\ .
\end{eqnarray}
The positivity of mass implies that the measure $d\mu(\omega)=d\omega \Theta(\omega)\rho(\omega^2)$ is supported only on positive frequencies. 
A state $d\mu$ provides us with a Hermitian form on the vector space $\mK$,\footnote{It is Hermitian because $\braket{f|g}_\mu=\braket{g|f}_\mu^*$.}
\begin{eqnarray}\label{innerprod}
    \braket{f|g}_\mu=\int d\mu(\omega)\hat{f}^*(\omega)\hat{g}(\omega)\ .
\end{eqnarray}
This Hermitian form satisfies all properties of an inner product $\mK$ except that it is degenerate. For example, any function $\hat{f}_0(\omega)$ with support contained only in negative frequencies $\mathbb{R}_\omega^-$ is within the kernel of this Hermitian form: $\braket{f_0|f_0}_\mu=0$. Such null vectors form a linear subspace. We quotient out by the set of null vectors $I$ by identifying $\ket{f}\sim \ket{f+I}$. The state $\mu$ reduced to the quotient space is faithful. Therefore, it gives an inner product. Completing the quotient space in this inner product gives the one-particle Hilbert space $\mH_\mu$.

It is instructive to make a comparison with conventional quantum field theory. For massive free bosons in $d$ space-time dimensions, the state we choose in the one-particle Hilbert space is $d\mu(k^\mu)=d\omega \Theta(\omega)\delta(k_\mu k^\mu+m^2)$. The propagator solves 
\begin{eqnarray}
    (\Box_{(x)}+m^2)G(x,y)=\delta(x-y)
\end{eqnarray}
and the solutions to the equations of motion are of the form $\int dy f(y)G(x,y)$ for each Schwartz function $f(y)$. The null vectors are ``off-shell" and correspond to the spacetime functions that are in the image of the Klein-Gordon equation. Quotienting the set of Schwartz functions with those in the image of the Klein-Gordon equation (the equations of motion) gives the phase space, i.e. the space of solutions to the equations of motion. The restriction of $d\mu$ to the quotient set $\mK/I$ is an inner product of the phase space. The one-particle Hilbert space is the Cauchy completion of $\mK/I$ in the norm induced by this inner product. 

In the absence of equations of motion, in GFF, taking the quotient of $\mK$ by the null vectors of the bilinear in (\ref{innerprod}) (the choice of measure $\rho(\omega^2)$) defines the analog of the phase space. The null vectors of the state $d\mu$ are ``off-shell". Since $\rho(\omega^2)$ sets the causal propagator, we can think of $1/\rho(\omega^2)$ as nonlocal equations of motion (in momentum space). Given a Schwartz function of spacetime $f$ we can think of the GFF field as 
\begin{eqnarray}
    \varphi_\rho(f)=\varphi_1(\sqrt{\rho(-\Box)}f)
\end{eqnarray}
where $\varphi_1$ is the choice in (\ref{Lebesqchoice}).

In a phase space, we have a closed non-degenerate two-form (the symplectic two-form). Similarly, as we will see below, in the GFF, the role of the symplectic form is played by the so-called symplectic bilinear (an anti-symmetric, non-degenerate bilinear that satisfies the Jacobi identity). In quantization, the symplectic bilinear decides the commutators.

The vectors $\ket{f}\in \mK$ carry a  positive-energy unitary representation of time-translations 
 \begin{eqnarray}
    &&e^{-i Hs}f=f(t+s):=f_s(t)\nn\\
    &&e^{-i Hs}\hat{f}(\omega)=e^{-i \omega s}\hat{f}(\omega)\ .
 \end{eqnarray}
 The time-evolved two-point function and its Fourier transform are
 \begin{eqnarray}\label{2pttime}
     &&G_{gf}(t)=\braket{g|e^{-i Ht }f}_\mu=\int d\mu(\omega)\hat{g}^*(\omega)\hat{f}(\omega) e^{-i \omega t} \ .
 \end{eqnarray}
 The imaginary part of the inner product is anti-symmetric and non-degenerate.\footnote{There exists no $\ket{f}$ such that $\Im\braket{f|g}=0$ for all $g$ because $i\braket{f|f}\neq 0$.} It corresponds to a choice of commutator (symplectic bilinear) \footnote{Whereas the norm the complex inner product above induces $\braket{f|f}_\mu$ fixing a choice of a quasi-free state (see appendix \ref{app:GHO}).}:
 \begin{eqnarray}
     [\varphi_\rho(f),\varphi_\rho(g)]&=&\int d\omega \Theta(\omega)\rho(\omega^2)(\hat{f}^*(\omega)\hat{g}(\omega)-\hat{f}(\omega)\hat{g}^*(\omega))\label{kernelcomm}
 \end{eqnarray}
 where the kernel is the Fourier transform of the causal propagator (the retarded propagator minus the advanced propagator):
 \begin{eqnarray}
    G_C(t):=2i\Im\braket{\delta(t)|\delta(0)}=\int d\omega \rho(\omega^2)\Theta(\omega)(e^{-i \omega t}-e^{i\omega t})\ .
\end{eqnarray}

If we choose $\rho(\omega^2)$ to be measurable, since $\hat{f}(\omega)$ and $\hat{g}(\omega)$ are bounded\footnote{The Schwartz functions are inside all $L^p(\omega)$.} the integrand of (\ref{kernelcomm}) is in $L^1(\mathbb{R}_\omega)$. Evolving one of the vectors in time, we have
\begin{eqnarray}
    [\varphi_\rho(f_t),\varphi_\rho(g)]=\int d\omega \:e^{i\omega t} \Theta(\omega)\rho(\omega^2)(\hat{f}^*(\omega)\hat{g}(\omega)-\hat{f}(-\omega)\hat{g}^*(-\omega))\ .
\end{eqnarray}
It follows from the Riemann-Lebesgue Lemma that this commutator decays at large $t$:
\begin{eqnarray}
    \lim_{t\to \infty}\|[\varphi_\rho(f_t),\varphi_\rho(g)]\|_\infty= 0\ .
\end{eqnarray}
This implies that for any bounded function of $\varphi(f_t)$ and $\varphi(g)$ the commutator vanishes at large time separation which is the so-called asymptotic Abelianness property. As we will see in Theorem \ref{GFFtypeiii1} below, asymptotic Abelianness fixes the von Neumann algebra to a type III$_1$ factor.


To make the discussion less abstract, we work out two simple examples. 
\begin{enumerate}
    \item {\bf All positive energy modes:} This case is the GFF field $\varphi_1$ in (\ref{Lebesqchoice}).
    There are no analogs of equations of motion, and after taking the quotient by negative energy modes, we get a norm. The closure with respect to this norm gives $L^2(\mathbb{R}_t,d\omega \Theta(\omega))$ which is the space of Hardy functions $H^2$: the space of square-integrable functions of time that have only positive frequencies.\footnote{They can also be characterized as the space of square-integrable holomorphic functions on the upper half-plane.}
    
\item {\bf Conformal GFF:} As the second example, consider conformal GFF (0+1)-dimensions. The two-point function decays as a power law with power $2\Delta$. Its K\"all\'en-Lehmann representation corresponds to the spectral density
\cite{dutsch2003generalized}:
\begin{eqnarray}
    \rho(\omega^2 )\sim\omega^{2\nu}, 
\end{eqnarray}
where $\nu=\Delta-1/2$.\footnote{More generally, in GFF in $d$-spacetime dimensions the parameter $\nu=\Delta-d/2$.} In this case, we denote the GFF field with $\varphi_\Delta(t)$:
\begin{eqnarray}\label{GFFfield}
    \varphi_\Delta(t)=\int d\omega\Theta(\omega)\omega^\nu(e^{i\omega t}a_\omega^\dagger+e^{-i\omega t}a_\omega)\ .
\end{eqnarray}
Taking the closure with respect to the inner product in (\ref{innerprod}) enlarges the set of Schwartz functions to those that satisfy
\begin{eqnarray}
    \int d\omega \omega^{2\nu} |f(\omega)|^2<\infty\ .
\end{eqnarray}
The space of these functions is called the Sobolev space $\mathbb{H}^{(-\nu),2}(\mathbb{R}_t,\mathbb{C})$. The one-particle Hilbert space is $\mH_\mu=L^2(\mathbb{R}_t,d\omega \Theta(\omega)\:\omega^{2\nu})$. 


\end{enumerate}

{\bf Fock space:} The vector space $\mK$ with a choice of symplectic bilinear is a {\it symplectic vector space}. Exponentiating this vector space gives us the group of Weyl unitaries.
For bosonic degrees of freedom, we define the unitary coherent operators $W(f)=e^{i\varphi(f)}$.\footnote{The generalization to fermionic fields is straightforward.} 
They satisfy the Weyl algebra
\begin{eqnarray}
    &&W(f)W(g)=e^{-\frac{1}{2}[\varphi(f),\varphi(g)]} W(f+g)=e^{-i\Im \braket{f|g}}W(f+g)\ .
\end{eqnarray}

We define the Fock space $\mF=\oplus_q \mH_\mu^{\otimes q,sym}$, where $\ket{\Omega}\in \mH_0\in \mF$ is the unique vector (up to an overall phase) that is invariant under time translations. The Fock space inherits a unitary representation of time translations from the one-particle Hilbert space. The choice of complex inner product in (\ref{innerprod}) fixes a quasi-free state 
\begin{eqnarray}
&&\braket{\Omega|W(f)|\Omega}=e^{-\frac{1}{2}\braket{f,f}_\mu}\ .
\end{eqnarray}
This is to be compared with (\ref{CharGauss}) as the characteristic equation for multi-variate Gaussian random variables. The analog of the covariance matrix in (\ref{covariance}) is $\rho(\omega^2)\Theta(\omega)$. The choice of state only concerns the real part of the inner product $\braket{f|f}_\mu$. The KMS state for the GFF is \cite{lindner2013perturbative}
\begin{eqnarray}\label{KMSGFF}
    \braket{f|f}_\mu=\int \frac{d\omega\Theta(\omega)\rho(\omega^2)}{2\cosh(\beta \omega/2)}|\hat{f}(\omega)|^2\ .
\end{eqnarray}
This choice satisfies the KMS condition in frequency space we discussed in (\ref{KMSfreq}). 

Under time evolution, Weyl operators evolve according to
\begin{eqnarray}
&&e^{-i Ht}W(f)e^{i H t}=W(f_t)\ .
\end{eqnarray}
Taking the double commutant of the Weyl group in the Fock space $\mF$ gives the von Neumann algebra of the GFF.
All polynomials of $\varphi(f)$ are affiliated with this algebra. The higher point functions of $\varphi(x)$ can be found using the Wick contractions of Lemma \ref{isser}. The odd correlation functions vanish for any state and the even point function can be written as a sum over two-point functions. As we argued in Lemma \ref{higherpointmixing}, it follows from Wick contractions that the clustering of the two-point function implies the clustering of all higher-point functions.

Now, we are ready to prove the following theorem:
\begin{theorem}\label{GHOclsuter}
  In a theory of GFF with spectral density $\rho(\omega^2)$ and an arbitrary quasi-free state $\psi$ represented in its GNS Hilbert space as the vector $\ket{1}_\psi$, the following statements are equivalent:
   \begin{enumerate}
   \item For all $f,g$ the commutator clusters 
   \begin{eqnarray}
   \lim_{|t|\to \infty}|{}_\psi\braket{1|[\varphi(f_t),\varphi(g)]|1}_\psi|=0\ .    
   \end{eqnarray}
       \item 2k-point Strong Asymptotic Abelianness, i.e., all the  Out-of-Time Ordered Correlators (OTOCs) cluster:
       \begin{eqnarray}\label{commutclus}
         \forall\: 1\leq k\in\mathbb{N}:\qquad  \lim_{|t|\to \infty}{}_\psi\braket{1||[\varphi(g),\varphi(f_t)]|^{2k}|1}_\psi=0\ .
       \end{eqnarray}
       \item Norm Asymptotic Abelianness:
       \begin{eqnarray}\label{normasymptotAbelianness}
       \lim_{|t|\to \infty}\|[\varphi(f_t),\varphi(g)]\|_\infty=0   \ .
       \end{eqnarray}
   \end{enumerate}
\end{theorem}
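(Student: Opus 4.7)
The plan is to exploit the defining feature of GFF that makes all three asymptotic-abelianness conditions collapse: the commutator $[\varphi(f_t),\varphi(g)]$ is central. Concretely, equation \eqref{kernelcomm} combined with the action of time translation on Schwartz functions ($\widehat{f_t}(\omega) = e^{-i\omega t}\hat{f}(\omega)$) shows that
\begin{equation}
[\varphi_\rho(f_t),\varphi_\rho(g)] = c(t)\,\mathbb{1}, \qquad c(t) = \int d\omega\,\Theta(\omega)\rho(\omega^2)\bigl(e^{i\omega t}\hat{f}^*(\omega)\hat{g}(\omega) - e^{-i\omega t}\hat{f}(\omega)\hat{g}^*(\omega)\bigr),
\end{equation}
with no operator content on the right-hand side. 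This identification is state-independent and is the hinge of the entire argument: it reduces all three conditions to a single scalar condition on the size of $c(t)$.

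Next I would carry out three one-line identifications. Since $\varphi(f_t)$ and $\varphi(g)$ are self-adjoint, the commutator is anti-self-adjoint, so $c(t)$ is purely imaginary and $|c(t)|^2 = -c(t)^2$. For statement (1), the expectation of $c(t)\mathbb{1}$ in the normalized vector $\ket{1}_\psi$ is $c(t)$, whose modulus equals $|c(t)|$. For statement (3), the operator norm of a scalar multiple of the identity gives $\|[\varphi(f_t),\varphi(g)]\|_\infty = |c(t)|$. For statement (2),
\begin{equation}
|[\varphi(g),\varphi(f_t)]|^{2k} = \bigl([\varphi(g),\varphi(f_t)]^\dagger[\varphi(g),\varphi(f_t)]\bigr)^k = \bigl(-c(t)^2\bigr)^k\mathbb{1} = |c(t)|^{2k}\,\mathbb{1},
\end{equation}
whose expectation in $\ket{1}_\psi$ is $|c(t)|^{2k}$. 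Hence the three conditions read respectively $|c(t)|\to 0$, $|c(t)|^{2k}\to 0$ for every $k\geq 1$, and $|c(t)|\to 0$, which are trivially equivalent.

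There is essentially no obstacle to overcome here; the ``hard part'' is recognizing that in GFF the commutator is a c-number, after which the proof is one line. The significance of the theorem lies not in its difficulty but in the payoff: clustering of the elementary two-point commutator --- the weakest possible asymptotic-abelianness statement --- already upgrades to norm asymptotic abelianness and to the clustering of all higher OTOCs, which are precisely the inputs needed to invoke Theorem \ref{clustering-thm} and conclude that the resulting GFF von Neumann algebra is a type III$_1$ factor.
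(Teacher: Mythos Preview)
Your proof is correct and rests on the same key observation as the paper's: in GFF the commutator $[\varphi(f_t),\varphi(g)]$ is a scalar multiple of the identity, which collapses all three conditions to the single statement $|c(t)|\to 0$. The paper makes this identification explicitly only for (1$\Leftrightarrow$3) and then uses the general operator inequalities $|{}_\psi\langle 1|X^{2k}|1\rangle_\psi|\leq\|X\|^{2k}$ and $|{}_\psi\langle 1|X|1\rangle_\psi|^2\leq{}_\psi\langle 1|X^\dagger X|1\rangle_\psi$ for (3$\to$2) and (2$\to$1), whereas you push the scalar identification through all three parts directly---a slightly more economical route to the same conclusion.
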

\begin{proof}
{\bf (1$\Leftrightarrow $3)} In GFF the commutator of the fundamental field is proportional to identity:
\begin{eqnarray}
    [\varphi(f_t),\varphi(g)]={}_\psi\braket{1|[\varphi(f_t),g(t)]|1}_\psi\ .
\end{eqnarray}
Therefore, statements 1 and 3 are equivalent. 

{\bf (3$\to$ 2):} This follows from the inequality
\begin{eqnarray}
    |{}_\psi\braket{1|[\varphi(f_t),\varphi(g)]^{2k}|1}_\psi|\leq \|[\varphi(f_t),\varphi(g)]^{2k}\|\leq \|[\varphi(f_t),\varphi(g)]\|^{2k}
\end{eqnarray}
where $k\geq 1$.

{\bf (2$\to$ 1):} Take $k=1$. The statement follows from the inequality $|{}_\psi\braket{1|X|1}_\psi|^2\leq {}_\psi\braket{1|X^\dagger X|1}_\psi$ for $X=[\varphi(f_t),\varphi(g)]$. 
\end{proof}
Finally, we notice that, by the Riemann-Lebesgue Lemma, if $\rho(\omega^2)$ is a measurable function of $\omega$ then the assumption of the theorem above holds.

\begin{theorem}\label{GFFtypeiii1}
    In a theory of GFF with spectral density $\rho(\omega^2)$ and an arbitrary quasi-free state $\psi$ represented in its GNS Hilbert space as the vector $\ket{1}_\psi$,
    if for all $f,g$ the two-point function of the fundamental field clusters in modular time (strong mixing)
       \begin{eqnarray}
        \lim_{t\to \infty}{}_\psi\braket{1|\varphi(f_t)\varphi(g)|1}_{\psi}=0\ ,
    \end{eqnarray}
    then
    \begin{enumerate}
        \item We have Norm Asymptotic Abelianness, i.e., 
        (\ref{normasymptotAbelianness}) holds.
       \item For all $f$ we have $\lim_{t\to \infty}W(f_t)={}_\psi\braket{1|W(f)|1}_\psi$.
      \item The von Neumann algebra of observables of this GFF is mixing, and hence a type III$_1$ factor. 
      \end{enumerate}
\end{theorem}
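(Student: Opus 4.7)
The plan is to prove the three items in order, with (1) and (2) providing the operator-theoretic input for the algebraic conclusion (3). Throughout, I exploit the fact that in a quasi-free state the full structure of the GFF algebra is determined by the two-point function of $\varphi$, so Wick-type arguments (Lemma \ref{isser}, Lemma \ref{higherpointmixing}) do most of the work.

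For item (1), I observe that in GFF the commutator $[\varphi(f_t),\varphi(g)]$ is a c-number proportional to $\Im \braket{f_t,g}_\mu$, i.e., to the antisymmetric part of the two-point function. The clustering of $\braket{\varphi(f_t)\varphi(g)}_\psi$ therefore forces $\|[\varphi(f_t),\varphi(g)]\|_\infty \to 0$ directly. Equivalently, one invokes Theorem \ref{GHOclsuter}: two-point clustering implies commutator clustering in expectation, which in GFF is the same as norm clustering of the commutator.

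For item (2), I use the Weyl relations together with the quasi-free characteristic function $\braket{W(f)}_\psi = e^{-\frac{1}{2}\|f\|_\psi^2}$ to compute the sandwich $\braket{W(h_1)W(f_t)W(h_2)}_\psi$, which after combining phases reduces to an expression of the form $(\text{phase}) \cdot e^{-\frac{1}{2}\|h_1+f_t+h_2\|_\psi^2}$. Expanding the square, every cross term coupling $f_t$ to $h_1,h_2$ takes the form $\braket{h_i,f_t}_\mu$ — precisely the two-point functions that cluster by hypothesis. In the limit $t\to\infty$ the expression therefore factorizes as $\braket{W(h_1)W(h_2)}_\psi \braket{W(f)}_\psi$. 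Since vectors of the form $W(h)\ket{1}_\psi$ span a dense subspace of the GNS Hilbert space and $\|W(f_t)\|=1$ uniformly, this yields weak operator convergence $W(f_t) \to \braket{W(f)}_\psi\, \mathbb{1}$.

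For item (3), the calculation behind (2) already shows that every centered Weyl operator $W(f) - \braket{W(f)}_\psi$ is $\psi$-mixing against any other Weyl operator. The Weyl relations $W(f)W(g) = e^{-i\Im\braket{f,g}_\mu}W(f+g)$ imply that products of Weyl operators are again Weyl operators, so the $*$-algebra generated by $\{W(f)\}$ consists of mixing operators, and Lemma \ref{mixingop} extends this to finite linear combinations. Lemma \ref{mixingvN} then promotes the mixing property to the weak-operator closure — which, by definition, is the GFF von Neumann algebra. Applying Theorem \ref{clustering-thm} to this nontrivial $\psi$-mixing von Neumann algebra gives the desired conclusion: it is a type III$_1$ factor with trivial centralizer.

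The main obstacle is not any single computation but the passage from mixing of the generators to mixing of the full von Neumann closure; Lemma \ref{mixingvN} renders this painless but depends crucially on normality of $\psi$ to exchange weak limits with expectation values. A minor technical point is that the generating set of Weyl operators is indexed by the uncountable space $\mK$ while Lemma \ref{mixingop} is stated for countable sets; this is harmless since mixing is a bilinear property and any pair of operators in the algebraic closure already lives in a countably generated sub-$*$-algebra.
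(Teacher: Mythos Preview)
Your proposal is correct and follows essentially the same route as the paper: (1) the commutator is a c-number so two-point clustering gives norm asymptotic Abelianness via Theorem \ref{GHOclsuter}; (2) the Weyl relations plus the quasi-free characteristic function reduce the matrix element $\braket{W(h_1)W(f_t)W(h_2)}_\psi$ to exponentials of clustering two-point functions, yielding weak convergence $W(f_t)\to\braket{W(f)}_\psi$; (3) closure under multiplication of Weyl operators, Lemma \ref{mixingvN}, and Theorem \ref{clustering-thm} finish. Your explicit mention of uniform boundedness $\|W(f_t)\|=1$ to pass from dense matrix elements to weak operator convergence is in fact a detail the paper glosses over.
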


\begin{proof} 
{\bf (1)} It follows from the inequality below
\begin{eqnarray}
    \lim_{t\to \infty}|{}_\psi\braket{1|[\varphi(f_t),\varphi(g)]|1}_\psi|\leq 2\lim_{t\to \infty}|{}_\psi\braket{1|\varphi(f_t)\varphi(g)|1}_\psi|=0\ .
\end{eqnarray}
Then, it follows from Theorem \ref{GHOclsuter} that the GFF algebra satisfies asymptotic Abelianness. 

{\bf (2)} Consider the matrix elements of the time-evolved $W(g_t)$:
\begin{eqnarray}
    {}_\psi\braket{W(f_1)|W(g_t)|W(f_2)}_\psi=e^{-i\Im\braket{g_t|f_1+f_2}_\mu}e^{-\Re\braket{g_t|f_2-f_1}_\mu}{}_\psi\braket{1|W(g)|1}_\psi{}_\psi\braket{W(f_1)|W(f_2)}_\psi\nn\ .
\end{eqnarray}
Since the two-point function and the commutator cluster we have
\begin{eqnarray}
    &&\lim_{t\to \infty}\braket{g_t|f_2\pm f_1}_\mu=0\nn\\
    &&\lim_{t\to \infty}{}_\psi\braket{W(f_1)|(W(g_t)-\braket{W(g)})|W(f_2)}_\psi=0\ .
\end{eqnarray}
Since in the von Neumann algebra, we define the limit using the weak closures, we have the operator statement
\begin{eqnarray}
    \lim_{t\to \infty}(W(g_t)-{}_\psi\braket{1|W(g)|1}_\psi)=0\ .
\end{eqnarray} 
{\bf (3)} Clearly, any operator that is a finite sum $a=\sum_i \alpha_i W(g_i)$ clusters. If we have a sequence of such operators $a_n$, since the algebra is weakly closed the limiting operators also cluster, therefore $\mA$ is a mixing von Neumann algebra. It follows from Theorem \ref{clustering-thm} that it is a type III$_1$ factor.
\end{proof}

Finally, we find the following Lemma:
 \begin{lemma}\label{measurableresult}
 In a theory of GFF in a KMS state if the spectral density $\rho(\omega^2)$ is a measurable function of real frequencies then
  \begin{enumerate}
         \item The observable algebra $\mA$ has Norm Asymptotic Abelianness.
        \item For all $f$ we have $\lim_{t\to \infty}(W(f_t)-\braket{W(f)}_\beta)=0$.
       \item The observable algebra is mixing, and hence a type III$_1$ factor with a trivial centralizer. 
       \end{enumerate}
 \end{lemma}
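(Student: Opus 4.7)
The plan is to reduce the lemma to Theorem \ref{GFFtypeiii1}, whose three conclusions already match items (1)--(3) verbatim once we replace $\psi$ by the KMS state $\ket{1}_\beta$. Thus the only real task is to show that measurability of $\rho(\omega^2)$ forces the two-point function of the fundamental field $\varphi$ to cluster in time; the rest is a black-box invocation.

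First, I would write the time-evolved KMS two-point function of $\varphi$ in Fourier form. Combining the expansion \eqref{GFF01} with the KMS inner product \eqref{KMSGFF}, one obtains a representation
\begin{eqnarray}
{}_\beta\braket{1|\varphi(f_t)\varphi(g)|1}_\beta \,=\, \int_{\mathbb{R}} m(\omega)\,\hat{f}^{*}(\omega)\hat{g}(\omega)\,e^{-i\omega t}\,d\omega,
\end{eqnarray}
where $m(\omega)$ is built from $\Theta(\pm\omega)\rho(\omega^2)$ and the Bose--Einstein weight $(1-e^{-\beta\omega})^{-1}$ (equivalently, the symmetric factor $(2\cosh(\beta\omega/2))^{-1}$ times an exponential twist from the modular KMS condition). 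By the hypothesis $\rho(\omega^2)$ is Lebesgue measurable, and the thermal factor is a fixed measurable function of $\omega$; their product $m(\omega)$ is therefore measurable on $\mathbb{R}$.

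Second, I would verify $L^1$ integrability so that the Riemann--Lebesgue lemma applies. Since $\ket{f},\ket{g}$ are representatives of vectors in the one-particle Hilbert space $\mH_\mu$ built from the measure $d\mu(\omega)=m(\omega)\,d\omega$, the Cauchy--Schwarz inequality with respect to $d\mu$ gives
\begin{eqnarray}
\int_{\mathbb{R}}\bigl|m(\omega)\hat{f}^{*}(\omega)\hat{g}(\omega)\bigr|\,d\omega \,\leq\, \|f\|_\mu\,\|g\|_\mu \,<\,\infty .
\end{eqnarray}
Hence the Fourier integrand lies in $L^1(\mathbb{R}_\omega)$, and the Riemann--Lebesgue lemma immediately yields
\begin{eqnarray}
\lim_{|t|\to\infty}{}_\beta\braket{1|\varphi(f_t)\varphi(g)|1}_\beta \,=\, 0
\end{eqnarray}
for every Schwartz pair $f,g$. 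This is precisely the hypothesis of Theorem \ref{GFFtypeiii1} with $\psi=\ket{1}_\beta$, so items (1), (2) and the statement that the algebra is $\beta$-mixing follow at once; Theorem \ref{clustering-thm} (invoked inside Theorem \ref{GFFtypeiii1}) then promotes "mixing" to "type III$_1$ factor with trivial centralizer", giving (3).

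The only genuine obstacle is the infrared behaviour of the thermal weight $m(\omega)$ near $\omega=0$, where the Bose--Einstein factor diverges like $1/\omega$. Without care, one could worry that $\|f\|_\mu$ might be infinite even for nice Schwartz test functions. The way out is simply to work with the dense set of $f$'s that do lie in $\mH_\mu$: the $\omega=0$ divergence is absorbed in the very definition of the one-particle Hilbert space (functions violating this bound get quotiented out as null vectors, exactly as described in the construction preceding \eqref{Lebesqchoice}). With this interpretation built in, Cauchy--Schwarz does the rest and no further hypothesis on $\rho$ -- not even local integrability -- is needed beyond measurability.
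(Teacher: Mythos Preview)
Your proposal is correct and follows the paper's own route: write the KMS two-point function as a Fourier integral, argue the integrand is $L^1$, apply Riemann--Lebesgue to obtain clustering of the fundamental field, and then invoke Theorem \ref{GFFtypeiii1} to get (1)--(3). The paper compresses the $L^1$ step into a single clause (``since $\hat f,\hat g$ are bounded, the kernel of the KMS state in \eqref{KMSGFF} is in $L^1$''); your Cauchy--Schwarz in $L^2(d\mu)$ is a cleaner justification of the same point. One minor remark: the infrared worry in your last paragraph is somewhat misplaced for the kernel actually used --- \eqref{KMSGFF} involves $(2\cosh(\beta\omega/2))^{-1}$, which is bounded at $\omega=0$, so no $1/\omega$ divergence arises from the thermal weight in this symmetrized form.
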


 \begin{proof} If we choose $\rho(\omega^2)$ to be measurable since $f(\omega)$ and $g(\omega)$ are bounded, the kernel of the KMS state in (\ref{KMSGFF}) is in $L^1(\omega)$. 
 It follows from the Riemann-Lebesgue lemma that the two-point function clusters, and the proof follows from Theorem \ref{GFFtypeiii1}. 
 \end{proof}

The result above settles the conjecture of Leutheusser and Liu in \cite{leutheusser2021emergent}. They conjectured that the GFF algebra in a KMS state is type III$_1$ if and only if the spectral density is a smooth function that is supported everywhere on real frequencies $\omega$. 
The assumption of measurable $\rho(\omega^2)$ is a lot weaker than smoothness, and  the assumption of entire support in $\mathbb{R}$ is not needed. As we see below, there are type III$_1$  GFF algebras with spectral density $\rho(\omega^2)$ supported on a finite interval in frequency space. However, it is plausible that a smooth and non-vanishing spectral density is required for the type III$_1$ algebra to be hyperfinite. 

In a KMS state, the modular Hamiltonian in the Fock space is the same as the time evolution. For positive $\omega$ we have the eigenoperators 
\begin{eqnarray}
    &&e^{iHt}a_\omega e^{-i Ht}=e^{-i \omega t} a_{\omega} \, ,\nn\\
    &&e^{iHt}a^\dagger_\omega e^{-i Ht}=e^{i \omega t} a_{\omega}^{\dagger} \, ,
\end{eqnarray}
in the one-particle Hilbert space. There are more eigenoperators in the multi-particle sectors. More generally, defining $a_{-\omega}:=a^\dagger_\omega$ we find that any operator of the form
\begin{eqnarray}\label{mutliparspec}
    e^{i Ht}(a_{\omega_1}^{m_1}a_{\omega_2}^{m_2}\cdots a_{\omega_n}^{m_n})e^{-i Ht}=\lb  \sum_{i=1}^n m_i \omega_i\rb (a_{\omega_1}^{m_1}a_{\omega_2}^{m_2}\cdots a_{\omega_n}^{m_n})
\end{eqnarray}
is an eigenoperator of time evolution.
As we include all multi-partite states, even with a spectral density supported at two points (two modes $\omega_1,\omega_2$ that are linearly independent over the set of rationals) the spectrum of the modular operator is already, generically, dense in $\mathbb{R}$. If $\rho(\omega^2)$ is the characteristic function of an interval $I$ in the frequency space ($\rho(\omega^2)=1$ inside interval $I$ and zero outside), once we include the eigenoperators in the multiparticle sectors, it follows from (\ref{mutliparspec}) that the spectrum of the modular operator is the entire $\mathbb{R}_+$, and since the spectral density is measurable, it follows from Lemma \ref{measurableresult} that the operators are mixing. Then, Lemma \ref{trivialcentralizer} implies that the centralizer of $\psi$ is trivial. Finally, a well-known result of Connes (Lemma \ref{factorcentralizer} in appendix \ref{app:op}) implies that the spectrum of all states is $\mathbb{R}_+$, hence the algebra is type III$_1$.

The converse of the Leutheusser-Liu conjecture cannot hold either. To see this, assume that we have a type III$_1$ algebra with a non-trivial centralizer. If their conjecture were true that $\rho(\omega^2)$ would have to be measurable. Then, Lemma \ref{measurableresult} implies that the algebra is mixing, and therefore has a trivial centralizer which constitutes a contradiction. Therefore, there exist GFF algebras that are type III$_1$ factors but do not even have a measurable spectral density.


 \subsection{Time-band algebras of GFF}\label{sec:timeband}

We are now ready to construct the time-band algebras $\mB_I$.
The set of Schwartz functions supported on a time interval $I=(t_1,t_2)$ forms a closed linear subspace $\mS_I\subset \mK$. Weyl operators $W(f)$ with $\ket{f}\in \mS_I$ are closed under multiplication and the $\dagger$ operation of the Fock space: $W(f)^*=W(-f)$. The formal sums $\sum_i c_i W(f_i)$ generate a $*$-algebra. As we review in appendix \ref{app:op}, there exists a natural C$^*$-norm, and closing the $*$-algebra in this norm gives a C$^*$-algebra of the time interval  $\mB_I$ that is independent of any choice of state.

There exists an analog of the Reeh-Schlieder theorem for these C$^*$-algebras in time:

 \begin{lemma}[Reeh-Schlieder in time]\label{Reehschliederpure}
     Consider a theory of GFF with $\mB_{(-\infty,\infty)}$ a type I algebra irreducibly represented on $\mH$. The C$^*$-algebra $\mB_I$ is associated with a finite time band $I$. Consider $\ket{\Psi}\in \mH$ a pure state such that $e^{i Ht}\ket{\Psi}$ is an entire vector. Then, the closure of $\overline{\mB_I\ket{\Psi}}=\mH$. 
 \end{lemma}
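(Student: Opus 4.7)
\emph{Proof plan.} Since $\mB_{(-\infty,\infty)}$ is type I and irreducibly represented on $\mH$, its weak closure is $B(\mH)$, so every nonzero vector is cyclic for it. In particular $\overline{\mB_{(-\infty,\infty)}\ket{\Psi}} = \mH$, and it suffices to show that any $\ket{\chi}\in\mH$ with $\ket{\chi}\perp \mB_I\ket{\Psi}$ also satisfies $\ket{\chi}\perp \mB_{(-\infty,\infty)}\ket{\Psi}$, from which $\ket{\chi}=0$. The strategy is the Reeh-Schlieder argument adapted to the time band: use the spectrum condition $H\geq 0$ and the entirety of $\ket{\Psi}$ to analytically continue translated multi-point matrix elements into a tube in $\mathbb{C}^n$, then invoke edge-of-the-wedge.

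Fix a strictly smaller subinterval $I'\subset I$ and Schwartz functions $f_1,\ldots,f_n$ supported in $I'$. For real $(t_1,\ldots,t_n)$ in a small neighborhood $U$ of the origin, each translate $f_{i,t_i}(s):=f_i(s-t_i)$ is still supported in $I$, so the product of translated Weyl operators lies in $\mB_I$ and the function
\[
F(t_1,\ldots,t_n) := \braket{\chi | W(f_{1,t_1})\cdots W(f_{n,t_n}) | \Psi} = \braket{\chi | e^{iHt_1}W(f_1) e^{iH(t_2-t_1)} \cdots W(f_n) e^{-iHt_n}|\Psi}
\]
vanishes on $U$. Change variables to $\tau_1=t_1$, $\tau_k=t_k-t_{k-1}$ for $2\leq k\leq n$, and $\tau_{n+1}=-t_n$. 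In the tube $\{\mathrm{Im}\,\tau_k>0 : k=1,\ldots,n\}$ every interior exponential $e^{iH\tau_k}$ is bounded (of norm $\leq 1$) because $H\geq 0$ on the GFF Fock space, and the rightmost factor $e^{iH\tau_{n+1}}\ket{\Psi}$ is well-defined and analytic in $\tau_{n+1}$ because $z\mapsto e^{iHz}\ket{\Psi}$ is entire by hypothesis. Combined with the unitarity of the $W(f_i)$, this makes $F$ analytic in the tube with continuous boundary values on $\mathbb{R}^n$. Since $F$ vanishes on the open real subset $U$ of the distinguished boundary, the edge-of-the-wedge theorem (equivalently, iterated Schwarz reflection across $U$ in each $\tau_k$) forces $F\equiv 0$ throughout the tube, and hence $F(t_1,\ldots,t_n)=0$ for all real $t_i$.

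As $(t_1,\ldots,t_n)$ ranges over $\mathbb{R}^n$ and the $f_i$ range over Schwartz functions supported in $I'$, the translates $f_{i,t_i}$ span a dense subset of Schwartz functions on $\mathbb{R}$. Consequently the products $W(f_{1,t_1})\cdots W(f_{n,t_n})$ generate a norm-dense $*$-subalgebra of $\mB_{(-\infty,\infty)}$, so $\ket{\chi}\perp \mB_{(-\infty,\infty)}\ket{\Psi}=\mH$, giving $\ket{\chi}=0$. The main technical obstacle is the multivariable analytic continuation together with the edge-of-the-wedge step; both are controlled cleanly by exactly the two available hypotheses, namely $H\geq 0$ on the Fock space (which bounds the interior exponentials in the tube) and entirety of $\ket{\Psi}$ for $H$ (which accommodates the one factor that is forced into the opposite half plane by the constraint $\sum_k \tau_k=0$).
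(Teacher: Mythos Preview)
Your proof is correct and follows the same Reeh--Schlieder strategy as the paper: use $H\geq 0$ to bound the intermediate time-evolution factors and the entirety of $e^{iHt}\ket{\Psi}$ to control the one exponential forced into the opposite half-plane, then analytically continue to pass from $I$ to all of $\mathbb{R}$. The paper works with products of field operators $\varphi(t_1)\cdots\varphi(t_n)$ and iterates a one-variable Cauchy-integral argument (following Witten's review), while you use bounded Weyl unitaries and a single multivariable edge-of-the-wedge step; these are equivalent packagings of the same analytic continuation.
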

 \begin{proof} We simply mimic the Reeh-Schlieder proof in \cite{witten2018aps}.
 Consider any vector $\ket{\chi}\in \mH$ and matrix elements 
 \begin{eqnarray}\label{overlap}
     c_\chi(t_1,\cdots ,t_n)=\braket{\chi|\varphi(t_1)\cdots \varphi(t_n)|\Psi}\ .
 \end{eqnarray}
 We prove that if $\ket{\chi}\in \mH$ is perpendicular to all $\varphi(t_1)\cdots \varphi(t_n)\ket{\Psi}$ with all $t_i\in I$ and $n$ an arbitrary integer then $\ket{\chi}$ is perpendicular to all $\varphi(t_1)\cdots \varphi(t_n)\ket{\Psi}$ with $t_i\in (-\infty,\infty)$. Therefore, by the definition of $\mH$ it must be that $\ket{\chi}=0$. To establish this, we choose $t_{n} \in I$ and note that $c_\chi(t_1,\cdots , t_n + u) = \braket{\chi|\varphi(t_1)\cdots e^{iHu}\varphi(t_n)e^{-iHu}|\Psi}$ is analytic in the upper half-plane $\Im(u)> 0$ because $H$ is bounded from below and $e^{iHu}\ket{\Psi}$ is assumed to be entire. Moreover, we note that $c_\chi(t_1,\cdots , t_n+u)$ is continuous on real $u$ axis and that $c_\chi(t_1,\cdots , t_n+u) = 0$ for $u \in I_\epsilon = [-\epsilon,\epsilon]$ since $t_{n}+u \in I$ for small enough $u$. As shown in \cite{witten2018aps}, these properties are enough to prove that $c_\chi(t_1,\cdots , t_n+u)$ is analytic for $ u \in I_\epsilon$.
The idea of \cite{witten2018aps} is that for any $u$ such that $\Im(u)\ge 0$, we can write the Cauchy integral
 \begin{eqnarray}
     c_\chi(t_1,\cdots, t_n+u)=\oint_\gamma du' \frac{c_\chi(t_1,\cdots, t_n+u')}{u'-u}\ ,
 \end{eqnarray}
 where the contour $\gamma$ is in the upper half-plane and it encloses the point $u$. Since $c_\chi$ vanishes on the interval $u\in I_\epsilon$, we can choose $\gamma$ in the upper half-plane with a piece on $I_\epsilon$ (the real axis), and can remove the contribution from the piece on $I_\epsilon$ from the Cauchy integral. This implies that the Cauchy integral remains analytic even when $u \in I_\epsilon$. Now since $c_\chi$ vanishes for $u\in I_\epsilon$ and is analytic for $u\in I_\epsilon$, it must vanish on the whole real $u$ axis. 
 Repeating this argument for each $t_n$, one by one, we can take the operators out of $I$, and we find that $c_\chi(t_1,\cdots ,t_n)$ in (\ref{overlap}) vanishes for any $t_{i} \in (-\infty,\infty)$
 , therefore $\ket{\chi}=0$.
 \end{proof}

 The theorem above is relevant for the pure state of $\mathcal{N}=4$ SYM in the $N\to \infty$ limit. The set of vectors $\ket{\Psi}$ with entire $e^{iHt}\ket{\Psi}$ is dense in the Hilbert space $\mH$. There are two ways to see this. First, define $P_E$ the projection to the subspace of Hamiltonian with energies lower than a cut-off $E$. Then, any vector $\ket{\Psi}\in P_E\mH$ leads to entire vectors $e^{i Ht}\ket{\Psi}$. Second, any vector smoothed over time with a Gaussian 
 \begin{eqnarray}
     \ket{\tilde{\Psi}}\sim\int_{-\infty}^\infty dt\, e^{-\gamma t^2} e^{iHt} \ket{\Psi}\ .
 \end{eqnarray}
 leads to the entire vector $e^{iHt}\ket{\tilde{\Psi}}$.

 It is also worthwhile to note that in the proof of the above theorem, we did not need $c_\chi$ to be analytic on the whole upper half-plane. In fact, if $c_\chi$ was only analytic on a strip $0\le \Im(u) \le u_*$, then the argument from \cite{witten2018aps} presented above would still be valid. This allows us to present the following corollary.
 
\begin{corollary}\label{ReehSchliederKMS}
    Consider a theory of GFF in a KMS state represented by the thermofield double vector $\ket{1}_\beta$. If $\mB_I$ is a time band algebra, then $\overline{\mB_I\ket{1}_\beta}=\overline{\mB_{(-\infty,\infty)}\ket{1}_\beta}$.
\end{corollary}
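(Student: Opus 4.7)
The plan is to run the exact argument of Lemma~\ref{Reehschliederpure} with $\ket{\Psi}$ replaced by the thermofield double $\ket{1}_\beta$, using strip analyticity instead of full upper-half-plane analyticity. As the text immediately preceding this corollary already notes, the Cauchy-integral manoeuvre of~\cite{witten2018aps} remains valid when $c_\chi$ is analytic only in a strip $0\le\Im(u)\le u_*$, and the KMS condition supplies such a strip with $u_*=\beta/2$.

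Set $\mathcal{K}_I:=\overline{\mB_I\ket{1}_\beta}$ and $\mathcal{K}:=\overline{\mB_{(-\infty,\infty)}\ket{1}_\beta}$. The inclusion $\mathcal{K}_I\subset\mathcal{K}$ is obvious, and for the reverse I would take $\ket{\chi}\in\mathcal{K}$ orthogonal to $\mathcal{K}_I$ and prove $\ket{\chi}=0$. Since $\mB_{(-\infty,\infty)}$ is generated by the Weyl operators $W(f)$, Wick's theorem (Lemma~\ref{isser}) reduces the task to showing that the smeared $n$-point functions
\begin{equation}
 c_\chi(t_1,\dots,t_n):=\braket{\chi|\varphi(t_1)\cdots\varphi(t_n)|1}_\beta
\end{equation}
vanish for all real $t_i$, given that they vanish whenever every $t_i\in I$. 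The elementary KMS estimate
\begin{equation}
\bigl\|e^{-iHu}\ket{1}_\beta\bigr\|^2 \,=\, Z^{-1}\sum_m e^{2\,\Im(u)\,E_m-\beta E_m}\,<\,\infty \;\iff\; \Im(u)<\tfrac{\beta}{2},
\end{equation}
combined with the boundedness of $e^{iHu}$ for $\Im(u)\ge 0$ (since $H$ is bounded below), makes $F(u):=c_\chi(t_1,\dots,t_{n-1},t_n+u)$ analytic on the strip $0<\Im(u)<\beta/2$ and continuous up to the real axis; the identity $\varphi(t_n+u)=e^{iHu}\varphi(t_n)e^{-iHu}$ provides the concrete representation. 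Since $F$ vanishes on the open real interval $\{u:t_n+u\in I\}$, the Schwarz-reflection/Cauchy-integral argument of Lemma~\ref{Reehschliederpure} forces $F\equiv 0$ on the whole real line. Iterating the same strip-analyticity step on $t_{n-1},\dots,t_1$ in turn removes every support constraint, so $c_\chi\equiv 0$ and $\ket{\chi}$ annihilates a dense subspace of $\mathcal{K}$, hence vanishes.

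The main obstacle is checking that the width-$\beta/2$ strip of analyticity is preserved through the successive iterations on $t_{n-1},\dots,t_1$, because at step $k$ the continuation acts on $\varphi(t_{k+1})\cdots\varphi(t_n)\ket{1}_\beta$ rather than on $\ket{1}_\beta$ itself. The cleanest way to control this is to invoke the standard analyticity of thermal Wightman functions in the tube $\Im(t_1)<\cdots<\Im(t_n)$ with $\Im(t_n)-\Im(t_1)<\beta$, a chain of overlapping $\beta/2$-strips that accommodates the iteration; alternatively one can run the whole argument with bounded Weyl operators $W(f_u)=e^{iHu}W(f)e^{-iHu}$, where the domain issues disappear and the strip analyticity of $\braket{\chi|W(f_u)|1}_\beta$ follows directly from the KMS bound above, avoiding the iteration subtlety altogether. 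Either route reduces the proof to the strip version of Lemma~\ref{Reehschliederpure} that the authors have already flagged as valid.
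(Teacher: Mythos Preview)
Your proposal is correct and follows essentially the same path as the paper: strip analyticity from the KMS condition feeds the edge-of-the-wedge argument of Lemma~\ref{Reehschliederpure}. The one difference worth noting is that the paper replaces the physical Hamiltonian $H$ by the modular Hamiltonian $\hat{K}=H_L-H_R$. Since $e^{i\hat{K}t}\ket{1}_\beta=\ket{1}_\beta$ exactly, the ``entire vector'' hypothesis of Lemma~\ref{Reehschliederpure} is satisfied trivially, and at every step of the iteration one is analyzing $e^{i\hat{K}u}$ acting on a vector of the form $a\ket{1}_\beta$ with $a$ in the left algebra; Tomita--Takesaki then supplies the strip analyticity uniformly, so the iteration subtlety you flag never arises. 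Your route via $H$ recovers the same strip (width $\beta/2$) from the explicit KMS estimate, at the cost of having to manage the iteration more carefully---which you do, either through the tube domain of thermal Wightman functions or by passing to bounded Weyl operators. Both arguments are sound; the paper's packaging is simply more economical.
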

\begin{proof}
    We repeat the argument in Lemma \ref{Reehschliederpure} replacing the Hamiltonian with the modular Hamiltonian of the KMS state $\hat{K}$ obtaining the result.\footnote{For a discussion of the Reeh-Schlieder property in the KMS state of a conventional QFT see \cite{jakel2000reeh}.} Note that $e^{i\hat{K}t}\ket{1}_\beta=\ket{1}_\beta$ is entire in $t$. 
\end{proof}


 In local QFT, to any open set of spacetime $\mathcal{D}$, we associate a C$^*$-algebra $\mB_{\mathcal{D}}$. 
 It follows from the timelike tube theorem in \cite{strohmaier2023timelike} that the double commutant of the C$^*$-algebra of $\mathcal{D}$ is the von Neumann algebra of the time-like envelope of $\mathcal{D}$ (the causal completion of $\mathcal{D}$) \begin{eqnarray}
    \mB_{\mathcal{D}}''=\mA_{\mE(\mathcal{D})}\ .
\end{eqnarray}
Following \cite{strohmaier2023timelike}, we denote the time-like envelope of $\mathcal{D}$ by $\mE(\mathcal{D})$. The von Neumann algebras of local QFT are factors. Moreover, the algebra of topologically trivial regions in flat spacetime satisfies {\it Haag's duality}: 
\begin{eqnarray}
    (\mA_{\mE(\mathcal{D})})'=\mA_{\mE(\mathcal{D})'}
\end{eqnarray}
where the notation $\mathcal{D}'$ means the causal complement of $\mathcal{D}$ \cite{haag2012local}.\footnote{Note that for general C$^*$-algebra, there is no reason to expect Haag's duality.}

As we will see in the next subsection, the GFF C$^*$-algebras of time intervals, i.e. $\mB_I$, are not always von Neumann algebras. By von Neumann's bi-commutant theorem, they are von Neumann algebras if and only if they are equal to their bicommutant in the Fock space; see appendix \ref{app:op}. We will see examples of constructions in GFF algebras in holography that are von Neumann algebras. 
However, as opposed to local QFT, the algebras of time interval need not satisfy a timelike analog of Haag's duality. If $I=(s,s+T)$ is some time interval, we define its complement as $I'=(-\infty,s)\cup (s+T,\infty)$. We say that the time interval algebra $\mB_I$ satisfies {\it Haag's duality in time} if $(\mB_I)'=\mB_{I'}$. The Lemma \ref{doublecommut} shows that only in very special situations, one can expect an analog of Haag's duality for certain time interval algebras.

\begin{lemma}[Haag's duality in time]\label{doublecommut}
    The time interval algebra $\mB_I$ for a finite time interval $I=(s,s+T)$ satisfies timelike Haag's duality, $(\mB_I)'=\mB_{I'}$ only if $\rho(\omega^2)$ is an entire function of $\omega$ bounded above by $|\rho(\omega^2)|< C e^{|\omega|T}$. 
\end{lemma}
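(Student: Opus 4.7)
The plan is to use the fact that translates of $\mS_I$ landing inside $\mS_{I'}$ force a bandlimit on the causal two-point function, which by Paley--Wiener then translates to the stated analyticity condition on $\rho(\omega^2)$. The starting observation is that for $|t|\ge T$, the translated band $I+t = (s+t,s+T+t)$ lies entirely in the complement $I'=(-\infty,s)\cup(s+T,\infty)$. Hence if Haag's duality $(\mB_I)'=\mB_{I'}$ holds, the Weyl generators $W(f)$ with $f\in\mS_I$ and $W(f_t)$ with $f_t(s):=f(s-t)\in\mS_{I+t}\subseteq\mS_{I'}$ must commute for every such $t$.

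Since in a GFF the commutator of Weyl operators is a c-number equal to (an exponential of) the symplectic form, this commutativity reduces, after using \eqref{kernelcomm}, to the vanishing, for $|t|\ge T$, of
\begin{equation*}
h(t)\;:=\;[\varphi_\rho(f),\varphi_\rho(f_t)]\;=\;-2i\int_0^\infty d\omega\,\rho(\omega^2)|\hat{f}(\omega)|^2\sin(\omega t).
\end{equation*}
Thus $h$ is a tempered distribution with support in $[-T,T]$. The Paley--Wiener theorem then asserts that its Fourier transform, which a direct computation shows is proportional to $\mathrm{sgn}(\omega)\,\rho(\omega^2)\,\hat{f}(\omega)\hat{f}(-\omega)$, extends to an entire function of exponential type $T$. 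Moreover, since $f\in\mS_I$ has support in an interval of length $T$, the factor $\hat{f}(\omega)\hat{f}(-\omega)$ is itself the Fourier transform of the autocorrelation $f\star f$ supported in $(-T,T)$, and is therefore entire of exponential type $T$.

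Next I would isolate $\rho(\omega^2)$. Picking $f\in\mS_I$ whose Fourier transform has no real zeros (such $f$ exists, e.g.\ a smoothed convolution square of a bump supported in the interior of $I$), the identity
\begin{equation*}
\mathrm{sgn}(\omega)\rho(\omega^2)\;=\;\frac{-\hat{h}(\omega)}{2\pi\,\hat{f}(\omega)\hat{f}(-\omega)}
\end{equation*}
holds on the real line. Combining the entire extensions of numerator and denominator propagates entirety to $\rho(\omega^2)$ as a function of $\omega$, and the bound $|\rho(\omega^2)|<Ce^{T|\omega|}$ is the exponential-type $T$ estimate inherited from $\hat{h}$ through this division.

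The main obstacle is this last step: the division must be controlled globally on $\mathbb{C}$, not only on $\mathbb{R}$, and a priori complex zeros of $\hat{f}(\omega)\hat{f}(-\omega)$ need not be matched by zeros of $\hat{h}$, which could produce spurious poles in the putative entire extension. To resolve this I would exploit the freedom to vary $f$ within $\mS_I$, together with a Hadamard factorization argument: for each candidate zero of the denominator, some admissible $f$ gives a non-vanishing $\hat{f}(\omega)\hat{f}(-\omega)$ there, and consistency of the $\rho$ recovered from different $f$'s forces $\rho(\omega^2)$ to be globally entire with exponential type at most $T$. A related delicate point is the $\mathrm{sgn}(\omega)$ discontinuity at the origin, which forces $\rho$ to vanish at $\omega=0$ to high enough order to absorb the sign jump, consistent with the conclusion that $\rho(\omega^2)$ is entire in $\omega$.
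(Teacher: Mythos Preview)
Your approach is essentially the same as the paper's: both deduce that the commutator $G(t)=[\varphi(f_t),\varphi(g)]$ (with $f,g$ supported in $I$) has compact support in $[-T,T]$, and then invoke Paley--Wiener to conclude that its Fourier transform is entire of exponential type $T$. The paper uses two independent test functions $f,g$ rather than $f=g$, but the mechanism is identical.

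Where the two differ is in the final step. The paper simply asserts that ``its Fourier transform $\rho(\omega^2)$ is an entire function of exponential type $T$,'' suppressing the factors $\hat f,\hat g$ that in fact multiply $\rho(\omega^2)$ in the Fourier transform of $G$. You are more careful here: you correctly write the transform as $\mathrm{sgn}(\omega)\,\rho(\omega^2)\,\hat f(\omega)\hat f(-\omega)$ and flag the division by the entire function $\hat f(\omega)\hat f(-\omega)$ as the genuine issue. Your proposed resolution---varying $f$ over $\mS_I$ so that the complex zeros of the denominator move, and patching the resulting local extensions---is the right idea and is more than the paper's proof supplies. One point to tighten: knowing that numerator and denominator are each entire of exponential type $T$ does not by itself bound the type of the quotient (e.g.\ $e^{Tz}/e^{-Tz}$ has type $2T$); you will need either a Titchmarsh-type convolution argument on the time side, or to exploit the freedom to shrink the support of $f$ inside $I$ so that the autocorrelation has arbitrarily small support, forcing the type of $\rho(\omega^2)$ down to $T$.
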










\begin{proof} Assume Haag's duality in time for time interval algebra $\mB_I$ of GFF. Then, by definition, for all functions $f$ support on $I$ and $g'$ supported on $I'$ we have
\begin{eqnarray}
 [\varphi(f),\varphi(g')]=0\ .
\end{eqnarray}
Consider two functions $f,g$ supported on $I$. Then, the commutator $[\varphi(f_t),\varphi(g)]=G_{fg}(t)$ is a bounded continuous function of time that has compact support on $t\in (-T,T)$.\footnote{For any pair of operators in the algebra the (LL)-thermal correlator $\braket{a(t)b}_\beta$ is a continuous function that is the boundary value of a  function that is holomorphic a strip $\Im(t)\in [0,\beta]$. Therefore, the commutator is also continuous.} As a result, $G_{fg}(t)$ is in $L^2(-T,T)$. It follows from the Paley-Wiener theorem that we mentioned in section \ref{sec-info-loss} that its Fourier transform $\rho(\omega^2)$ is an entire function of real frequency $\omega$ that is of exponential type $T$. This finishes the proof.
\end{proof}

\subsection{GFF algebras in holography}

In holography, we are interested in conformal GFF: $d\mu(\omega)=d\omega \Theta(\omega)\omega^{2\nu}$. The discussions of the previous section imply that the von Neumann algebra of GFF in the KMS state $\mA=\mB''_{(-\infty,\infty)}$ above the Hawking-Page phase transition is a type III$_1$ factor. However, the time band algebras we defined in the previous section need not be von Neumann algebras. 
This raises two questions in holography. What is the time-band C$^*$-algebra $\mB_I$ we defined above in the bulk and what is the boundary analog of the von Neumann algebra of causally complete regions of the bulk?

We now go back and revisit our choice of the space of functions that led to the one-particle Hilbert space. 
As long as the spectral density is a function (not a sum of Dirac delta functions) 
we can absorb the choice of spectral density into the choice of functions by $f(\omega)\to \sqrt{\rho(\omega^2)}f(\omega)$.
There are two equivalent pictures: 
\begin{enumerate}
    \item Fixing the algebra of functions (for instance to be Schwartz functions) and changing the spectral density.
    \item Fixing the spectral density, for instance, $\rho(\omega^2)=1$ and choosing the corresponding GFF field $\varphi_1(t)$, and changing the algebra of functions.
\end{enumerate}
As we close the vector space of functions under the inner product set by the spectral density $\rho(\omega^2)$ both pictures become equivalent and give the one-particle Hilbert space $\mH_\mu=L^2(\mathbb{R}_t,d\omega \Theta(\omega)\rho(\omega^2))$. Therefore, we can view the choice of measure $d\mu(\omega)$ and the choice of the algebra of functions, interchangeably.

\begin{figure}[t]
    \centering
    \includegraphics[width=0.6\linewidth]{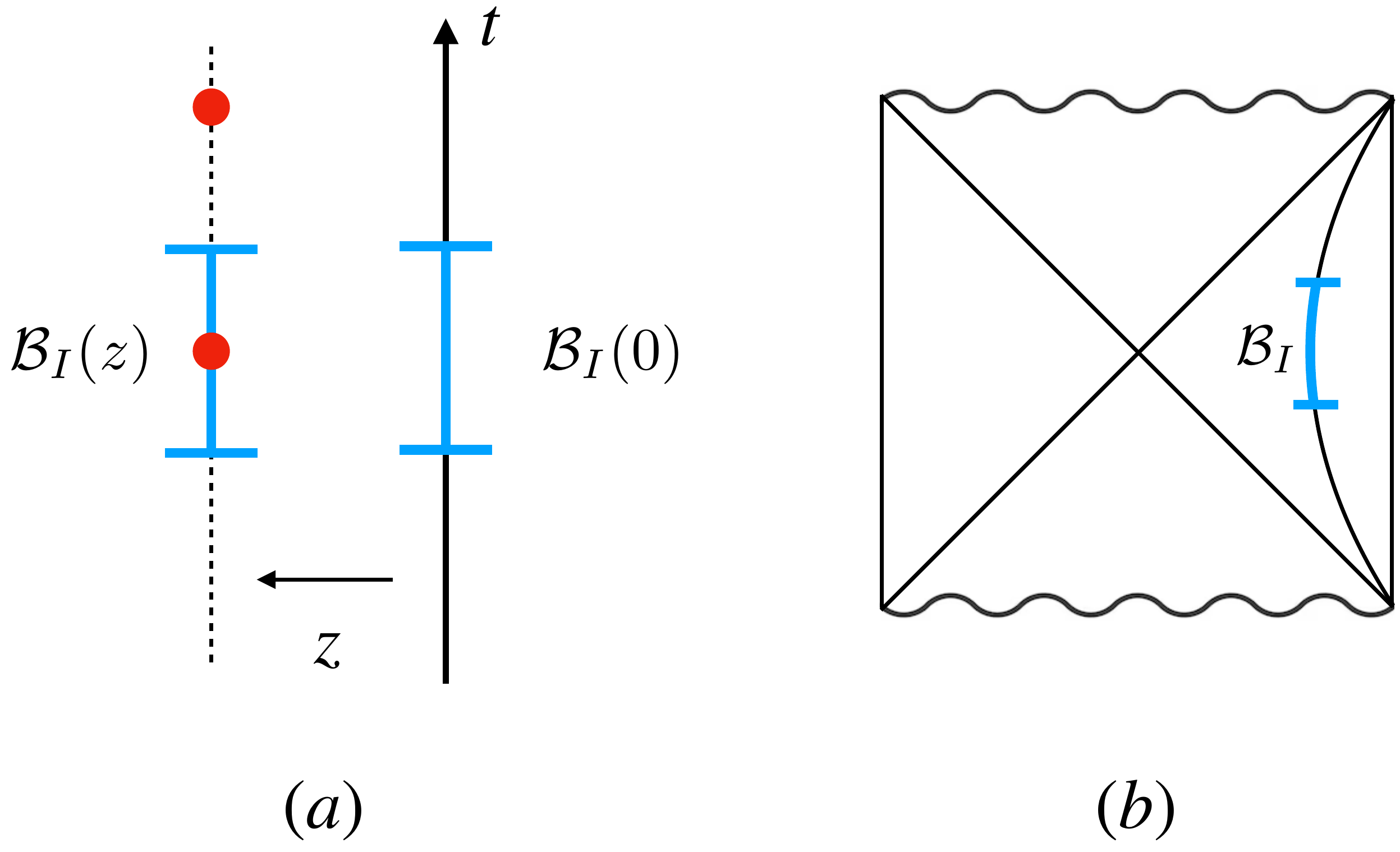}
    \caption{ \small{(a) In the Poincar\'e-patch of AdS, we can associate C$^*$-algebras $\mB_I(0)$ to a time-band $I$ defined at any constant $z$ in the bulk. In general, operator insertions on the red dots do not commute. (b) Similarly, using the HKLL reconstruction for an eternal black hole, we can associate various algebras to time intervals on time-like surfaces in the bulk.}}
    \label{fig:conformalGFF}
\end{figure}

{\bf Bulk subalgebras of conformal GFF:} Let us introduce an auxiliary positive parameter $z$, and consider GFF fields in (\ref{GFF01}) with the spectral density
\begin{eqnarray}\label{GFFbulkz}
    \sqrt{\rho_z(\omega^2)}=\frac{z^{1/2}}{\sqrt{2}}J_\nu(z\omega)
\end{eqnarray}
where $J_\nu$ are the Bessel functions of the first kind. This is a deformation of the conformal GFF because in the limit of very small $z$ we have the asymptotic expansion
\begin{eqnarray}
    &&J_\nu(z\omega)\approx\frac{2^{-\nu}}{\Gamma(\nu+1)}(z\omega)^\nu (1+ O(z\omega)^2)\nn\\
    &&\sqrt{\rho_z(\omega^2)}\approx\lb \frac{z^\Delta 2^{-(\nu+1/2)}}{\Gamma(\nu+1)}\rb\omega^\nu +O(z\omega)^2
\end{eqnarray}
which is the same as the conformal GFF up to a constant field multiplying the spectral density. At fixed $z$, our choice of the Hermitian bilinear is
\begin{eqnarray}
    \braket{f|g}_z&&=\frac{z}{2}\int d\omega \Theta(\omega) |J_\nu(z\omega)|^2 \hat{f}^*(\omega)\hat{g}(\omega)\ .
\end{eqnarray}
Define the GFF field
\begin{eqnarray}
    \varphi_z(t)=\int d\omega \Theta(\omega)z^{1/2}J_\nu(z\omega)(e^{i\omega t}a^\dagger_\omega+e^{-i\omega t}a_\omega)=\varphi_1(f_z)\ .
\end{eqnarray}
There are two ways to interpret the above expression. We can view it as modifying the spectral density of the field to (\ref{GFFbulkz}), or as $\varphi_1(f_z)$ for a special function $f_z(\omega)=\sqrt{2\rho_z(\omega^2)}$. 
The first point of view defines ``bulk fields" at fixed $z$ (deeper in the bulk in the Poincar\'e patch; see figure \ref{fig:conformalGFF}:
\begin{eqnarray}
    \varphi_z(\omega)=\frac{z^{1/2}}{\sqrt{2}}\omega^{-\nu}J_\nu(z\omega)\varphi_\Delta(\omega)\ .
\end{eqnarray}
where $\varphi_\Delta(t)$ is the conformal GFF defined in (\ref{GFFfield}). The above equation is the bulk HKLL reconstruction map in vacuum AdS in the Poincar\'e patch \cite{hamilton2006holographic}.
In the second point of view, we take the field to be the one that includes all masses equally, i.e. $\varphi_1(t)$ in (\ref{Lebesqchoice}), and replace Schwartz functions with a particular choice of $*$-subalgebra of functions in $L^2(\mathbb{R}_t,d\omega \Theta(\omega)\rho(\omega^2))$. For every fixed value of $z$ and a time interval on the boundary, we obtain a time-interval C$^*$-algebra $\mB_I(z)$; see figure \ref{fig:conformalGFF}. 


\begin{figure}[t]
    \centering
    \includegraphics[width=0.6\linewidth]{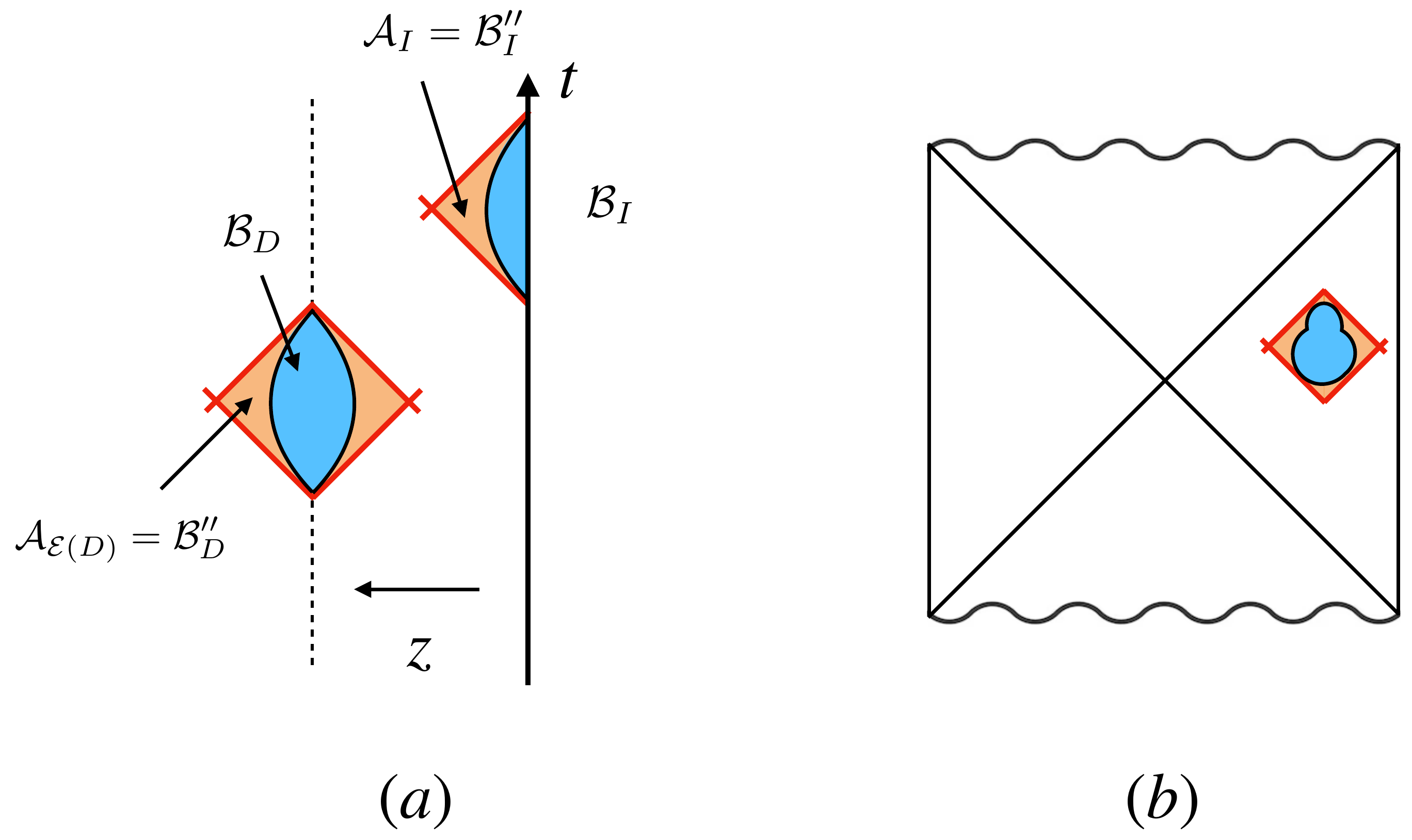}
    \caption{ \small{(a) In the Poincar\'e patch of AdS, we can associate C$^*$-algebras with any open set in the bulk, e.g. blue regions. By the timelike tube theorem, the double-commutant of these algebras are the von Neumann algebras of causally-complete regions in the bulk, i.e., $\mB''_{\mathcal{D}}=\mA_{\mE(\mathcal{D})}$ and $\mB''_{I}=\mA_{I}$. (b)} Similarly, we can define such C$^*$ and von Neumann algebras in other geometries such as an eternal black hole.}
    \label{fig:bulkGFF}
\end{figure}

In holography, we consider the GFF theory that includes $\varphi_z(\omega)$ for all $z\in (0,\infty)$. Since we can add fields with different values of $z$, we have to enlarge our set of functions from $\hat{f}(\omega)$ to Schwartz functions $\hat{f}(z,\omega)$ that solve the ``bulk" equations of motion:
\begin{eqnarray}
    \lb -z^2(\p_z^2-\p_t^2)+\Delta(\Delta-1)\rb f(z,t)=0\ .
\end{eqnarray}
The solutions to these equations are precisely $\sqrt{\rho_z(\omega^2)}$ that we chose in (\ref{GFFbulkz}). 
This is equivalent to choosing the space of Schwartz functions of $t,z$ on vacuum AdS$_2$ geometry in the Poincar\'e patch and choosing the measure
\begin{eqnarray}
    \braket{f|g}=\int dz d\omega \rho_z(\omega^2)\hat{f}^*(z,\omega)\hat{g}(z,\omega)\ .
\end{eqnarray}
The imaginary part of the inner product above gives the bulk advanced commutator of two ``bulk" fields:
\begin{eqnarray}
   2i\Im\braket{\delta(z,t)|\delta(z',t')}= [\varphi(z,t),\varphi(z,t')]=G_{adv}
\end{eqnarray}
which vanishes when $(z,t)$ and $(z',t')$ are spacelike separated.\footnote{To show this explicitly, it is helpful to use the Hankel's identity $\int_0^\infty tdt  J_\nu(zt)J_\nu(z't)=z^{-1}\delta(z-z')$ \cite{dutsch2003generalized}.} 
This allows for the possibility of associating von Neumann algebras to boundary time intervals by taking the $*$-subalgebra of functions with support on the {\it timelike envelope of the time interval $I$} \cite{strohmaier2023timelike}.

More generally, for any region of the bulk $\mathcal{D}$ we can associate a $*$-subalgebra of GFF $\mB_{\mathcal{D}}$ by restriction to the $*$-algebra of Schwartz functions support on $\mathcal{D}$; see figure \ref{fig:bulkGFF}. It follows from the timelike tube theorem in \cite{strohmaier2023timelike} that the double commutant of the C$^*$-algebra of $\mathcal{D}$ is the von Neumann algebra of the time-like envelope of $\mathcal{D}$:
\begin{eqnarray}
    \mB_{\mathcal{D}}''=\mA_{\mE(\mathcal{D})}\ .
\end{eqnarray}

The discussion above generalizes to any arbitrary bulk geometry. In holography, for $\mathcal{M}$ a $d+1$-dimensional bulk geometry with the metric $g_{\mu\nu}$ and any open set $\mathcal{D}$ in this geometry and a choice of bulk free fields, we have a C$^*$-algebra of GFF on the boundary. If the bulk region is causally complete, i.e., $\mathcal{D}=\mE(\mathcal{D})$, the corresponding GFF algebra is a von Neumann algebra. 

As an example consider a scalar field of mass $M$ in the bulk, and the equations of motion 
\begin{eqnarray}
    (\Box_g+M^2)\varphi(X)=0\ ,
\end{eqnarray}
where $X$ are bulk coordinates, and $x$ are boundary coordinates. 
For any bulk Schwartz function $g$ supported on a bulk open set $\mathcal{D}$ we have the GFF operator
\begin{eqnarray}
    &&\varphi(g)=\int_{\mathcal{D}} d^{d+1}X\: g(X) \varphi(X)\nn\\
   && \varphi(X)=\int dx \: f(X|x) \mO(x) 
\end{eqnarray}
where we have used the HKLL bulk reconstruction map \cite{hamilton2006holographic} to write the bulk fields in terms of the boundary GFF. The function $f(X|x)$ is the formal inverse of the bulk-boundary correlation function 
\begin{eqnarray}
    &&f(X|x)=\int dy \braket{\varphi(X)\varphi_\Delta(y)}K(y,x)\nn\\
    &&\int dy K(x,y) \braket{\varphi_\Delta(y)\varphi_\Delta(z)}=\delta(x-z)\ .
\end{eqnarray}
See \cite{hamilton2006local,papadodimas2013infalling} for explicit expressions in the eternal black hole geometry.

\subsection{Blackholes in microcanonical ensemble}

Consider an eternal AdS black hole in a microcanonical ensemble of energy $E_0=O(N^2)$ and width $O(1)$. This means that the TFD state is replaced with 
\begin{eqnarray}
    \ket{\overline{TFD}}=e^{-S(E_0)}\sum_i e^{-\beta(E_i-E_0)/2} f(E_i-E_0) \ket{E_i,E_i}
\end{eqnarray}
where $S(E_0)$ is the microcanonical entropy, and $f(E-E_0)$ is any smooth and invertible function of energy that is $N$-independent and normalizes to \cite{chandrasekaran2022large}
\begin{eqnarray}
    \int dE |f(E-E_0)|^2=1\ .
\end{eqnarray}
It was argued in \cite{chandrasekaran2022large} that the algebra one finds by acting with the one-point removed single trace operators on $\ket{\overline{TFD}}$ is independent of the function $f(E-E_0)$ and isomorphic to the large $N$ algebra in the canonical ensemble (the TFD state). Clustering in modular time would establish that the algebra is type III$_1$. The modular Hamiltonian in this state is 
\begin{eqnarray}
   \hat{K}=\int dEdE' \:f(E-E_0)f^{-1}(E'-E_0)\ket{EE'}\bra{EE'}\ .
\end{eqnarray}
As we argue in appendix \ref{app:typeiii}, the fact that the Hamiltonian is not affiliated with the algebra $\mA_R$ already implies that the algebra is type III.\footnote{Note that $\hat{K}\ket{\overline{TFD}}$ cannot be prepared by the action of any one-point removed single trace operator on $\ket{\overline{TFD}}$.}
Comparing to the expression in (\ref{fomega2}) we find that the function $f(E-E_0)$ plays the role of $\Omega(E)$. As we saw in Lemma \ref{simpleop}, as long as $f(E-E_0)$ is measurable and matrix elements of one-point removed single trace operators are only a function of energy and not the microstate, they form a mixing algebra in the $N\to \infty$ limit. We argued for this using the eigenstate thermalization hypothesis (ETH):
\begin{eqnarray}
    \braket{E;\alpha|\mO|E';\beta}=\delta(E-E')\delta_{\alpha\beta}\mO(E)+e^{-S(E)}R_{\alpha\beta}f(E-E';E+E')\ .
\end{eqnarray}
This form is preserved under the multiplication of an $O(1)$ number of simple operators. In the strict $N\to \infty$ limit $S(E)\to \infty$, the simple operators are independent of energy microstates. 
We postpone further exploration of the connection to ETH to upcoming work.




\section{Summary and discussions}\label{sec:conclude}

In summary, we first argued that Maldacena's information loss discussion in eternal black holes implies that the observable algebra of the boundary theory at $N\to\infty$ cannot be type I. This originates from the fact that type I algebras do not allow the existence of any mixing operators. 
We found necessary and sufficient conditions for the existence of mixing operators. The information loss problem has a counterpart for an out-of-equilibrium state if one replaces time evolution with modular time evolution. It is interesting to use this generalization to study the algebra of mixing operators in a pure state dual to a one-sided black hole.

In type I algebras, the time-evolution operator is inside the algebra, and the spectrum of its generator (modular Hamiltonian) is discrete. We showed that when the spectrum of Hamiltonian is continuous there always exist operator systems of mixing observables. 
The maximal mixing operator system has two key properties:
1) It is closed under time evolution, and 2) It cannot contain the time evolution unitary $e^{i Ht}$ for any value of $t$. In the thermodynamic limit of infinite entropy (infinite volume or infinite $N$), it is expected that the set of mixing operators closes under multiplication and forms an algebra. Using the two key properties above we proved that the resulting algebra is a type III$_1$ von Neumann factor. 
%
Applying our result to a theory of generalized free fields at finite temperature settled the conjecture of Leutheusser and Liu  by proving when the spectral density $\rho(\omega^2)$ is a measurable function of $\omega$ (e.g. above the Hawking-Page phase transition) the observable algebra is a type III$_1$ factor.

The result that fixed the algebra type of GFF relied on Wick's theorem. A CFT at finite temperature in infinite volume is not Gaussian. Nonetheless, since $\hat{f}_{\mO\mO}$ for all primaries is a measurable function of frequencies, then the global observable algebra is type III$_1$ factor. This conclusion generalized to the microcanonical ensemble or arbitrary out-of-equilibrium state relates the algebra type to the clustering of conformal primaries in modular time.

We explicitly constructed the GFF algebra associated with any time interval and discussed the ambiguities in the choice of such algebras. In holography, using the HKLL reconstruction, in every geometry $g_{\mu\nu}$, we found a set of choices that lead to GFF C$^*$-algebra corresponding to any open set $\mathcal{D}$ in the bulk spacetime.
If the open set $\mathcal{D}$ is causally complete the resulting algebra is equal to its double commutant and hence a von Neumann algebra. It is worth noting that for any choice of bulk regions that have time-intervals $I_t:=(t,\infty)$ as the boundary we have a net of C$^*$-inclusions: $\mB_{I_t}\subset \mB_{I_s}$ for all $t>s$. The boundary time evolution corresponds to the restriction map in this net. In \cite{uhlmann1977relative}, using the interpolation theory of non-commutative $L_p$ spaces, Uhlmann defined the relative entropy of any two states for C$^*$-algebra $I_t$. This immediately implies a second law of thermodynamics. We postpone the study of Uhlmann's relative entropy for these time interval algebras to upcoming work. 


Finally, we make a comment about the uniqueness of the emergent type III$_1$ algebras of mixing operators. It is known that the {\it hyperfinite} type III$_1$ factor is unique. Here, we argued that the algebras of mixing operators are type III$_1$ factors, but we did not show they are hyperfinite. The proof of hyperfiniteness often relies on nuclearity-type assumptions \cite{buchholz1987universal} which restrict the growth of the modular density of states at high modular frequencies. It would be interesting to explore this connection in the context of a variation of the Leutheusser-Liu conjecture for hyperfinite type III$_1$ algebras.

Overall, we hope to have provided enough evidence that studying the emergence of III$_1$ algebras in physics can provide clues to the emergence of a local spacetime in holography (and potentially the emergence of hydrodynamics) which go well beyond a mathematical curiosity.

\begin{acknowledgments}
NL would like to thank the Institute for Advance
Study for their hospitality and the NSF grant PHY1911298. We thank Venkatesa Chandrasekaran, Yidong Chen, Thomas Faulkner, Marius Junge, Jonah Kudler-Flam, and Nicholas LaRacuente. NL is very grateful to
the DOE that supported this work through grant DE- SC0007884 and the QuantiSED Fermilab consortium.
\end{acknowledgments}

\appendix

\section{Basics of operator algebras}\label{app:op}

\subsection{Operator systems and $*$-algebras}

Consider a separable Hilbert space $\mH$ and the algebra of bounded operators on it $B(\mH)$. A $*$-operation is an abstraction of Hermitian-conjugation $a\to a^\dagger$.\footnote{Formally, it is an involution $a\to a^*$ with the properties
\begin{eqnarray}
    (a+b)^*\to a^*+b^*, \: (ab)^*=b^* a^*, \: 1=1^*, \: (a^*)^*=a\ .
\end{eqnarray}}
We say a set $\mX$ is $*$-closed if it for all $a\in \mX$ we have $a^\dagger\in \mX$. We will use the notation $*$ and $\dagger$ interchangeably.

In physics, we are often interested in subspaces of algebras of operators generated by a collection of observables (self-adjoint operators $\mO_i=\mO_i^\dagger$) which are automatically $*$-closed and include the identity operator. For example, say an experimentalist can measure any observable from some set $\mX=\{1,\mO_1, \mO_2\cdots \}$.\footnote{Of course, she always has access to the identity operator.} Realistically, she has $k$-units of resources (time, energy, or grant money) and each measurement consumes a unit of resource. In practice, she has access to an operator system generated by these observables $\mS_k$ that is the complex polynomials of degree $k$ generated by $\mX$:
\begin{eqnarray}
    \mS_k=\left\{\sum_{i_1\cdots i_m} c_{i_1\cdots i_m}\mO_1^{i_1}\cdots \mO_m^{i_m}| \mO_i\in \mX, i_1+\cdots i_m=k \right\}\ .
\end{eqnarray}

\begin{definition}[Operator System]\label{operatorsys}
    An operator system $\mS$ is a closed linear subspace of $B(\mH)$ that contains the identity operator and is $*$-closed.\footnote{$*$-closed means that if $a\in \mS$ so is $a^\dagger\in \mS$.} 
\end{definition}

\begin{definition}[Irreducibility]\label{irrop}
    An operator system $\mS\subset B(\mH)$ is called irreducible if $\mS$ transforms no proper subspaces of $B(\mH)$ to itself.
\end{definition}
Irreducible operator systems play an important role in this work. It follows from Schur's Lemma that 
\begin{enumerate}
    \item $\mS$ is irreducible if and only if $\mS'$ is trivial (proportional identity).
    \item $\mS$ is irreducible if and only if $\mS''=B(\mH)$.
\end{enumerate}
It is natural to consider $\mO_i$ to be independent measurements so that they correspond to an orthogonal basis for $\mS_k$. Clearly, $\mS_{k'}\subset \mS_k$ for all $k'\leq k$. In the limit $k\to \infty$, we obtain $\mS_\infty$ which is a linear subspace closed under multiplication. It is a $*$-algebra generated by $\mX$.

\begin{definition}[$*$-algebra]\label{Cstar}
    A subalgebra of $B(\mH)$ is called a $*$-algebra $\mB$ if it contains the identity operators, and it is $*$-closed.
\end{definition}

 Another example important in this work is large $N$ gauge theory, and the operator system $\mS_k$ of up to $k$-trace operators generated by single-trace operators $\mO_m=\text{tr}(\Phi^m)$. 
 In the limit $N\to \infty$, the operators $\mO_m$ become orthogonal and they form a $*$-algebra.

Consider an arbitrary density matrix $\rho$ on $B(\mH)$. The restriction of $\rho$ on a $*$-algebra $\mB$ is a {\it normal} state on it.\footnote{Every operator $a\in \mB$ can be viewed as a linear map from density matrices $\rho$ to complex numbers. The reason for the adjective ``normal" will become clear.} However, if we think of states as positive linear functionals from $\mB$ to complex numbers $\psi:\mB\to \mathbb{C}$ there can be a lot more states than those described by density matrices $\rho$ of $B(\mH)$.
If the density matrix has full rank in $B(\mH)$ (the state is faithful\footnote{A state is $\psi:\mA\to \mathbb{C}$ faithful if $\psi(a^\dagger a)=0$ for $a\in \mA$ implies $a=0$. If the state is a density matrix on $\mA$ then this is equivalent to the condition that the density matrix is full rank.}) it defines two inner products on $\mS$,\footnote{A general normal state $\rho$ leads two positive semi-definite Hermitian forms on $\mB$, and their corresponding semi-norms.}
\begin{eqnarray}\label{seminorms}
    &&\braket{a|b}_{\rho,L}= \text{tr}(\rho a^\dagger b)\nn\\
    &&\braket{a|b}_{\rho,R}= \text{tr}(\rho b^\dagger a)\ .
\end{eqnarray}
Using the interpolation theory of positive Hermitian forms on $*$-algebras we interpolate between the two inner products above  \cite{uhlmann1977relative}:
\begin{eqnarray}\label{interp}
    \braket{a|b}_{\rho,p}=\text{tr}(\rho^{1-p}a^\dagger \rho^{p}b)
\end{eqnarray}
where $p$ is a complex number in the strip $\Re(p)=(0,1)$. 
When $p$ is real the bilinear above is the Kosaki $(\rho,\rho,1/p)$-norm \cite{furuya2023monotonic}. Closing the algebra in any of these norms gives a {\it Banach algebra}, i.e. a norm-complete $*$-algebra.

If $\rho$ is the thermal state $\rho_\beta=e^{-\beta H}/Z$ then the interpolated bilinear form in (\ref{interp}) is precisely the analytically continued thermal two-point function we were studying in (\ref{analext}):
\begin{eqnarray}
    \braket{a|b}_{\rho_\beta,it}=f_{ab}(t-i \beta/2)\ .
\end{eqnarray}
In fact, the form above is a generalization of $f_{ab}(t)$ to a general density matrix $\rho$ where the Hamiltonian time-evolution is replaced by evolution with the modular Hamiltonian $H_\rho=-\log\rho$, i.e., modular evolution.

The modular time evolved two-point function  $\braket{a|b}_{\beta, it}$ contains all the information we need about the algebra of operators in $\mB$. In the example of our experimentalist, the $*$-algebra $\mB$ was generated by a set of observables $\mO_i$ that we will refer to as {\it fundamental fields}. Since an arbitrary $a$ is a polynomial in $\mO_i$, then the first order polynomial corresponds to the two-point functions of the fundamental fields $f_{\mO_i\mO_i}(t)$, and the bilinear form restricted to $\mS_k$ contains the information of up to $2k$-point functions of the fundamental fields.

\subsection{C$^*$-algebras and von Neumann algebras in $B(\mH)$}



For every $*$-algebra represented in the Hilbert space $B(\mH)$ and any  density matrix $\rho\in B(\mH)$ we can generalize the norms in (\ref{seminorms}) to the $L_{p,\rho}$ norms for $p\in [1,\infty]$ \footnote{We warn the reader that these norms are different from the $L_{p,\rho}$-norms defined for von Neumann algebras by Araki and Masuda in \cite{araki1982positive,furuya2023monotonic}.}:
\begin{eqnarray}\label{pnorms}
    \|a\|^{p}_{p,\rho}=\text{tr}\lb \rho(a^\dagger a)^{p/2}\rb\ .
\end{eqnarray}

The collection of $\|a\|_{p,\rho}$ for every set of $\rho$ induces a locally convex topology. When $\rho$ are all pure states of $\mH$:
the $p=1$ gives the weak topology, and $p=2$ gives strong topology. When $\rho$ are all density matrices of $\mH$: the $p=1$ gives the $\sigma$-weak topology and $p=2$ gives the $\sigma$-strong topology.\footnote{The $\sigma$-strong and $\sigma$-weak topologies are sometimes called the ultrastrong and the ultraweak topologies, respectively. Since every pure state $\ket{\Psi}$ can be viewed as a density matrix $\ket{\Psi}\bra{\Psi}^{1/2}=\ket{\Psi}\bra{\Psi}$ the $\sigma$-weak topology is finer than the weak topology.} When $p=\infty$ the induced topology is independent of $\rho$. It is called the norm topology.
For instance, we say a sequence $a_n$ converges to $a$ in the weak topology if $\lim_n\|a_n-a\|_{1,\rho}=0$ for all density matrices $\rho$.
\begin{definition}
A $*$-algebra that is closed (includes all its limit points) in the norm topology is called a {\it C$^*$-algebra}. 
\end{definition}
\begin{definition}
A $*$-algebra that is closed in the weak topology is called a von Neumann algebra.
\end{definition}
Consider a full rank density matrix $\rho\in B(\mH)$:
\begin{eqnarray}
    \rho=\sum_m p_m \ket{m}\bra{m}\ .
\end{eqnarray}
The analog of the thermofield double is the canonical purification of $\rho$ in the duplicated Hilbert space $\mH\otimes \mH'$:
\begin{eqnarray}
    \ket{1}_\rho=\sum_m \sqrt{p_m}\ket{m}\ket{m}\ .
\end{eqnarray}
To every operator $a\in \mB$ we can associate a vector in $\tilde{\mH}=\mH\otimes \mH$ with the inner product
\begin{eqnarray}
    &&a\to \ket{a}_\rho:=(a\otimes 1)\ket{1}_\rho\in \tilde{\mH}\nn\\
    &&\braket{a|b}_\rho=\text{tr}(\rho a^\dagger b)=\braket{a|b}_\rho
\end{eqnarray}
so that $\|a\|_{1,\rho}$ norm is the one-norm in $\tilde{\mH}$, and $\|a\|_{\infty,\rho}$ is the operator norm in $\tilde{\mH}$.
The $L_{p,\rho}$-norm is the Hilbert space norm 
\begin{eqnarray}
    \|a\|_{p,\rho}=|\ket{a^{p/2}}_\rho|^{1/p}\ .
\end{eqnarray}
We define the Banach space $\mB_{p,\rho}$ as the closed subspace of vectors $\ket{a^p}_\rho$. 
For $1\leq p\leq p'$ we have the inclusion order
\begin{eqnarray}
    &&\|a\|_{p,\rho}\leq \|a\|_{p',\rho}\ .
\end{eqnarray}
For a sequence of operators $a_n\in \mB$ if $a\in \mB_{p',\rho}$, i.e. $\lim_n \|a_n-a\|_{p',\rho}\to 0$, it is also in $\mB_{p,\rho}$ but not the other way around. Therefore,
\begin{eqnarray}
    &&\mB_{p',\rho}\subset \mB_{p, \rho}\ .
\end{eqnarray}
The Banach algebra $\mB_{\infty,\rho}$ is closed in the norm topology, therefore, it is a C$^*$-algebra, and the Banach algebra $\mB_{1,\rho}=\mA$ is closed in $\sigma$-weak topology, therefore it is a von Neumann algebra. 

Closing $\mB$ in the $\sigma$-weak topology means that the set of all trace class operators in $\tilde{\mH}$ is the predual of the von Neumann algebra $\mB_{1,\rho}$; see (\ref{pnorms}).
This predual is called the set of {\it normal states} of the von Neumann algebra $\mA_{1,\rho}$.
Every operator $a$ in the von Neumann algebra $\mB_{1,\rho}$ can be viewed as a linear map from the set of density matrices to complex numbers:
 \begin{eqnarray}
     a:\rho \to \text{tr}(\rho a)\ .
 \end{eqnarray}
 And for every map $f:\rho\to \mathbb{C}$ there exists an operator $a_f\in \mB_{\infty,\rho}$ such that $f(\rho)=\rho(a_f)$. 
 Furthermore, there are enough normal states (density matrices) to distinguish all operators of $\mA$, i.e. if $\text{tr}(\rho a)=0$ for all $\rho\in B(\mH)$ then $a=0$.

 We saw that the von Neumann algebra $\mB_{1,\rho}$ is larger than the C$^*$-algebra $\mB_{\infty,\rho}$. The von Neumann bicommutant theorem clarifies what are the operators that we have added to $\mB_{\infty,\rho}$ to get a von Neumann algebra.


 \begin{theorem}[Bicommutant Theorem]\label{bicom}
     If $\mA\subset B(\mH)$ is $*$-algebra with $\mA''=\mA$, then 
     \begin{enumerate}
         \item $\mA$ is $\sigma$-weakly closed.
         \item $\mA$ is strongly closed.
     \end{enumerate}
 \end{theorem}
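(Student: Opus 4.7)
The plan is to exploit the general fact that commutants are automatically closed in the weak operator topology, and then deduce both statements as consequences of the identity $\mA = \mA''$ together with basic comparisons between operator topologies. I will not need any deep structural result; the whole argument is a short topological observation.

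First I would establish the key lemma: for any subset $\mB \subseteq B(\mH)$, the commutant $\mB'$ is weakly closed in $B(\mH)$. The reason is that for each fixed $b \in \mB$ and each pair of vectors $\xi,\eta \in \mH$, the map $T \mapsto \braket{\xi|(Tb - bT)|\eta}$ is continuous with respect to the weak operator topology, since both $T \mapsto \braket{\xi|Tb|\eta} = \braket{\xi|T(b\eta)}$ and $T \mapsto \braket{\xi|bT|\eta} = \braket{b^\dagger\xi|T\eta}$ are by definition weakly continuous linear functionals. Thus $\mB'$ is an intersection of kernels of weakly continuous functionals, hence weakly closed.

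Next I would apply this lemma with $\mB = \mA'$. Since $\mA'$ is a subset of $B(\mH)$, its commutant $\mA'' = (\mA')'$ is weakly closed by the lemma. The hypothesis $\mA = \mA''$ then gives immediately that $\mA$ itself is weakly closed.

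Finally, to conclude both items of the theorem, I would invoke the comparison of topologies. The strong operator topology and the $\sigma$-weak topology are both \emph{finer} than the weak operator topology: any weakly open set is strongly open and $\sigma$-weakly open, because convergence in the finer topologies implies weak convergence. Hence any weakly closed set is both strongly closed and $\sigma$-weakly closed. Applying this to $\mA$ establishes parts (1) and (2) simultaneously.

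There is really no main obstacle here; the whole content sits in the observation that commutation relations are defined by vanishing of weakly continuous functionals. The only subtlety worth emphasizing in the write-up is the direction of the topology comparison, since a beginner can be tempted to reverse it: one must be careful to note that \emph{finer} topologies have \emph{more} closed sets, so weak closedness is the strongest conclusion, from which strong and $\sigma$-weak closedness follow for free.
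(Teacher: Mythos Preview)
Your argument is correct and is precisely the standard proof of this direction of the bicommutant theorem: commutants are intersections of kernels of weakly continuous functionals, hence weakly closed, and the coarser weak operator topology forces closedness in the finer strong and $\sigma$-weak topologies. The paper itself does not give a proof at all, writing only ``The proof is standard and we will not repeat it here,'' so there is nothing to compare against; your write-up supplies exactly the standard justification the paper omits.
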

\begin{proof}
    The proof is standard and we will not repeat it here.
\end{proof}

The theory of von Neumann algebras can be viewed as a non-commutative measure theory. Perhaps, the biggest surprise in going from commutative to non-commutative is that non-commutative von Neumann algebras have a canonical time evolution ($*$-automorphism) called modular evolution. Von Neumann algebras are {\it dynamical systems}.

The set of all states of $\mA$ forms a convex cone. The extremal rays of this cone are called {\it pure} states. Given $\psi$ a normal faithful state of a von Neumann algebra, we perform the GNS construction detailed below to obtain {\it the standard form} $\{\pi_\psi(a),\mH_\psi,\ket{1}_\psi\}$. 

The GNS construction is precisely the quantization procedure we used to define the one-particle Hilbert space in section \ref{sec:GFFalgebra}. We start with a state $\psi$ of a C$^*$-algebra. The kernel of the state is the set of $a\in \mA$ such that $\psi(a^\dagger a)=0$. They form a subspace of $\mA$ called the left ideal of $\mA$ that we denote by $I$. We identify operators in $\mA$ by the equivalence relation $\ket{a} \sim \ket{a+I}$. The state $\psi$ restricted to the quotient $\mA/I$ is faithful, therefore it yields a norm and inner product on $\mA$. The Cauchy-completion of $\mA/I$ in this norm is the GNS Hilbert space $\mH_\psi$. We define the representation $\pi_\psi(a)$ by its action on the vectors $\ket{[b]}_\psi\in \mH_\psi$:
\begin{eqnarray}
    \pi_\psi(a)\ket{[b]}_\psi=\ket{[ab]}_\psi
\end{eqnarray}
where $\ket{[a]}$ is the equivalence class of $a$ in $\mH_\psi$. There exists a unit norm cyclic vector in $\mH_\psi$ that corresponds to the identity operator in $\mA$. We denote it by $\ket{[1]}_\psi$. For simplicity, we often assume $\psi$ is faithful so that we can denote $\ket{[a]}_\psi$ by $\ket{a}_\psi$.

Modular theory starts with the definition of the Tomita operator for a von Neumann algebra and a faithful normal state $\psi$ represented in the standard form:
\begin{eqnarray}
S_\psi \pi_\psi(a)\ket{1}_\psi=\pi_\psi(a^\dagger)\ket{1}_\psi\ .
\end{eqnarray}
Taking the closure of this operator, its polar decomposition gives $S_\psi=J_\psi \Delta_\psi^{1/2}$ where the positive operator $\Delta_\psi$ is called the modular operator, and the anti-unitary $J_\psi$ is called the modular conjugation. The operator $\hat{K}=-\log\Delta_\psi$ is called the {\it modular Hamiltonian}. It generates a unitary flow that preserves the algebra called {\it the modular flow}
\begin{eqnarray}
    \alpha_\psi^t(a)=\Delta_\psi^{it}a\Delta_\psi^{-it}\in \mA\ .
\end{eqnarray}
Modular flow is a $*$-automorphism of $\mA$, i.e., it is a bijection $\alpha:\mA\to \mA$ that respect the algebra operations (multiplication, addition, and the $*$-operation).

Every $*$-automorphism of a von Neumann algebra $\alpha(a)$ represented in the standard form can be implemented by a unitary rotation $U_\alpha:\mH_\psi\to \mH_\psi$
\begin{eqnarray}
    \pi_\psi(\alpha(a))=U_\alpha \pi_\psi(a) U_\alpha\ .
\end{eqnarray}
Furthermore, we can choose $U_\alpha$ such that it commutes with the modular conjugation.\footnote{For a proof see Theorem 2.2.4 in section V of \cite{haag2012local}.} The set of unitaries $u\in \mA$ generate $*$-automorphisms that are called {\it inner} automorphisms of $\mA$. An automorphism that is not inner is called {\it outer}.

\subsection{W$^*$-dynamical system}

A $W^*$-{\it dynamical system} is a triple $\{\mA,G, \alpha\}$ where $\mA$ is a von Neumann algebra, $G$ is a locally compact group that acts on $\mA$ as (strongly continuous) automorphism of $\mA$. In other words, for any $g\in G: \alpha_g:\mA\to \mA$ such that
\begin{enumerate}
    \item If $e\in G$ is identity element then $\forall a\in \mA: \alpha_e(a)=a$
    \item Group multiplication is preserved, i.e.
    $\forall a\in\mA, g_1,g_2\in G: \alpha_{g_1}\alpha_{g_2}=\alpha_{g_1g_2}$.
    \item  The map $g\to \alpha_g(a)$ is continuous in the norm of operator $a$.
\end{enumerate}
In this work, we will often refer to a $W^*$-dynamical system simply as a dynamical system. 
A covariant representation of a dynamical system $(\mH,\pi, U_g)$ is a GNS Hilbert space $(\mH,\pi)$ with a group that acts as strongly continuous unitaries $U_g$.
An important example of a covariantly represented dynamical system is the modular flow: $(\mH_\psi, \pi_\psi, \Delta_\psi^{it})$. 


\subsection{Classification of von Neumann algebras}

We start with a few basic definitions. An algebra is called a {\it factor} if and only if it has a trivial center. Although, in a factor $\mA$, there exists no operator $z\in \mA$ that commutes with all $a\in \mA$, given a state $\psi$ it is possible that $\psi([a,z])=0$ for all $a\in \mA$. The sets of such operators form the centralizer {\it} of $\psi$ which we denote by $z\in \mA^\psi$. The center of a von Neumann algebra is inside its centralizer. Clearly, the centralizer depends on the state; however, as we will see in Lemma \ref{factorcentralizer} when the centralizer is a factor, the spectrum of modular flow is the same among all states, and hence independent of the state (it becomes an algebraic invariant).

A {\it trace} is an unnormalized state (weight) that satisfies
\begin{eqnarray}
    \text{tr}(a^\dagger a)=\text{tr}(a a^\dagger)\ .
\end{eqnarray}
A projection $e\in \mA$ is a self-adjoint operator that squares to itself, i.e. $e=e^\dagger$ and $e^2=e$.
Murray and von Neumann classified all von Neumann factors into three classes based on the values $d=\{\text{tr}(e)| e\in \mA, e=e^\dagger, e^2=e\}$ for all projections $e\in \mA$ and some $x>0$:
\begin{enumerate}
    \item Type I$_d$ if $d=\{0,x, 2x, \cdots , dx\}$.
    \item Type I$_\infty$ if $d=\{0,x,2x,\cdots \}$.
    \item Type II$_1$ if $d=[0,x]$.
    \item Type II$_\infty$ if $d=[0,\infty]$.
    \item Type III if $d=\{0,\infty\}$.
\end{enumerate}

A finer classification of these types is provided using modular theory. Consider the modular operator $\Delta_\psi^{it}$ for some state $\psi$. If for some $\tau$ this automorphism of the algebra is inner it can be factored as $\Delta_\psi^{i\tau}=u_\tau u'_\tau$ where $u'_\tau=J_\psi u J_\psi$. In fact, for that $\tau$ the modular automorphism corresponding to any other vector  $\ket{\chi}_\psi\in \mH_\psi$ is also inner because as Connes has proved we can write
\begin{eqnarray}\label{cocyclerel}
\Delta_\psi^{it}\Delta_\chi^{-it}=v_{\psi|\chi}(t)v'_{\psi|\chi}(t)
\end{eqnarray}
for unitaries $v_{\psi|\chi}\in\mathcal{A}$ and $v'_{\psi|\chi}\in \mathcal{A}'$ that are called the cocycles (See \cite{lashkari2021modular} for a review of the proof).\footnote{They are called cocycle in the sense of non-Abelian group cohomology.} Therefore, the modular flow $\Delta_\chi^{it}$ is also inner at this $\tau$:
\begin{eqnarray}
\Delta_\chi^{i\tau}=v_{\chi|\psi}(\tau)u_\tau v'_{\chi|\psi}(\tau) u'_\tau\ .
\end{eqnarray}
As a result, the set of $\tau\in \mathcal{T}$ for which $\Delta_\psi^{i\tau}=u_\tau u_\tau'$ is inner is independent of $\psi$, and a property of the algebra. The set $\mathcal{T}$ is a subgroup of $\mathbb{R}$. There are three options: 
\begin{enumerate}\label{possibilities}
    \item $\mathcal{T}=\mathbb{R}$ (We will see that this occurs if and only if the algebra is type I or type II)
    \item $\mathcal{T}=n \tau$ with $n\in\mathbb{Z}$ 
    \item $\mathcal{T}=\{0\}$ (We will see this occurs if the algebra is type III$_1$.)
\end{enumerate}

\begin{lemma}\label{typenotiii}
    The modular flow is inner for all times (possibility 1 above) if and only if the algebra is type I or type II.\footnote{Even though all automorphisms of type I factor are inner, all automorphisms of type II are not inner \cite{witten2022gravity}. Our statement is only for the modular flow, which is a special example of automorphism.}
\end{lemma}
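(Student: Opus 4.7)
I will establish the biconditional via the equivalent characterization that a von Neumann factor is type I or II if and only if it admits a faithful normal semifinite (f.n.s.)~trace, and then show that this trace condition is equivalent to the modular flow of every faithful normal state being inner at all times.

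$(\Leftarrow)$ Suppose $\mA$ is type I or II and let $\tau$ be an f.n.s.~trace on $\mA$. For any faithful normal state $\psi$, the noncommutative Radon--Nikodym theorem produces a positive self-adjoint operator $h$ affiliated with $\mA$ with $\psi(\cdot)=\tau(h\,\cdot)$. Since the modular flow of a trace is trivial, a standard KMS argument forces
\[\alpha_\psi^t(a)=h^{it}\,a\,h^{-it}\,,\qquad h^{it}\in\mA\,,\]
so $\alpha_\psi^t$ is inner for every $t$. By the cocycle relation (\ref{cocyclerel}), the same conclusion then holds for every other faithful normal state.

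$(\Rightarrow)$ Suppose $\alpha_\psi^t=\mathrm{Ad}(u_t)$ with $u_t\in\mA$ unitary. The composition law $\alpha_\psi^{t+s}=\alpha_\psi^t\circ\alpha_\psi^s$ implies $u_tu_su_{t+s}^{-1}\in\mA\cap\mA'=\mathbb C\cdot 1$, since $\mA$ is a factor, so $u_tu_s=\lambda(t,s)\,u_{t+s}$ for a continuous $\mathbb T$-valued $2$-cocycle $\lambda$ on $\mathbb R$. Triviality of $H^2_{\mathrm{cont}}(\mathbb R,\mathbb T)$ allows me to untwist $\lambda$ by a $1$-cochain $\mu(t)$, and $v_t:=\mu(t)^{-1}u_t$ becomes a strongly continuous one-parameter unitary group in $\mA$ that still implements $\alpha_\psi^t$. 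Stone's theorem produces a positive self-adjoint operator $h$ affiliated with $\mA$ with $v_t=h^{it}$. I then define
\[\tau(a)\,:=\,\psi\!\left(h^{-1/2}\,a\,h^{-1/2}\right)\]
on the hereditary cone of $\psi$-integrable elements. Using $\alpha_\psi^t(h^{\pm 1/2})=h^{\pm 1/2}$ and the KMS identity $\psi(xy)=\psi(y\,\alpha_\psi^{-i}(x))$, a short manipulation yields $\tau(ab)=\psi(abh^{-1})=\tau(ba)$. Normality and faithfulness of $\tau$ are inherited from $\psi$, and semifiniteness follows from the density of $\alpha_\psi^t$-analytic elements. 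Hence $\mA$ admits an f.n.s.~trace and must be type I or II by the Murray--von Neumann classification.

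\textbf{Main obstacle.} The crux is the extraction of an affiliated positive generator $h$ from the implementing cocycle: one must first trivialize the central $2$-cocycle $\lambda$ (straightforward for factors via continuous cohomology of $\mathbb R$, more subtle in the nonfactor case where one appeals to Connes' cocycle derivative theory), and then control the unbounded operator $h^{-1/2}$ carefully enough to make sense of $\tau$ as an f.n.s.~weight on a strongly dense hereditary subcone. Once the generator is secured inside $\mA$, the verification of the trace property and the passage back to the Murray--von Neumann classification reduce to routine KMS manipulations.
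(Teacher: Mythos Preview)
Your proposal is correct and follows essentially the same route as the paper: in both directions one passes between an f.n.s.\ trace and a positive generator $h$ affiliated with $\mA$, and the trace is constructed as $\tau(a)=\psi(h^{-1/2}ah^{-1/2})$ with the tracial property verified via KMS. You are in fact more careful than the paper on one point: the paper simply invokes Stone's theorem to obtain the affiliated generator, whereas you correctly observe that the implementing unitaries $u_t$ a priori only satisfy a projective group law and explicitly untwist the central $2$-cocycle via $H^2_{\mathrm{cont}}(\mathbb R,\mathbb T)=0$ before applying Stone.
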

\begin{proof}
In type I and II algebras, there exists a semifinite trace, and for any normal state $\psi$ we can define a density matrix $\rho_\psi$ affiliated with the algebra. The modular flow factors as
\begin{equation}
    \Delta_\psi^{it}=\rho_\psi^{it}J_\psi \rho_\psi^{-it}J_\psi\ .
\end{equation}
It follows that modular flow is inner for all times:
\begin{eqnarray}
    \alpha_\psi^t(a)=\rho_\psi^{it}a\rho_\psi^{-it}\ .
\end{eqnarray}
To prove the converse, we assume that the modular flow is inner at all times. Since the modular flow is inner and strongly continuous by Stone's theorem, it can be generated by the operator $\log\rho_\psi$ which is affiliated with the algebra. In other words, we can factor the modular operator as $\Delta_\psi=\rho_\psi J_\psi \rho_\psi J_\psi$ where $\rho_\psi$ is a self-adjoint operator affiliated with the algebra. Therefore, for analytic $a\in \mA$ and $\alpha,\beta\in (0,1)$ we have the relations (see \cite{haag2012local} the discussion around V.2.62)
\begin{eqnarray}
    &&\psi(\rho_\psi^\alpha a)=\psi(\rho_\psi^{\alpha-\beta} a \rho_\psi^{\beta})\nn\\
    &&\psi(\rho_\psi^{-1}a b)=\psi(b a \rho_\psi^{-1})\ .
\end{eqnarray}
As a result, we can define a semifinite trace as
\begin{eqnarray}
    \text{tr}(a):=\psi(\rho_\psi^{-1/2}a\rho_\psi^{-1/2})\ .
\end{eqnarray}
The algebra is either type I or type II.
\end{proof}
Therefore, the possibilities $\mT=n\tau$ and $\mT=\{0\}$ above only occur in type III algebras. Another algebraic invariant of a von Neumann algebra is the intersection of the spectrum of $\Delta_\psi$ over all of the normal states of $\psi$ (Arveson spectrum):
\begin{eqnarray}\label{arveson}
    &&\mS(\mA)=\cap_\psi\text{Spec}(\Delta_\psi)\ .
\end{eqnarray}
Connes used this invariant to classify type III algebra into three classes:
\begin{definition}\label{deftypeiii}
    A type III von Neumann factor is defined to be 
    \begin{enumerate}[(a)]
    \item Type III$_0$ if $\mS(\mA)=\{0,1\}$.
    \item Type III$_\lambda$ with $\lambda\in (0,1)$ if $\mS(\mA)=\{0\cup \lambda^n\}$ for all $n\in \mathbb{Z}$.
    \item Type III$_1$ if $\mS(\mA)=\mathbb{R}_+$.
\end{enumerate}
\end{definition}

The set $\mS(\mA)$ is difficult to compute. The insight of Connes was to realize that the Arveson spectrum can be related to the other algebraic invariant $\mT$ we defined above \cite{connes1973classification}. He proved that
\begin{eqnarray}\label{connesineq}
&&\ln\lb \mathcal{S}(\mathcal{A})\backslash 0\rb\subset\Gamma(\mA)    
\end{eqnarray}
where 
 \begin{eqnarray}
     \Gamma=\{\lambda\in \mathbb{R}: \:e^{i\lambda t}=1\quad \text{for all} \:t\in\mT\}\ .
  \end{eqnarray}

  \begin{lemma}
      If the algebra is type III$_1$ then the modular flow is never inner, i.e. $\mT=\{0\}$.
  \end{lemma}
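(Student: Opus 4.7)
The plan is to leverage directly Connes' inclusion \eqref{connesineq}, which relates the two algebraic invariants $\mathcal{S}(\mA)$ and $\mT$. Since the definition of type III$_1$ fixes the Arveson spectrum to $\mathcal{S}(\mA) = \mathbb{R}_+$, the inclusion will force $\Gamma(\mA)$ to be as large as possible, which in turn will collapse $\mT$ to $\{0\}$.

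Concretely, I would proceed in three short steps. First, since $\mA$ is type III$_1$, the definition gives $\mathcal{S}(\mA) \setminus \{0\} = \mathbb{R}_+ \setminus \{0\} = (0,\infty)$, so that $\ln(\mathcal{S}(\mA)\setminus\{0\}) = \mathbb{R}$. Second, Connes' inclusion \eqref{connesineq} then yields $\mathbb{R} \subset \Gamma(\mA)$, and trivially $\Gamma(\mA) \subset \mathbb{R}$, so $\Gamma(\mA) = \mathbb{R}$. Third, unfolding the definition of $\Gamma(\mA)$, the statement $\lambda \in \Gamma(\mA)$ for every $\lambda \in \mathbb{R}$ means that for every $t \in \mT$,
\begin{equation}
e^{i\lambda t} = 1 \quad \text{for all } \lambda \in \mathbb{R}.
\end{equation}
Differentiating in $\lambda$ at $\lambda = 0$ (or equivalently choosing any $\lambda \ne 0$ and noting that the condition must hold simultaneously for, say, $\lambda$ and $\sqrt{2}\,\lambda$) forces $t = 0$. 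Hence $\mT = \{0\}$, which by the definition of $\mT$ means that the modular flow of any (hence every) faithful normal state is inner only at $\tau = 0$, i.e.\ it is never inner for $\tau \ne 0$.

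I do not anticipate any real obstacle: the content of the lemma is essentially a restatement of Connes' inclusion together with the definitions of type III$_1$ and of $\Gamma(\mA)$. The only subtle point worth flagging in the write-up is that $\mT$ is an algebraic invariant independent of the state $\psi$ used to define the modular flow (as already reviewed in the bulleted list of possibilities for $\mT$), so the conclusion that the modular flow is never inner for $\tau \ne 0$ applies uniformly to all faithful normal states, as needed for the applications in the main text (notably in the proof of Lemma \ref{trivialcentralizer} and Theorem \ref{clustering-thm}).
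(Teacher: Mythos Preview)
Your proposal is correct and follows essentially the same argument as the paper's proof: both use the definition of type III$_1$ to get $\mathcal{S}(\mA)=\mathbb{R}_+$, apply Connes' inclusion \eqref{connesineq} to deduce $\Gamma=\mathbb{R}$, and then conclude $\mT=\{0\}$. Your write-up simply spells out the last implication (from $\Gamma=\mathbb{R}$ to $\mT=\{0\}$) in more detail than the paper's one-line proof.
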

  \begin{proof}
      In type III$_1$ algebras by definition the Arveson spectrum is $\mathbb{R}_+$ therefore it follows from (\ref{connesineq}) that $\Gamma=\mathbb{R}$, and as a result $\mT=\{0\}$.
  \end{proof}

Another key theorem proved by Connes (below) is particularly relevant for mixing. 
\begin{lemma}\label{factorcentralizer}
    Consider a von Neumann algebra $\mA$. If the centralizer of a normal faithful state is a factor, then the spectrum of $\Delta_\psi$ is the same for all states $\psi$ and equal to the Arveson spectrum, i.e.,
    \begin{eqnarray}
        \mS(\mA)=\text{Spec}(\Delta_\psi) \, .
    \end{eqnarray}
\end{lemma}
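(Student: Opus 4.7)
My plan is to pass from the spectrum of $\Delta_\psi$ to the Arveson spectrum of the modular flow $\sigma^\psi_t = \operatorname{Ad}(\Delta_\psi^{it})$, and then invoke Connes' refinement, the Connes spectrum $\Gamma(\sigma^\psi)$, which is known to be an intrinsic state-independent invariant of $\mA$.

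Under $\lambda \mapsto \log\lambda$, the nonzero part of $\text{Spec}(\Delta_\psi)$ coincides with the Arveson spectrum $\text{Sp}(\sigma^\psi)$, defined as the set of $\mu \in \mathbb{R}$ such that every open neighbourhood $V \ni \mu$ contains a nonzero spectral subspace
\begin{equation*}
\mA(V) \;:=\; \{a \in \mA :\; t \mapsto \sigma^\psi_t(a) \text{ has Fourier support in } V\}.
\end{equation*}
The Connes spectrum is the refinement
\begin{equation*}
\Gamma(\sigma^\psi) \;=\; \bigcap_{0 \neq e = e^\dagger = e^2 \in \mA^\psi}\, \text{Sp}\!\left(\sigma^\psi|_{e\mA e}\right) \;\subset\; \text{Sp}(\sigma^\psi),
\end{equation*}
a closed subgroup of $\mathbb{R}$. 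Connes' cocycle derivative theorem, applied to $[D\chi : D\psi]_t \in \mA$ from \eqref{cocyclerel}, shows that $\Gamma(\sigma^\psi)$ is independent of the faithful normal state $\psi$, so $\exp(\Gamma(\sigma^\psi)) \cup \{0\}$ equals $\mS(\mA)$.

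The heart of the proof is to show that the factor hypothesis on $\mA^\psi$ forces $\Gamma(\sigma^\psi) = \text{Sp}(\sigma^\psi)$. Fix a nonzero projection $e \in \mA^\psi$ and an open $V$ with $\mA(V) \neq 0$. Because $\mA^\psi$ is a factor, $e$ has central support $\mathbf{1}$ in $\mA^\psi$, so the two-sided ideal $\mA^\psi e \mA^\psi$ is weakly dense in $\mA^\psi$; write $1 = \lim_n \sum_i x_i^{(n)} e y_i^{(n)}$ with $x_i^{(n)}, y_i^{(n)} \in \mA^\psi$ as a weak limit. For any $a \in \mA(V) \setminus \{0\}$, expanding $a = 1 \cdot a \cdot 1$ as a weak limit of double sums exhibits $a$ as a limit of combinations of the form $x_i^{(n)}\, e\,(y_i^{(n)} a x_j^{(m)})\, e\, y_j^{(m)}$. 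Nonzeroness of $a$ forces some term $e\,\tilde{a}\,e \neq 0$ with $\tilde{a} = y_i^{(n)} a x_j^{(m)} \in \mA(V)$ by the $\mA^\psi$-bimodule property of spectral subspaces. Hence $e\mA(V)e \neq 0$, giving $\text{Sp}(\sigma^\psi|_{e\mA e}) \supset \text{Sp}(\sigma^\psi)$ (the reverse inclusion is trivial). Intersecting over $e$ yields $\Gamma(\sigma^\psi) = \text{Sp}(\sigma^\psi)$, whence $\text{Spec}(\Delta_\psi) = \exp(\text{Sp}(\sigma^\psi)) \cup \{0\} = \mS(\mA)$. Because $\mS(\mA) \subset \text{Spec}(\Delta_\chi)$ for every state $\chi$ by definition of the intersection, equality at $\psi$ combined with the intersection definition of $\mS(\mA)$ yields the stated equality for all states.

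The main obstacle is the state-independence of $\Gamma$: this is the technical backbone of Connes' type classification, and requires verifying that unitary cocycle perturbations of the modular flow preserve Arveson spectra after reduction by centralizer projections; the needed bimodule manipulations are essentially the content of Connes' 1973 work. By contrast, the factor-$\Rightarrow$-$(\Gamma = \text{Sp})$ step is more elementary once the $\mA^\psi$-bimodule structure of spectral subspaces and factoriality of $\mA^\psi$ are set up.
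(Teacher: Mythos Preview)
The paper gives no argument of its own; it simply cites Corollary~3.2.7 of Connes' 1973 classification paper. Your outline is essentially a reconstruction of that very argument---identify $\text{Spec}(\Delta_\psi)\setminus\{0\}$ with $\exp(\text{Sp}(\sigma^\psi))$, introduce the Connes spectrum $\Gamma(\sigma^\psi)$, use the cocycle derivative to show $\Gamma$ is state-independent so that $\exp\Gamma\cup\{0\}=\mS(\mA)$, and then argue from factoriality of $\mA^\psi$ that $\Gamma(\sigma^\psi)=\text{Sp}(\sigma^\psi)$---so the approaches coincide.

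Two points deserve tightening. First, the weak-density step is loose as written: one cannot multiply two weak limits to recover $a=1\cdot a\cdot 1$. The clean version uses that in a ($\sigma$-finite) factor $\mA^\psi$ any nonzero projection $e$ admits partial isometries $v_i\in\mA^\psi$ with $\{v_i^*v_i\}$ orthogonal summing to $1$ and $v_iv_i^*\le e$; then $a=\sum_{i,j}v_i^*(v_i a v_j^*)v_j$ with each $v_i a v_j^*\in e\,\mA(V)\,e$ by the bimodule property, so $a\neq 0$ forces $e\mA(V)e\neq 0$. Second, your final sentence does not follow: from $\mS(\mA)\subset\text{Spec}(\Delta_\chi)$ and $\mS(\mA)=\text{Spec}(\Delta_\psi)$ you get only $\text{Spec}(\Delta_\psi)\subset\text{Spec}(\Delta_\chi)$, not equality. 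Connes' corollary in fact asserts $\mS(\mA)=\text{Spec}(\Delta_\psi)$ only for the particular $\psi$ with factor centralizer; the paper's phrase ``the same for all states $\psi$'' is an overstatement of the cited result, though only the weaker version is used in Theorem~\ref{clustering-thm}.
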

\begin{proof}
    For proof see Corollary 3.2.7 of \cite{connes1973classification}.\footnote{We thank Yidong Chen for pointing out this Corollary to us.}
\end{proof}

\begin{corollary}\label{corrfinal}
    Consider a von Neumann algebra $\mA$ and a normal faithful state $\psi$. If the centralizer $\mA^\psi$ is trivial and $\text{Spec}(\Delta_\psi)=\mathbb{R}_+$ then the algebra is type III$_1$.
\end{corollary}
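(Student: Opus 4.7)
The plan is to assemble the corollary directly from the three preceding ingredients: Lemma \ref{trivialcentralizer}, Lemma \ref{factorcentralizer}, and the spectral classification in Definition \ref{deftypeiii}. There is essentially no new analytic content to produce; the work is in verifying that the hypotheses of each of these results are satisfied by the data $(\mA,\psi)$ in the statement, and then reading off the conclusion.

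First, I would invoke Lemma \ref{trivialcentralizer}. Its hypothesis is exactly that $\psi$ is a faithful normal state on a nontrivial von Neumann algebra $\mA$ with trivial centralizer $\mA^\psi = \mathbb{C}\cdot 1$. The conclusion is that $\mA$ must be a type III factor, so in particular $\mA$ is a factor and cannot be type I or type II. This already narrows the classification down to the three subcases III$_0$, III$_\lambda$ ($\lambda\in(0,1)$), and III$_1$ of Definition \ref{deftypeiii}, which are distinguished by the value of the Arveson invariant $\mS(\mA)$.

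Next, I would apply Lemma \ref{factorcentralizer}. Since $\mA^\psi=\mathbb{C}\cdot 1$ is trivially a factor, the hypothesis of this lemma holds, and therefore
\begin{eqnarray}
\mS(\mA) \, = \, \text{Spec}(\Delta_\psi) \, = \, \mathbb{R}_+ \, ,
\end{eqnarray}
where the second equality is the assumption of the corollary. The only case in Definition \ref{deftypeiii} that is consistent with $\mS(\mA)=\mathbb{R}_+$ is case (c), so we conclude that $\mA$ is a type III$_1$ factor. This gives the claimed result.

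No step here is really a genuine obstacle, since the heavy lifting is done by Lemmas \ref{trivialcentralizer} and \ref{factorcentralizer}; what one should be careful about is only that Lemma \ref{factorcentralizer} requires the centralizer to be a factor (not merely the whole algebra to be a factor), which is why the triviality of $\mA^\psi$ is used twice: once to get that $\mA$ is a factor of type III via Lemma \ref{trivialcentralizer}, and once to certify that $\mA^\psi$ itself is a factor so that the Arveson spectrum coincides with $\text{Spec}(\Delta_\psi)$ on the nose. With these two uses in place, the classification step is immediate.
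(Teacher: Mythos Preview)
Your proposal is correct and matches the paper's approach: the corollary is stated without explicit proof immediately after Lemma \ref{factorcentralizer}, and the identical chain of reasoning (trivial centralizer $\Rightarrow$ type III factor via Lemma \ref{trivialcentralizer}, then $\mS(\mA)=\text{Spec}(\Delta_\psi)$ via Lemma \ref{factorcentralizer}, then Definition \ref{deftypeiii}) is spelled out in the proof of Theorem \ref{clustering-thm}. Your remark that the triviality of $\mA^\psi$ is used twice is a nice clarification of why both lemmas are needed.
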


\section{Notation}\label{app:notation}

For the reader's convenience, the notation used in this paper is summarized in the table here. 

\begin{table}[H]
    \centering
    \begin{tabular}{c|c}
        $\ket{1}_{\beta}$ & Thermofield double with inverse temperature $\beta$\\
        $\mathcal{A}$ & von Neumann algebra \\
        $\mathcal{M}$ & Mixing von Neumann algebra \\
        $\mB$ & $*$-algebra, Banach algebra, C$^*$-algebra\\
        $\psi$ & An arbitrary state\\
        $\mH_\psi$ & The GNS Hilbert space of $\mA$ and state $\psi$\\
        $\ket{1}_{\psi}$ & Vector representative of a normal state $\psi$ in the GNS Hilbert space $\mH_\psi$\\
         $\mathcal{X}$ & Set of observables (self-adjoint operators)\\
        $\mathcal{S}_{\psi}^{\mathcal{X}}$ & Operator system of $\psi$-mixing operators generated by $\mathcal{X}$\\
        $\mathcal{S}_{\psi}$ & Operator system of all $\psi$-mixing operators\\
        $\mathcal{B}_{\psi}^{\mathcal{X}}$ & $C^{*}$-algebra of $\psi$-mixing operators generated by $\mathcal{X}$ \\
        $\mathcal{M}_{\psi}^{\mathcal{X}}$ & von Neumann algebra of $\psi$-mixing operators generated by $\mathcal{X}$ \\
        $\mathcal{M}_{\psi}$ & von Neumann algebra 
        of all $\psi$-mixing operators \\
        $\mathcal{A}^{\psi}$ & Centralizer of state $\psi$\\
        $\mS(\mA)$ & Arveson spectrum of von Neumann algebra $\mA$
    \end{tabular}
    \label{tab:my_label}
\end{table}

\section{Type III algebras in the thermodynamic limit}\label{app:typeiii}

In a system with $N$ degrees of freedom, the Hilbert space grows exponentially in $N$, therefore we often think of the dimensionality of the Hilbert space $d\sim e^N$ or the exponential of entropy.
To set the notation, we write the spectral decomposition of the  dimensionless Hamiltonian as 
\begin{eqnarray}
&& H=\int dP_E \:E\:\rho(E)\nn\\
&&\rho(E)=\sum_\alpha \delta(E-E_\alpha)
\end{eqnarray}
where time is measured in the units of inverse temperature, and $dP_E$ is a projection-valued measure. 
The function $\tilde{\rho}(E)=\rho(E)/d$ is the probability density of eigenstates
$\int dE \:\tilde{\rho}(E)=1$.
In finite quantum systems, the spectrum of $H$ is discrete and the spectral projection is 
\begin{eqnarray}
dP_E=\sum_\alpha\delta(E-E_\alpha)\sum_{\gamma_\alpha=1}^{\Omega(E_\alpha)}\ket{E_\alpha;\gamma_\alpha}\bra{E_\alpha;\gamma_\alpha}    
\end{eqnarray}
 where we have assumed that the eigenvectors of energy $E_\alpha$ are $\Omega(E_\alpha)$-fold degenerate, and $\gamma$ labels degeneracies. The matrix element of an operator $\mO$ in the energy eigenbasis is
\begin{eqnarray}
&&\mO(E',E):=\sum_{\alpha,\beta}\delta(E-E_\alpha)\delta(E'-E_\beta)\braket{E_\alpha|\mO|E_\beta}=\text{tr}(dP_{E}\mO dP_{E'})\ .
\end{eqnarray}

Consider two simple harmonic oscillators with non-commensurate frequencies $\omega_1$ and $\omega_2$. Then, $m_1 \omega_1+m_2\omega_2$ with integer $m_1,m_2\in \mathbb{Z}$ is dense in the set of real numbers.\footnote{For any real number $r$ we can choose $m_1/m_2$ to approximate the real number $r/(m_1\omega_1)-1$ arbitrarily well. This is called the Diophantine approximation.} Therefore, energy differences are dense in real numbers. 
However, the time evolution of $O(E',E;t)$ can be generated using an operator in the algebra; namely $e^{i (H_1+H_2)t}$. This constitutes an example of a state of a type I algebra with a spectrum that is dense in $\mathbb{R}_+$. Of course, the diagonal elements $\mO(E,E)$ are time-independent, and hence conserved under time evolution. In addition to zero modes (conserved charges), this algebra also has recurrences. For instance, any operator that is supported only on the first or the second harmonic oscillator shows Poincar\'e recurrences because 
\begin{eqnarray}
    \mO(E'_1; E_1;t)=e^{i (n_1'-n_1)\alpha t} \mO(E'_1,E_1)
\end{eqnarray}
where $E_1$ is the energy of the first harmonic oscillator.

In the thermodynamic limit, we have countably infinite copies of finite quantum systems (or harmonic oscillators). Consider the KMS state of finite energy density. The operator $e^{-i\sum_j  H_j t}$ is no longer in the algebra because its generator $\sum_j H_j$ (total energy) has a divergent expectation value. The average energy $\lim_{n\to \infty} \frac{1}{n}\sum_{j=1}^nH_j$ is a valid observable in the algebra; however, it cannot generate the time evolution of $\mO(E',E;t)$. In the following lemma, we show that when $H$ is not affiliated with the algebra the algebra is necessarily type III.

\begin{lemma}\label{KMStypeIII}
In a KMS state, the algebra is type III if and only if the Hamiltonian is not affiliated with the algebra. 
\end{lemma}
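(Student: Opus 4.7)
The plan is to use the KMS condition to identify the physical time-translation flow of $\mA$ with its modular flow (up to the factor $\beta$), and then invoke Lemma \ref{typenotiii}, which says that the modular flow is inner for all modular times if and only if the algebra is semifinite. In the GNS representation of the KMS state, write the time-translation automorphism as $\alpha_t(a) = e^{iHt}ae^{-iHt}$ with $H$ normalized so that $H\ket{1}_\beta = 0$; the KMS condition then gives the operator identity $\Delta_\psi^{it} a \Delta_\psi^{-it} = e^{i\beta Ht}\,a\,e^{-i\beta Ht}$ for every $a \in \mA$. The statement ``$H$ is affiliated with $\mA$'' is interpreted as $e^{iHt}\in \mA$ for all real $t$, equivalently that the spectral projections of $H$ lie in $\mA$.

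For the forward direction, suppose $\mA$ is type III. If $H$ were affiliated with $\mA$, the spectral calculus would place $e^{i\beta Ht}$ in $\mA$, and the KMS identity would exhibit the modular flow of $\psi$ as an inner automorphism for every $t \in \mathbb{R}$. By Lemma \ref{typenotiii} this forces $\mA$ to be type I or type II, contradicting the type III assumption, so $H$ cannot be affiliated with $\mA$.

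For the backward direction, assume $\mA$ is not type III, i.e., it is semifinite. Lemma \ref{typenotiii} (or, equivalently, standard Radon-Nikodym theory for a faithful normal semifinite trace on $\mA$) provides a positive self-adjoint $\rho_\psi$ affiliated with $\mA$ such that $\sigma^\psi_t(a) = \rho_\psi^{it}a\rho_\psi^{-it}$. Combined with the KMS identification, both $\rho_\psi^{it/\beta}$ and $e^{iHt}$ implement the same automorphism of $\mA$, so $\rho_\psi^{it/\beta}e^{-iHt}$ is a strongly continuous unitary one-parameter subgroup of $\mA'$; together with the normalization $H\ket{1}_\beta = 0$ this pins down $H = -\beta^{-1}\log\rho_\psi$ up to a central element, which is still affiliated with $\mA$.

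The main obstacle is the backward direction, specifically the removal of the commutant ambiguity that appears when comparing the physical generator $H$ with the algebraic object $-\beta^{-1}\log \rho_\psi$. A priori they differ by a unitary cocycle in $\mA'$; one must argue, using that $\ket{1}_\beta$ is cyclic and separating and that both groups of unitaries can be normalized to fix $\ket{1}_\beta$, that this commutant contribution is trivial modulo the center of $\mA$, so that $H$ itself is affiliated with $\mA$. This is conceptually where the lemma's content resides, since once that identification is made the forward direction follows immediately from the spectral calculus and Lemma \ref{typenotiii}.
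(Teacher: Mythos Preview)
Your approach matches the paper's: both directions reduce to Lemma~\ref{typenotiii} via the KMS identification of the time flow with the modular flow, and your forward direction is exactly the paper's.

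The gap you flag in the backward direction is real, but it is caused by your normalization $H\ket{1}_\beta=0$. That choice forces $e^{iHt}=\Delta_\psi^{it/\beta}$, so your $H$ is literally the rescaled modular Hamiltonian $\beta^{-1}\hat K_\psi$. In the semifinite case one has $\Delta_\psi=\rho_\psi\,(J\rho_\psi J)^{-1}$, hence $\hat K_\psi=-\log\rho_\psi+J(\log\rho_\psi)J$: the commutant piece is genuinely nonzero and is \emph{not} central (in the thermofield-double picture it is $-H_R$). Your proposed fix, normalizing both unitary groups to fix $\ket{1}_\beta$, cannot work because $\rho_\psi^{it/\beta}$ does not fix $\ket{1}_\beta$. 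Under your interpretation the lemma would in fact be false in every nontrivial semifinite example.

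The paper's (terse) proof sidesteps this by reading ``the Hamiltonian is affiliated with $\mA$'' as an existence statement: there is \emph{some} self-adjoint operator affiliated with $\mA$ whose unitary group implements $\alpha_t$, i.e.\ the modular flow is inner for all $t$. With that reading, $-\beta^{-1}\log\rho_\psi$ is the Hamiltonian (in the TFD picture it is $H_L$, which does \emph{not} annihilate $\ket{1}_\beta$), and the backward direction is immediate from Lemma~\ref{typenotiii}. Drop the vacuum normalization and your argument coincides with the paper's.
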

\begin{proof} A von Neumann algebra is not type III if and only if time evolution (modular flow of the KMS state) is inner for all $t\in \mathbb{R}$ (see appendix \ref{app:op}). We prove this Lemma by contradiction. First, assume that the algebra is type III but the Hamiltonian is affiliated with the algebra, then $e^{i Ht}$ is inner for all $t$ which is a contradiction. Conversely, assume that $H$ is not affiliated with the algebra and the algebra is type not III. The modular flow in the KMS state is strongly continuous, therefore by Stone's theorem 
 it can be written as $e^{i Ht}$. Since the algebra is not type III, the flow by $e^{i Ht}$ is inner, hence its generator is affiliated with the algebra which is a contradiction. 
\end{proof}

Even in the thermodynamic limit, if we choose the system such that all eigenvalues are commensurate then the energy gaps (the eigenvalues of the modular operator) do not close, and we are left with Poincar\'e recurrences. This is the case in type III$_\lambda$ factors. In Araki's construction of type III$_\lambda$ factors they correspond to the thermodynamic limit of countably infinite qubits with eigenvalue $\{p,1-p\}$ with $\lambda=(1-p)/p$. Then, the energy gap spectrum is multiples of some smallest gap $\ep=-\log\lambda$. Generically, in an interacting theory, the gaps close as $E_n-E_m=O(e^{-N})$, and the function above becomes continuous. Such algebras are type III$_1$ and show no Poincar\'e recurrences.

\section{Special cases of the two-point correlator}\label{app:special2pt}

In this appendix, we comment on three special cases of the LR-correlator in (\ref{fabdef}): 
\begin{align}
    f_{ab}(t) & := \mbox{}_{\beta}\bra{1} a_L^\dagger  e^{it\hat{K}} b^{T}_R \ket{1}_{\beta} \ .
\end{align}

 \begin{enumerate}
    \item The case $b=a$: The Fourier transform of $f_{aa}(t)$ has a positive kernel:
     \begin{eqnarray}
          \hat{f}_{aa}(\lambda) \, = \, \sum_{m,n} e^{-\beta(E_m+E_n)/2} \, |a_{mn}|^2 \, \delta\left(\lambda - E_{m} + E_{n}\right) \, .
      \end{eqnarray}
     In other words, the function $f_{aa}(t)$ is positive definite. We have included a short review of positive definite functions below. At $t=0$ the positivity of the function follows from the fact that 
     \begin{eqnarray}
         f_{aa}(0)={}_\beta\braket{e^{-\beta \hat{K}/4} a |e^{-\beta \hat{K}/4}a}_\beta\ .
     \end{eqnarray}

     \item The case $h=h^\dagger=a=b$: Then, $f_{hh}(t)=f_{hh}(-t)=f_{hh}(t)^*$ is a real positive definite function, where we have used the fact that $[h_L(t),h_R^T]=0$. In other words, we have $\hat{f}_{hh}(E_m-E_n)=\hat{f}_{hh}(E_n-E_m)$, and we can write 
     \begin{eqnarray}\label{almostperiodeq}
         f(t)=\sum_{mn} \hat{f}_{aa}(t) \cos(\beta(E_m+E_n)/2)
     \end{eqnarray}
  to make the function manifestly real. 
    
    \item The case $a_+=X^\dagger X$ and $b_+=Y^\dagger Y$ both positive operators: Then, the function $f_{a_+b_+}(t)$ is positive: 
     \begin{eqnarray}
         &&f_{a_+b_+}(t)=\text{tr}\lb Y^\dagger e^{-\beta H/2} X^\dagger(t) X(t) e^{-\beta H/2}Y\rb\geq 0\nn\\
         &&\hat{f}_{a_+b_+}(E_m-E_n)=|(Xe^{-\beta H/2}Y)_{mn}|^2\geq 0\ .
     \end{eqnarray}
     Choosing $a_+=b_+$ we obtain a positive positive-definite function.
     Such functions form a convex cone. For more information on these functions see \cite{jaming2009extremal}. 
     
 \end{enumerate}

{\bf Positive-definite functions:}

A complex function $f(t)$ of the real parameter $t$ is positive-definite if for any set of values $\{t_1,\cdots t_n\}$ the matrix $n\times n$ $\mathcal{F}_{ij}=f(t_i-t_j)$ is positive semi-definite. Since $\mathcal{F}$ must be Hermitian, a positive-definite function must satisfy the constraint $f(t)=f(-t)^*$. Moreover, specific cases of $n=\{1,2\}$ imply that $f(0)\geq |f_{aa}(t)|\geq 0$. They form a cone and, according to the Bochner theorem, the exponentials $e^{i t \lambda}$ are the only extremal rays of it. Therefore, positive definite functions are in one-to-one correspondence with functions with positive Fourier transform. It is known that even bounded functions that are convex on $\mathbb{R}^+$ are positive definite. 

We also consider positive-definite functions that are positive $f(t)>0$. Such functions also form a cone, and its extremal rays include Gaussians \cite{jaming2009extremal}. Other examples of positive positive definite functions include $(1+c|t|)^{-\alpha}$ with $\alpha\geq 0$ and $c>0$, and some Hermite polynomials. It is also known that for a positive positive-definite function $\omega(t)$ and bounded measure $\nu(t)$ on $\mathbb{R}^+$ the following function is also positive and positive-definite:
\begin{eqnarray}
    F(t):=\int_0^\infty \omega(t/\lambda) d\nu(\lambda)\ .
\end{eqnarray}

The set of positive-definite functions on a line corresponds to the set of functions that can be analytically continued to the upper half-plane. The set of positive-definite functions that are in $L^p$ is called the Hardy space $H^p$.

\section{Thermal two-point functions from gravity}\label{thermalFourier}

Consider the vacuum AdS$_{d+1}$ with unit AdS scale, i.e. $l_{AdS}=1$, and a compact boundary $\mathbb{R}_t\times\mathcal{S}^{d-1}$. The metric is
\begin{eqnarray}
    ds^2=-(r^2+1)dt^2+\frac{dr^2}{r^2+1}+r^2d\Omega^2\ .
\end{eqnarray}
In thermal AdS with no angular momentum (below the Hawking-Page phase transition) we can compute correlators by identifying the Euclidean time $t_E\sim t_E+\beta$ in the above geometry. Above the Hawking-Page phase transition, we have the AdS black holes
\begin{eqnarray}
    &&ds^2=-f(r,r_+) dt^2+\frac{dr^2}{f(r,r_+)}+r^2d\Omega^2\nn\\
    &&f(r,r_+)=r^2\lb 1-\frac{r_+^d+(r_+^{d-2}-r^{d-2})}{r^d}\rb
\end{eqnarray}
with the Euclidean time identification $t_E\sim t_E+\frac{4\pi r_+}{dr_+^2+(d-2)}$. AdS black holes with super-AdS-scale horizon radius $r_+> 1$ are semi-classically stable.\footnote{The inner horizon vanishes because we have no angular momentum.}

Choosing $d=2$ in the formula above gives the thermal AdS$_3$ and BTZ black holes with no angular momentum \cite{fitzpatrick2016information}:
\begin{eqnarray}
    ds^2=-(r^2-r_+^2)dt^2+\frac{dr^2}{r^2-r_+^2}+r^2d\phi^2
\end{eqnarray}
with the inverse temperature $\beta=2\pi/r_+$. Deficit angles correspond to the analytic continuation of the geometry to $r_+^2<0$. The Hawking-Page phase transition occurs at $\beta=2\pi$.

Since BTZ black holes are quotients of vacuum AdS$_3$ we can use the method of images to compute the correlators.
For conformal primaries $\mO$ of dimension $(h_-,h_+)$ above the Hawking-Page phase transition we find \cite{keski1999bulk}:
\begin{eqnarray}
    &&f_{\mO\mO}(\varphi,t)=\braket{\mO_L(\varphi,t)\mO_R(0,0)}_\beta\sim\sum_{n\in\mathbb{Z}} g(u_++2\pi n,h_+)g(u_--2\pi n,h_-)\nn\\
    &&g(u,h)=\sinh^{-2h}(u),\qquad u_\pm =\frac{\pi}{\beta}(t\pm\varphi)\ .
\end{eqnarray}
Then, in the Fourier space we have
\begin{eqnarray}
    \hat{g}(\omega)&&=C(h)\Gamma(h-i\omega/2)\Gamma(h+i\omega/2)e^{-\omega/2}\nn\\
    &&C(h)=\frac{4^\Delta e^{-ih}\sin(2h\pi)}{(2\pi)^{3/2}}\ .
\end{eqnarray}
We introduce Fourier frequencies $\omega_\pm=(\omega\pm k)\beta/\pi$ conjugate to $u_\pm$ to write
\begin{eqnarray}
      \hat{f}_{\mO\mO}(k,\omega)&&=\hat{g}(\omega_+)\hat{g}(\omega_-)\sum_{n\in\mathbb{Z}}e^{4\pi i n k}\nn\\
    &&\sim \prod_{\pm} e^{-\omega_\pm/2}\Gamma(h_\pm-i \omega_\pm)\Gamma(h_\pm+i \omega_\pm)
\end{eqnarray}
which is a smooth function of $\omega$ and $k$.

The mode $n=0$ corresponds to the planar BTZ. The two-point function in this geometry is dual to the thermal CFT$_2$ correlator in infinite volume, which is universally fixed by conformal symmetry. Therefore, in this case, the Fourier transform $\hat{f}_{\mO\mO}$ for any Virasoro primary $\mO$ is also a smooth function of $\omega$.

\section{Large $N$ SYM and GFF}\label{app:SYMGFF}

Consider $SU(N)$ $\mathcal{N}=4$ Super Yang Mills (SYM)  on $\mathcal{M}_d\times \mathbb{R}_t$ with the action normalized as\footnote{The discussion here generalizes to any compact spatial manifolds.}:
\begin{eqnarray}\label{HamilN4}
    S=\frac{-N}{2g_{YM}^2}\int dt\int_{\mathcal{M}_d} \sqrt{h}\: \text{tr}\lb \frac{1}{2}F_{\mu\nu}F^{\mu\nu}+(D_t\Phi_\alpha)^2+\frac{(d-1)\mathcal{R}}{4d}(\Phi_\alpha)^2+V \rb,
\end{eqnarray}
where $V$ represents the cubic and quartic terms, and $h$ and $\mathcal{R}$ are the metric and the corresponding Ricci scalar on the manifold $\mathcal{M}$.
With this normalization choice, the connected thermal $n$ point functions of single trace operators have values of order $N^{2-n}$. In the strict $N\to \infty$, subtracting the thermal (or vacuum) one-point function from any single trace operator $\mathbb{O}$ makes it well defined. That is because $\mathbb{O}=\mO-\braket{\mO}_\beta$ has finite fluctuations \cite{witten2022gravity}.
The state is a Gaussian of zero mean whose variance is set by the two-point function.
The higher-point functions are fixed by Wick contractions. The single trace field operators such as $\mathbb{O}$ form generalized free fields, because the correlators satisfy Wick's theorem, and the two-point function in the K\"all\'en-Lehmann representation is given by
\begin{eqnarray}\label{2ptGFF}
    G(t)=\int_0^\infty dm^2 \:\rho(m^2) G(t;m^2) 
\end{eqnarray}
where $G(t;m^2)$ is the vacuum two-point function on the cylinder for free fields of mass $m^2$. 

To simplify the presentation (without loss of generality), let us focus on a single $N\times N$ matrix real scalar field $\Phi$ and $\mathcal{M}=S^3$.\footnote{For quantization on $S^3$ see \cite{kim2003plane}. The Ricci scalar is $\mathcal{R}=6/R^2$ with $R$ the radius of $S^3$.}
The contribution of the scalar matrix $\Phi$ to the action in (\ref{HamilN4}) in terms of the mode expansion in $Y_{lm_1m_2}(x)$ the spherical harmonics of $S^3$ take the form
\begin{eqnarray}
    S_\Phi=\frac{\pi^2R^3 N}{g_{YM}^2}\sum_{l=0}^\infty\sum_{m_1,m_2=1}^{l+1}\int dt\:\text{tr}\lb (\p_t\Phi_{lm_1m_2})^2-\frac{(l+1)^2}{R^2}(\Phi_{lm_1m_2})^2+V(\Phi)\rb 
\end{eqnarray}
where the matrix degrees of freedom $\Phi_{lm_1m_2}$ have mass $m_l=(l+1)/R$.\footnote{The isometry group of $S^3$ is $SO(4)\simeq SU(2)_L\otimes SU(2)_R$. The scalars $\Phi_k$ transform in the irreducible $(l_L+1,m_L)\otimes (l_R+1,m_R)$ representations of the $SU(2)$s with mass $(l+1)/R$ where $l$ is a positive integer, and $m_1$ and $m_2$ run over the range $(-l, -1+1,\cdots, l-1,l)$.} Since the matrix field $\Phi$ is real we have $\Phi_{l,m_1,m_2}(t)=(\Phi_{l,-m_1,-m_2}(t))^\dagger$. We use the collective momentum notation $k=(l,m_1,m_2)$, and $-k=(l,-m_1,-m_2)$. 

Let us further focus on a single $k$ mode: a single $N\times N$ matrix quantum mechanics.
We quantize the theory at finite temperature covariantly by defining a tower of single-trace operators \footnote{The covariant quantization has the advantage that it also works for theories that violate the time-slice axiom like Generalized Free Fields.}
\begin{eqnarray}
&&\phi^{(n)}_k(t)=\text{tr}(\Phi^n_k(t))-\braket{\text{tr}(\Phi^n_k(t))}_\beta,\nn\\
&&\phi^{(n)}_k(\omega)=\int dt\:e^{i t\omega}\phi^{(n)}_k(t),
\end{eqnarray}
for positive integers $n$. In the Hilbert space of the matrix model, $\phi_k^{(n)}$ creates a single excitation. 
Acting with this operator $q$ times creates a $q$-particle excitation that in $N\to\infty$ belongs to $\mH_q$ of the Fock space $\mF$. The two-point function 
\begin{eqnarray}
    \braket{\phi^{(n)}_k(t)\phi^{(n)}_k(0)}=\braket{\delta(t)|\delta(0)}_\mu
\end{eqnarray}
corresponds to a choice of spectral function in the single-particle Hilbert space, and decides the type of the algebras at zero and finite temperatures.

\subsection{Two-point functions from the boundary}

In this subsection, we collect and review statements in the literature that attempt to justify the bulk two-point function from the boundary perspective. Most of what follows summarizes the work of Festuccia and Liu in \cite{festuccia2006excursions,festuccia2007arrow}.

{\bf Free theory:} Let us start with a theory of $D$ simple harmonic oscillators in the adjoint representation of $U(N)$. The degrees of freedom are $N\times N$ matrices $(M_\alpha)_{ij}$ and the Hamiltonian is 
\begin{eqnarray}
    H=\frac{N}{2}\sum_{\alpha=1}^D\text{tr}\lb (\p_t M_\alpha)^2+\omega_\alpha^2 M_\alpha^2\rb
\end{eqnarray}
and all $\omega_\alpha>0$.
Considering this as a gauge theory means that all physical states are $U(N)$ singlet; therefore, they can be expanded in terms of multi-trace operators. For instance, this theory arises as the free SYM on a sphere expanded in spherical harmonics. In that case, all $\omega_\alpha$ are positive integer (or half-integer) multiples of $1/R$ (the inverse radius of the sphere). Similarly, energy eigenvalues are positive half-integers with large degeneracies decided by partitions of integers. At high energies, i.e. $ER=O(N^2)$, the degeneracy is of order $e^{O(N^2)}$, while at low energies, i.e. $ER=O(1)$, the degeneracy is of order one.
In the $N\to \infty$ limit, the energy gaps do not change and Maldacena's argument implies that no correlator can cluster in time. The algebra of the observable remains type I in this limit. However, if we choose $\omega_\alpha$ such that $k$ of them are incommensurable, at high energies, we expect energy gaps of the order $O(N^{-2(k-1)})$ but still exponentially large degeneracies \cite{festuccia2007arrow}. In the limit $N\to \infty$ the energy gaps at high energies close as the power law in $N$, while the gaps at low energies remain. 
In Lemma \ref{simpleop}, we saw that if the matrix elements of the operator $a$ in the energy eigenbasis are smooth functions of only the energies $E-E'$ and $E+E'$ of the operator clusters. The algebra is no longer Type I, but the single-trace operators do not cluster.
Since the theory is free, the matrix elements of single-trace operators are nonzero only if $E=E'\pm 1$. 
Similarly, the multi-trace operators 
 \begin{eqnarray}
     \mO=\prod_k\text{tr}(M_{\alpha_{k,1}}\cdots M_{\alpha_{k,n_k}})
 \end{eqnarray}
 do not cluster either because even if we truncate them to the high energy sector, their matrix elements $\mO(E,\alpha;E+\delta,\beta)$ are non-zero only for a discrete set of $\delta E$  \cite{festuccia2007arrow}. Each mode adds or subtracts a unit of $\omega_\alpha$ for some $\alpha$, and since the total number of $M_\alpha$ terms in $\mathcal{O}$ is order $O(N^0)$ we will obtain a countable sum of delta functions
 \begin{eqnarray}
     \hat{f}_{aa}(\delta E)=\sum_k |f_k|^2 \delta(\delta E-\omega_k)\ .
 \end{eqnarray}
 The observables $\mO$ cannot probe arbitrarily small energy differences. In a free theory, one can use Wick contractions to explicitly compute the thermal two-point function, and see explicitly that for $\mO$ the function $\hat{f}_{\mO\mO}$ is a countable sum over Dirac delta functions, and hence non-measurable \cite{festuccia2007arrow}.
 To obtain observables that cluster we want $\hat{f}_{aa}(\delta E)$ to be measurable functions of $\delta E$. 
 We will see that as we turn interactions on, more operators, including $\mO$ cluster in time.

{\bf Interacting theory:} Next, we add an interaction term to the Hamiltonian
\begin{eqnarray}
    H\to H+N V(M_\alpha,\lambda)\ .
\end{eqnarray}
In SYM, $\lambda=g_{YM}^2 N$ is the t'Hooft coupling, and we can schematically expand $V$ as
\begin{eqnarray}
    V=\sqrt{\lambda}V_3(M_\alpha)+\lambda V_4(\lambda)
\end{eqnarray}
where $V_3$ and $V_4$ are cubic and quartic in $M_\alpha$, respectively. Festuccia and Liu computed Feynman diagrams and showed that turning on a small $\lambda$ is non-perturbative at high temperatures. They showed that the large $t$ limit does not commute with $\lambda\to 0$, resulting in a radius of convergence for the perturbation series of the high-temperature thermal two-point function that shrinks as $1/t$. They also qualitatively argued that $\hat{f}_{\mO\mO}(\delta E)$ becomes continuous and no-where vanishing at small $\lambda$ based on the following expectations:
\begin{itemize}
    \item A generic small perturbation lifts all degeneracies resulting in level spacings that are $e^{-O(N^2)}$.
    \item An interacting energy eigenstate $\ket{i}$ expanded in the free energy eigenbasis is expected to be supported on an energy shell of width $O(N)$ with $e^{O(N^2)}$ states.
\end{itemize}

This matches the continuous spectrum one finds in gravity.
Therefore, in the high-temperature phase (deconfined phase) in the limit $N\to \infty$ the energy gaps vanish and $\hat{f}_{\mO\mO}$ is expected to become a continuous function (almost everywhere) and hence measurable. Interestingly, in the $N\to\infty$ limit, the set of single trace operators generates an algebra. In particular, a single trace operator $\mO$ of conformal dimension $\Delta$ acting on the thermal field double generates an algebra of conformal free fields that we call $\mathcal{Y}_\mO$.

\section{GFF with discrete spectrum}\label{app:GHO}

 In this appendix, we start with a model that is a collection of harmonic oscillators as a model of GFF in $0+1$ dimensions and show that it is equivalent to a Gaussian quantum field in time, i.e. $\varphi(t)$.

Consider a set of $k$ simple harmonic oscillators of different masses $m$.\footnote{The generalization to the case where some modes have the same mass is straightforward.} The Hilbert space is $\mH=\otimes_m \mH_m$ and the algebra is $\mA=\otimes_m \mA_m$ and isomorphic to those of a single harmonic oscillator (type I$_\infty$ von Neumann algebra). 
The creation $a_m^\dagger$ and annihilation operator $a_m$ can be packaged together as the field $\varphi_m(t)=\frac{1}{\sqrt{2m}} (a_m^\dagger e^{imt}+a_m e^{-imt})$. The algebra $\mA$ is generated by exponentiating bounded function of $\varphi(f)$ where we have defined a {\it fundamental field} in time $\varphi(t)$
\begin{eqnarray}
    &&\varphi(t):=\sum_{m>0} \frac{1}{\sqrt{2m}}(e^{i mt} a_m^\dagger+ e^{-i m t}a_m),\qquad [a_m,a_{m'}^\dagger]=\delta_{mm'}\nn\\
    &&\varphi(f)=\int dt \varphi(t) f(t)=\sum_{m>0}\frac{1}{\sqrt{2m}} (\hat{f}_m a_m^\dagger+ \hat{f}_{-m} a_m)
    \end{eqnarray}
and $f$ is any smooth real function of time. 

For simplicity, let us focus on the first two levels of each harmonic oscillator as a qubit, i.e. $\ket{0}_m$ and $\ket{1}_m\sim a_m^\dagger\ket{0}_m$. Sequences of binaries $\ket{0100\cdots 0}$ of length $k$ form a basis for this sector of the Hilbert space. We call this sector the {\it one-particle Hilbert space}. When the number of harmonic oscillators goes to infinity, i.e. $k\to\infty$, the one-particle Hilbert space is not separable. To define a separable Hilbert space, we need to reduce the number of allowed observables. From the Araki-Woods construction of tensor powers \cite{araki1963representations,witten2018aps}, we know how to obtain the von Neumann algebra that can be represented in the separable Hilbert space. We choose a state $\psi$, for instance the density matrix $\rho_\psi$:
\begin{eqnarray}
    \rho_\psi=\otimes_m \rho_m, \qquad \rho_m=p_m \ket{0}\bra{0}_m+(1-p_m)\ket{1}\bra{1}_m\ .
\end{eqnarray}
Then, we consider the $*$-algebra corresponding to all operators that are finite tensor products $a=\otimes_{m=1} a_m$ and close it under the weak norm topology. This means that for every sequence $\{a_m\}$ such that $\|a_m \ket{\rho_\psi^{1/2}}\|<\infty$ we add to the $*$-algebra an operator $a=\lim_m a_m$ with the corresponding limit points. 

To perform the Araki-Woods construction in our case, we need to restrict the functions $f$ to those that create normalizable states:
\begin{eqnarray}\label{condsum}
 \braket{\rho_\psi^{1/2}||\varphi(f)|^2\rho_\psi^{1/2}}=\sum_{m>0}\frac{1}{2m}|f_m|^2<\infty\ .
\end{eqnarray}
An alternative way of performing the Araki-Woods construction is to redefine the fundamental field to the generalized free field: 
\begin{eqnarray}
    &&\varphi_1(t)=\sum_{m>0}(e^{im t}a_m^\dagger+e^{-i mt}a_m)\nn\\
     &&\varphi_1(\tilde{f})=\sum_{m>0} \frac{1}{\sqrt{2m}}(\hat{f}_ma_m^\dagger+ \hat{f}_{-m}a_m)
\end{eqnarray}
where we have absorbed $\frac{1}{\sqrt{2m}}$ in the definition of the function $\tilde{f}_m=\hat{f}_m/\sqrt{2m}$.
Consider the set of vectors $\ket{f}$ equipped with the Hermitian bilinear 
\begin{eqnarray}\label{innerproduct}
    &&\braket{f|g}_\psi=\braket{\rho_\psi^{1/2}|\varphi(f)\varphi(g)|\rho_\psi^{1/2}}=\sum_{m>0}\frac{1}{2m}\lb p_m\hat{f}_m^* \hat{g}_m+(1-p_m)\hat{f}_{-m}^*\hat{g}_{-m}\rb\nn\\
    &&2\Re\braket{f|g}_\psi=\sum_m\frac{1}{2|m|}\hat{f}^*_m g_m,\qquad 2i\Im\braket{f|g}_\psi=\braket{f|g}_\psi-\braket{g|f}_\psi\ .
\end{eqnarray}
Taking the quotient by the kernel of this bilinear gives us an inner product. Closing the vector space in the inner product defines the one-particle Hilbert space $\mH_\psi$. The imaginary part of the inner product defines the commutator (a symplectic bilinear on vectors $\ket{f}$) whereas the norm it induces corresponds to a choice of state (a symmetric bilinear on $\ket{f}$). Closing the linear span of the vectors $\ket{f}$ with this choice of the inner product gives the one-particle Hilbert space $\mH_\psi$.

 Next, we construct the Fock space $\mF_\psi=\oplus_n \mH_\psi^{\otimes n,sym}$. The second quantization map 
\begin{eqnarray}
&&W(f):=e^{i \varphi(f)}
\end{eqnarray}
sends  vectors to  $\ket{f}\to W(f)$:
we have a natural $*$-operation $W(f)^\dagger=W(-f)$. They satisfy the Weyl algebra
\begin{eqnarray}
    &&W(f)W(g)=e^{-\frac{1}{2}[\varphi(f),\varphi(g)]}W(f+g)\nn\\
    &&[\varphi(f),\varphi(g)]=2i\Im\braket{f|g}_\psi\ .
\end{eqnarray}
The set of formal path-integrals $\int Df\: c(f) W(f)$ with complex $c(f)$ form a $*$-algebra.

Every bounded function of $\varphi(f)$ is a bounded operator and has a spectrum. Its spectral radius defines a C$^*$-norm (i.e., a norm that satisfies $\|\ba\|=\|\ba^\dagger\|$). Taking the closure in this norm gives a C$^*$-algebra.\footnote{This C$^*$-norm can be constructed using interpolation theory of norms $\psi((a^\dagger a)^p)^{1/p}$ or by taking a supremum over the class of states compatible with the symplectic bilinear $\sigma$ \cite{binz2004construction}.} Taking the double commutant of the Weyl $*$-algebra or its corresponding C$^*$-algebra in $B(\mF_\psi)$ gives a von Neumann algebra $\mA$.

An equivalent approach to the procedure described above is to separate the choice of a commutator from the choice of state. We start with the real vector space of $\ket{f}$ with a symplectic bilinear $\sigma(f,g)$. The choice of quasi-free states then corresponds to a symmetric bilinear $\mu_\psi$ on the vector space of $\ket{f}$:
\begin{eqnarray}
    &&\psi(W(f))=e^{-\mu_\psi(f,f)},
\end{eqnarray}
such that
\begin{eqnarray}
  &&\mu_\psi(f,f)\geq 0\nn\\
  &&|\sigma(f,g)|^2\leq \mu_\psi(f,f)\mu_\psi(g,g)\ .
\end{eqnarray}
The second requirement (compatibility with $\sigma$) comes from the Cauchy-Schwarz inequality
\begin{eqnarray}
    |\psi(W(f)W(g))|\leq |\psi(W(f))||\psi(W(g))|\ .
\end{eqnarray}
The choice of the commutator $\sigma$ and the state $\mu$ can be packaged together as the inner product in (\ref{innerproduct}) on a complexified vector space $\mathcal{K}$
such that 
\begin{eqnarray}
    &&\sigma(f,g)=2i\Im \braket{f|g}_\psi=\psi([\varphi(f),\varphi(g)])\nn\\
    &&\mu_\psi(f,f)=\braket{f|f}_\psi\ .
\end{eqnarray}

The Gaussian states (quasi-free states) are fixed by the two-point function of the fundamental field $\varphi$. For instance, the typical choice of Gaussian states that correspond to the zero-temperature vacuum is
\begin{eqnarray}\label{statesexample}
    &&\psi_0(\varphi(0)\varphi(t))=\sum_{m>0}\frac{p_m}{2m} e^{im t}\ .
\end{eqnarray}

\bibliographystyle{ieeetr}
\bibliography{main}
\end{document}